\documentclass[aps,pra,reprint,groupedaddress,longbibliography,amsmath,amssymb,floatfix]{revtex4-2}
\usepackage{amsmath,amssymb,amsthm,mathtools,bm}
\usepackage[T1]{fontenc}
\usepackage{microtype}
\usepackage[colorlinks=true,linkcolor=blue!50!black,urlcolor=blue!50!black,citecolor=blue!50!black]{hyperref}
\usepackage{xcolor}
\usepackage{tikz}
\usetikzlibrary{arrows.meta,calc,decorations.pathmorphing,patterns}

\definecolor{dividerGray}{HTML}{E5ECF2}
\definecolor{probeBlue}{HTML}{2878B5}
\definecolor{sensorTeal}{HTML}{168B89}
\definecolor{signalOrange}{HTML}{E58A28}
\definecolor{preparePurple}{HTML}{8A72B6}
\definecolor{mirrorGray}{HTML}{7D8B98}
\definecolor{ink}{HTML}{24364B}
\definecolor{guideGray}{HTML}{E5ECF2}
\definecolor{zeroGray}{HTML}{EDF3F7}

\newcommand{\renewaloffset}{2}

\newcommand{\backgroundamp}[1]{\zeta_{#1}}
\newcommand{\signalamp}[1]{\alpha_{#1}}
\newcommand{\finalamp}{\zeta_F}

\newtheorem{theorem}{Theorem}
\newtheorem{lemma}{Lemma}
\newtheorem{proposition}{Proposition}

\theoremstyle{definition}
\newcommand{\CN}{\mathcal{CN}}
\newcommand{\E}{\mathbb E}
\newcommand{\Prob}{\Pr}
\newcommand{\Tr}{\operatorname{Tr}}
\newcommand{\TV}{\operatorname{TV}}
\newcommand{\Id}{\mathbb I}
\newcommand{\tc}{\tau_B}
\newcommand{\tobs}{T_{\mathrm{tot}}}
\newcommand{\eps}{\varepsilon}
\newcommand{\ep}{\varepsilon_0}
\newcommand{\lc}{\ell_c}
\newcommand{\ket}[1]{|#1\rangle}
\newcommand{\bra}[1]{\langle#1|}
\newcommand{\norm}[1]{\lVert#1\rVert}
\newcommand{\abs}[1]{\lvert#1\rvert}

\DeclareMathOperator{\Var}{Var}
\DeclareMathOperator{\Cov}{Cov}
\renewcommand{\Re}{\operatorname{Re}}
\renewcommand{\Im}{\operatorname{Im}}

\begin{document}
\title{Exponential Quantum Advantage in Stationary Signal Sensing}
\author{Sojeong Park}
\affiliation{Department of Physics, Korea Advanced Institute of Science and Technology, Daejeon 34141, Korea}
\author{Changhun Oh}
\email{changhun0218@kaist.ac.kr}
\affiliation{Department of Physics, Korea Advanced Institute of Science and Technology, Daejeon 34141, Korea}
\begin{abstract}Sensing time-varying signals is essential for precision measurement and for characterizing fluctuations in physical systems. In this work, we show that quantum entanglement provides an exponential advantage in the acquisition time required to learn the temporal statistics of a stationary classical signal. The task is to estimate the characteristic function of joint displacements on selected bosonic modes within a fixed-duration window, using continuous measurements of one signal realization. We present an entanglement-assisted scheme based on two-mode squeezed probes and local Bell measurements with protected idlers, and bound its acquisition time accounting for readout noise and correlations between signal windows. For a strictly stationary pulse family, it achieves polynomial acquisition time at inverse-polynomial accuracy with logarithmic squeezing and polynomial photon number. On the other hand, we prove that any local idler-free protocols require acquisition time exponential in the number of selected temporal modes per window. This lower bound holds for both sequentially prepared and arbitrarily temporally entangled probes, allowing non-Gaussian inputs and measurements and causal classical feedback, as long as each output is measured destructively within its time bin. These results establish an exponential quantum learning advantage for a single continuously monitored stationary signal, with the sensing cost measured by total acquisition time.
\end{abstract}
\maketitle
\addtocontents{toc}{\protect\iffalse}
\section{Introduction}

Sensing time-varying signals has many applications in precision measurement, including gravitational-wave detection by LIGO~\cite{tse2019}, weak-force sensing with mechanical oscillators~\cite{schreppler2014,iwasawa2013}, and measurements of fluctuating magnetic and optical fields~\cite{degen2017,yonezawa2012}. In these applications, the waveforms and temporal correlations contain information about the dynamics of astrophysical sources, weak interactions, and environmental noise. Therefore, there has been extensive research on quantum sensing of time-varying signals, including adaptive phase tracking and smoothing~\cite{berry2002,tsang2009,wheatley2010}, fundamental limits and protocols for waveform detection and estimation~\cite{tsang2011,tsang2012,miao2017,gardner2024,gardner2025}, and spectrum analysis~\cite{ng2016,tsang2023,sui2026}. Complementary approaches use controlled qubit dynamics for noise spectroscopy~\cite{alvarez2011,bylander2011,szankowski2017}, including higher-order spectra that characterize non-Gaussian fluctuations~\cite{norris2016,sung2019}. Repeated sampling with a quantum sensor can also resolve spectral features beyond its coherence-time limit by using a long measurement record and a stable reference clock~\cite{schmitt2017,boss2017}.

Meanwhile, quantum learning has revealed a broader role for quantum resources in inferring properties of unknown states, processes, and probability distributions from experimental data. Recent theoretical and experimental studies have demonstrated exponential reductions in the required number of samples~\cite{aharonov2022,huang2021,huang2022,chen2022memory,chen2022pauli,oh,liu2025}. In particular, entanglement with a protected reference enables an exponential advantage in learning the characteristic function of a multimode random displacement channel~\cite{oh,liu2025}. Squeezing can also provide a single-mode learning advantage over classical-state probes as the required Fourier resolution increases~\cite{baiju2026}. These results have motivated further studies of sensing correlated noise~\cite{wang2024,prabhu2026,brady2026}, quantum processing of classical fields~\cite{cotler2026,khan2026}, and learning temporal signal correlations~\cite{kannan2026}.

In many sensing applications, the signal is generated by an ongoing physical process beyond the experimenter's control. Access is therefore through a continuous observation record, making it necessary to learn the signal's temporal statistics without resetting the source or preparing independent realizations. Stationary signals, encountered in noise spectroscopy and measurements of fluctuations in a steady operating regime, provide a natural setting for this task. Their joint statistics are invariant under a common time shift, so every window of a fixed duration has the same probability distribution. This provides a common learning target for one continuous record, while correlations between windows must be accounted for in the acquisition time. For pulsed signals, the window distribution includes arbitrary arrival times and partially captured pulses, and the total acquisition time includes waiting intervals between events.

Existing displacement-learning guarantees describe repeated channel uses or independent process realizations~\cite{oh,liu2025,cotler2026}. Learning temporal correlations can also use a long-lived quantum memory to connect specified sensing times within independently sampled signal realizations~\cite{kannan2026}. An advantage in the number of independent samples does not by itself establish an advantage in observation time, because temporal correlations affect how quickly a continuous record supplies new statistical information. Even if well-separated windows are treated as effectively independent samples, the intervening time must be included in the acquisition cost. The central question is whether quantum resources can provide an exponential advantage in the total acquisition time needed to learn the temporal statistics of a single continuously monitored stationary signal, without resets or event alignment. To identify which quantum resource enables this advantage, we also compare entanglement with a protected reference to entanglement among the probes exposed to the signal.

In this work, we establish an exponential separation in total acquisition time for learning the joint temporal statistics of a stationary signal from one continuously measured realization. Our entanglement-assisted scheme uses a two-mode squeezed vacuum (TMSV) and a local continuous-variable Bell measurement on a displaced probe and a protected idler in each time bin. Only classical measurement outcomes are retained across bins. From this record, we construct an unbiased estimator of the window characteristic function and derive an acquisition-time bound that accounts for both readout noise and correlations between signal windows.

We compare this scheme with local idler-free protocols on the same temporal modes, allowing arbitrary probe energy, non-Gaussian inputs and measurements, and causal classical feedback. We consider both sequentially prepared probes and probes entangled across the entire acquisition, with destructive readout in each time bin. For a common strictly stationary pulse family, both classes require acquisition time exponential in the number of selected temporal modes per window, with a stronger bound for sequential preparation. The entanglement-assisted scheme learns the same statistics in polynomial time with logarithmic squeezing and polynomial photon number. Thus, the separation persists even when the probes exposed to the signal may be entangled across time.

\section{Setup}\label{sec:setup}
\subsection{Physical signal and linear bosonic response}

We consider a linear sensor that encodes a classical signal as a displacement of a bosonic field. This description includes optical amplitude and phase modulation, as well as force sensing with an optomechanical resonator in the linear-response regime~\cite{clerk2010,aspelmeyer2014}. We call the prepared quantum input the probe and the output measurement the readout. Together with the sensor dynamics, feedback, and estimation, they specify the sensing protocol. Figure~\ref{fig:optomechanical-response} illustrates this encoding for an optomechanical force sensor: the mechanical and cavity response converts $S_\theta(t)$ into the outgoing field displacement $\xi_\theta(t)$.

Let $S_\theta(t)$ be the classical stochastic signal on the entire time axis, where $\theta$ labels its unknown probability law. A complex signal represents two real drive components coupled to the output quadratures. For a time-independent linear response, the output field is
\begin{align}
 \hat b_\theta(t)=\hat b(t)+\xi_\theta(t),\qquad
 \xi_\theta(t)=\int_{\mathbb R}ds\,h(t-s)S_\theta(s),
 \label{eq:response}
\end{align}
where $h$ is the known causal impulse response, including the susceptibility, coupling, and propagation. The undisplaced output $\hat b(t)$ contains the signal-independent quantum dynamics and satisfies $[\hat b(t),\hat b^\dagger(t')]=\delta(t-t')$. This displacement description follows from linear bosonic input--output theory~\cite{gardiner1985,gough}. The convolution includes the signal history before each analysis window, so successive windows probe the same continuing sensor dynamics.

Throughout this work, we focus on strictly stationary signals and use the stationary solution of the sensor response. This means that the joint distribution of any finite collection of signal values is unchanged by a common real time shift. Specifically, for every positive integer $q$ and all $t_1,\ldots,t_q,a\in\mathbb R$,
\begin{align}
 (S_\theta(t_k+a))_{k=1}^{q}\overset{\mathrm d}{=}(S_\theta(t_k))_{k=1}^{q}.
 \label{eq:strict-stationarity}
\end{align}

Here $\overset{\mathrm d}{=}$ denotes equality in distribution. Since convolution commutes with time translations, $\xi_\theta(t)$ is also strictly stationary. We assume that the convolution is well defined and its output is measurable in time and the signal realization, with finite integrated intensity over every finite interval. These conditions ensure that the temporal-mode integrals below are well defined.

\begin{figure}
\begin{tikzpicture}[
  x=1cm,y=1cm,
  font=\rmfamily\fontsize{8}{9.2}\selectfont,
  text=black,
  line cap=round,line join=round,
  >={Stealth[length=1.5mm,width=1.0mm]},
  light/.style={draw=probeBlue,line width=0.85pt},
  force/.style={draw=signalOrange,line width=0.95pt,->},
  component/.style={draw=probeBlue,fill=white,line width=0.75pt,
    minimum width=1.50cm,minimum height=0.62cm,align=center,
    text=probeBlue,inner sep=1.5pt},
  label/.style={align=center,inner sep=1pt}
]
\pgfmathsetlengthmacro{\optomechanicalfigureunit}{\columnwidth/8.6}
\pgfmathsetlengthmacro{\optomechanicalprobewidth}{1.50*\optomechanicalfigureunit}
\pgfsetxvec{\pgfpoint{\optomechanicalfigureunit}{0pt}}
\pgfsetyvec{\pgfpoint{0pt}{\optomechanicalfigureunit}}

\coordinate (BS) at (2.73,0.55);
\coordinate (Minput) at (4.00,0.55);
\coordinate (Mmove) at (6.70,0.55);

\node[component,minimum width=\optomechanicalprobewidth]
  (probe) at (0.85,0.70) {Probe};
\draw[light,->] (probe.east) -- (3.90,0.70);

\draw[light,->] (3.90,0.40) -- (2.58,0.40);
\node[component,minimum width=1.90cm] (readout) at (2.58,-1.60)
  {Measurement};
\draw[light,->] (2.58,0.40) -- (readout.north);
\node[text=probeBlue,anchor=west,inner sep=1pt] at (2.70,-0.45)
  {$\xi_\theta(t)$};

\draw[draw=mirrorGray,line width=1.6pt]
  ($(BS)+(-0.29,-0.29)$) -- ($(BS)+(0.29,0.29)$);
\node[label,text=black] at (2.73,1.22) {Beam splitter};

\fill[mirrorGray]
  ($(Minput)+(-0.05,-0.80)$) rectangle ($(Minput)+(0.05,0.80)$);
\fill[sensorTeal]
  ($(Mmove)+(-0.07,-0.80)$) rectangle ($(Mmove)+(0.07,0.80)$);
\node[label,text=black,anchor=base] at (4.00,1.84) {Input mirror};
\node[label,text=sensorTeal,anchor=base] at (6.70,1.84) {Moving mirror};

\draw[light,samples=161,smooth,domain=0:1,variable=\u]
  plot ({4.18+2.34*\u},{0.55+0.12*sin(1440*\u)});

\draw[draw=mirrorGray,line width=0.80pt] (6.77,0.55) -- (6.97,0.55);
\draw[draw=mirrorGray,line width=0.80pt,
  decorate,decoration={coil,aspect=0.45,segment length=2.6mm,
                       amplitude=1.3mm,pre length=1.2mm,post length=1.2mm}]
  (6.97,0.55) -- (8.22,0.55);
\fill[pattern=north east lines,pattern color=mirrorGray!70]
  (8.25,0.10) rectangle (8.43,1.00);
\draw[draw=mirrorGray,line width=1.6pt] (8.25,0.10) -- (8.25,1.00);

\draw[force] (8.17,1.13) -- (6.91,1.13);
\node[label,text=signalOrange] at (7.62,1.45) {$S_\theta(t)$};

\node[label,text=sensorTeal] at (6.70,-0.60) {$h(t-s)$};

\pgfresetboundingbox
\path[use as bounding box]
  ($(0,0 |- readout.south)+(0,-0.03)$) rectangle (8.60,2.25);

\end{tikzpicture}
    \caption{Optomechanical realization of the linear-response model. A signal force acts on the moving mirror. The cavity and mechanical dynamics provide a calibrated linear response $h$. The outgoing optical field exits at the input mirror and is routed by a beam splitter into the readout. Its signal-dependent displacement is the causal convolution $\xi_\theta(t)=\int_{\mathbb{R}} h(t-s)S_\theta(s)ds$.}
    \label{fig:optomechanical-response}
\end{figure}

\subsection{Temporal-mode measurement}

In practice, finite detector bandwidth and sampling rates limit the temporal resolution of the readout, even though the sensor output is continuous in time. We describe a readout that collects one temporal mode per interval of duration $\tc>0$, which sets this resolution and the readout rate $1/\tc$. For the intervals $I_j\coloneqq [j\tc,(j+1)\tc)$, we use rectangular profiles, whose normalization gives canonical bosonic modes~\cite{brecht2015,raymer2020}:
\begin{align}
 \psi_j(t)\coloneqq \tc^{-1/2}\mathbf1_{I_j}(t),\qquad
 \hat a_j\coloneqq \int dt\,\psi_j^*(t)\hat b(t).
 \label{eq:interface}
\end{align}
Thus $[\hat a_j,\hat a_k^\dagger]=\delta_{jk}$. The signal displaces this mode by $X_j\in\mathbb C$,
\begin{align}
 X_j\coloneqq \tc^{-1/2}\int_{I_j}\xi_\theta(t)\,dt,
 \qquad \hat D(X_j)\coloneqq e^{X_j\hat a_j^\dagger-X_j^*\hat a_j}.
 \label{eq:cellshift}
\end{align}
Rectangular temporal filters extract modes of continuously emitted fields~\cite{yang2025}; for homodyne detection, integration with this weighting gives the corresponding modal quadrature~\cite{liu2025}. All protocols in the comparison use the same weighting and readout rate throughout acquisition.

We choose a fixed window duration $T$ before acquisition to study temporal statistics such as correlations between different times or joint fluctuations across a signal burst, without aligning the windows to pulse arrivals. For $T=n\tc$, a window starting at any real time $t$ contains the displacement vector $\bm X_t\in\mathbb C^n$, defined by
\begin{align}
 (\bm X_t)_k\coloneqq \tc^{-1/2}\int_{t+k\tc}^{t+(k+1)\tc}\xi_\theta(s)\,ds,
 \quad k=0,\ldots,n-1.
 \label{eq:window}
\end{align}
For windows starting at $t=j\tc$, $(\bm X_{j\tc})_k=X_{j+k}$. Strict stationarity makes $P_{T,\theta}\coloneqq \operatorname{Law}_\theta(\bm X_t)$ independent of $t$. By Eqs.~\eqref{eq:response} and \eqref{eq:window}, each component of $\bm X_t$ is a known linear functional of the physical signal $S_\theta$, determined by the sensor response and temporal-mode profile. The joint statistics of these coordinates describe the temporal correlations and fluctuations of $S_\theta$ accessible through this readout at resolution $\tc$. We therefore take the window distribution $P_{T,\theta}$ as our learning target, including empty windows and partially captured pulses.

\subsection{Learning from continuous measurements}
The protocol measures $J$ consecutive time bins during $[0,\tobs)$, where $\tobs\coloneqq J\tc$ is the total observation time. Our goal is to learn the window distribution $P_{T,\theta}$ by estimating its characteristic function. Since the full characteristic function uniquely determines the distribution, it also describes non-Gaussian statistics. We restrict the estimation task to a finite query range, with $\kappa>0$ setting the squared Fourier scale per selected mode:
\begin{align}
 \chi_{T,\theta}(\beta)\coloneqq \E_{\bm X\sim P_{T,\theta}}
 [e^{\bm X^\dagger\beta-\beta^\dagger\bm X}],
 \quad \abs{\beta}^2\le\kappa n.
 \label{eq:target}
\end{align}
Here $\beta\in\mathbb C^n$ is the query vector. The exponent is $2i\Im(\bm X^\dagger\beta)$; thus, a larger $\abs{\beta}$ resolves finer structure in the displacement distribution.

We benchmark learning of the window distribution through the accuracy of characteristic-function estimates obtained from a common measurement record~\cite{oh}. The sensing protocol is fixed across the allowed range of $\beta$, while the classical estimator $\hat\chi_T(\beta)$ uses $\beta$ to evaluate the desired Fourier component. For every fixed signal law $\theta$ and allowed $\beta$, we require
\begin{align}
 \Prob_\theta[\abs{\hat\chi_T(\beta)-\chi_{T,\theta}(\beta)}\le\eps]
 \ge1-\delta.
 \label{eq:learning}
\end{align}
Here $\theta$ is held fixed, and the probability is over the signal realization, measurement outcomes, and any randomness in the protocol.

To achieve this learning guarantee from one record, we also need to control temporal dependence, since stationarity alone is insufficient. For example, a random offset that remains fixed for all time is strictly stationary, but even an arbitrarily long measurement gives only one draw of the offset. 

We introduce a mixing coefficient to quantify this dependence. It compares the joint distribution of two windows with the product of their individual distributions. Let $\beta_{\rm mix}(g)$ bound the total variation distance (TVD) between these distributions for windows separated by $g$ window steps, uniformly over window origins and the signal family~\cite{bradley2005}. We assume that the sum over all separations is finite and define the dependence time by
\begin{align}
 \tau_{\rm dep}\coloneqq 4T\sum_{g=1}^{\infty}\beta_{\rm mix}(g)<\infty.
 \label{eq:mixsum}
\end{align}The finite sum ensures that correlations between windows do not prevent the estimation error from decreasing as more windows are collected.

\begin{figure}
\begin{tikzpicture}[
  x=1cm,y=1cm,
  font=\rmfamily\fontsize{8}{9.2}\selectfont,text=black,
  inner sep=1pt,
  line cap=round,line join=round,
  >={Stealth[length=1.5mm,width=1.0mm]},
  circuit/.style={draw=black,line width=0.75pt},
  quantum/.style={line width=0.70pt,->},
  panel/.style={font=\rmfamily\fontsize{8}{9.2}\selectfont\bfseries,
    anchor=west},
  note/.style={text=black,align=center},
  displacement/.style={draw=black,fill=white,line width=0.75pt,
    rounded corners=2.5pt,minimum width=1.08cm,minimum height=0.65cm,
    inner sep=1.5pt,outer sep=0pt,
    font=\rmfamily\fontsize{8.5}{9.5}\selectfont},
  continuation/.style={font=\rmfamily\fontsize{12}{12}\selectfont}
]
\pgfmathsetlengthmacro{\sensingfigureunit}{min(8.6cm,\linewidth)/8.6}
\pgfsetxvec{\pgfpoint{\sensingfigureunit}{0pt}}
\pgfsetyvec{\pgfpoint{0pt}{\sensingfigureunit}}
\path[use as bounding box] (0,0) rectangle (8.60,8.60);

\def\firstcol{1.50}
\def\secondcol{3.85}
\def\thirdcol{6.20}
\def\rightedge{8.45}

\node[panel] at (0.12,8.20) {(a) Local idler-free sensing};
\begin{scope}[shift={(0.30,0)}]
\draw[quantum,draw=black] (5.40,8.20) -- (7.40,8.20);
\node[note] at (6.40,8.40) {Time};

\draw[circuit] (\firstcol,7.50) -- (\thirdcol,7.50);
\foreach \j/\x in {1/\firstcol,2/\secondcol,3/\thirdcol}{
  \node at (\x,7.78) {Bin \j};
  \draw[circuit] (\x,7.50) -- (\x,7.25);
  \draw[circuit,fill=white]
    (\x,7.25) -- (\x-0.27,6.90) -- (\x+0.27,6.90) -- cycle;

  \node[displacement] (ifchannel\j) at (\x,6.15) {$D(X_\j)$};
  \draw[circuit] (\x,6.90) -- (ifchannel\j.north);
  \draw[circuit] (ifchannel\j.south) -- (\x,5.40);

  \foreach \offset in {-0.032,0.032}{
    \draw[circuit,line cap=butt]
      (\x+\offset,5.13) -- (\x+\offset,4.62);
  }
  \draw[circuit,fill=white]
    (\x-0.27,5.40) -- (\x+0.27,5.40) -- (\x,5.05) -- cycle;
  \node[anchor=west] at (\x+0.18,4.70) {$o_{\j}$};
}
\node[continuation] at (7.4,6.15) {$\cdots$};

\node at (\secondcol,4.38) {Classical feedback allowed};
\end{scope}

\node[panel] at (0.12,3.65)
  {(b) Entanglement-assisted Bell sensing};
\begin{scope}[shift={(0.30,0)}]

\foreach \j/\x in {1/\firstcol,2/\secondcol,3/\thirdcol}{
  \coordinate (eapair\j) at (\x,3.00);
  \coordinate (earight\j) at (\x+0.72,2.28);
  \node[anchor=base] at (\x,3.18) {TMSV};

  \node[displacement] (eachannel\j) at (\x-0.72,1.85) {$D(X_\j)$};
  \node[draw=black,fill=white,line width=0.75pt,rounded corners=2.5pt,
    minimum width=0.88cm,minimum height=0.46cm,
    inner sep=0pt,outer sep=0pt]
    (bellmeasurement\j) at (\x,0.95) {};
  \node[anchor=east] at (\x-0.60,0.95) {BM};

  \draw[circuit] (eapair\j) -- (\x-0.72,2.28)
    -- (eachannel\j.north);
  \draw[circuit] (eapair\j) -- (earight\j);

  \draw[circuit]
    let \p1=(eachannel\j.south), \p2=(bellmeasurement\j.west),
        \n1={\x2-\x1}
    in (eachannel\j.south) -- (\x1,{\y2+\n1})
       -- (bellmeasurement\j.west);
  \draw[circuit]
    let \p1=(earight\j), \p2=(bellmeasurement\j.east),
        \n1={\x1-\x2}
    in (earight\j) -- (\x1,{\y2+\n1})
       -- (bellmeasurement\j.east);

  \foreach \offset in {-0.032,0.032}{
    \draw[circuit,line cap=butt]
      ($(bellmeasurement\j.south)+(\offset,0)$)
      -- (\x+\offset,0.40);
  }
  \node[anchor=north,inner sep=0pt] at (\x,0.33) {$Y_{\j}$};
}
\node[continuation] at (7.4,1.85) {$\cdots$};
\end{scope}
\end{tikzpicture}
\caption{Local idler-free and entanglement-assisted sensing. Time increases from left to right. Each vertical column represents the operations within one fixed detector time bin. (a) The local scheme permits arbitrary joint input correlations across bins, destructive local measurements, and causal classical feedback, including control of later probe preparation. Each measured output leaves only a classical outcome; quantum correlations among the remaining probes remain available. (b) The EA scheme uses independent TMSV pairs and a protected idler bypassing the signal. BM denotes the Bell measurement within each bin.
}
    \label{fig:local-and-bell-readout}
\end{figure}

\subsection{Sensing protocols}\label{sec:protocols}
Probe preparation addresses the selected temporal modes before signal encoding, and readout measures them at the output. We assume an ideal sensing interface, without loss or additional signal-independent quantum noise. On mode $j$, the signal acts through the ideal displacement channel $\rho\mapsto\hat D(X_j)\rho\hat D^\dagger(X_j)$. We allow joint input states across the full acquisition, so earlier readouts can condition future probes through temporal input correlations.

Each displaced output mode is measured destructively within its interval by an arbitrary local POVM, leaving only its classical outcome in the record; quantum correlations among the remaining probes remain available. In a local idler-free protocol, every accessible probe passes through the signal channel, and no protected ancillary mode is available for a joint measurement. Feedback may use all earlier outcomes and classical randomness to choose operations on future probes and subsequent local measurements. These choices are unrestricted within the fixed temporal-mode interface. Figure~\ref{fig:local-and-bell-readout}(a) depicts the temporally correlated inputs and destructive local readouts.

In the sequentially prepared subclass, each new probe is prepared in an arbitrary state using only the earlier classical outcomes and classical randomness, with no quantum correlations to systems reserved for later bins. After the signal displacement, it is measured destructively by an arbitrary local POVM, which may depend on the same classical history.  Discrete and continuous readouts are allowed (see Appendix~\ref{app:outcomes} for the treatment of continuous outcomes).
Thus, the two idler-free classes differ in whether their inputs can be entangled across time, while both use destructive local measurements. We next consider the performance when a protected idler is available in each time bin.
\section{Entanglement-assisted sensing}\label{sec:upper}
We now present an entanglement-assisted (EA) scheme for the learning task. We first derive the probability distribution of its Bell measurement outcomes and construct an unbiased estimator. We then bound the acquisition time, accounting separately for measurement noise and temporal correlations in the signal.

In every time bin, we prepare an independent two-mode squeezed vacuum (TMSV) with squeezing parameter $r\ge0$~\cite{weedbrook2012}. One mode occupies the selected probe mode and undergoes the signal displacement $X_j$, while the other is kept as an idler isolated from the signal. We then perform a continuous-variable Bell measurement on the probe and idler. This can be implemented by mixing the two modes on a balanced beam splitter and performing homodyne detection of the position quadrature on one output and the momentum quadrature on the other~\cite{braunstein1998,oh,liu2025}. The measurement gives two real outcomes in each bin. For $r>0$, the TMSV correlations reduce the readout noise relative to vacuum inputs, as in quantum-dense metrology~\cite{steinlechner2013}.

To express the measurement outcomes, we use $\hat x\coloneqq \hat a+\hat a^\dagger$ and $\hat p\coloneqq (\hat a-\hat a^\dagger)/i$, with $[\hat x,\hat p]=2i$. Choosing the TMSV phase so that the probe and idler have correlated positions and anticorrelated momenta, the Bell readout measures the commuting combinations $\hat x_a-\hat x_i$ and $\hat p_a+\hat p_i$. The signal shifts their means by $2\Re X_j$ and $2\Im X_j$, respectively. Combining the two measured outcomes as $[x_a-x_i+i(p_a+p_i)]/2$ gives the complex readout $Y_j\in\mathbb C$:
\begin{align}
 Y_j=X_j+\Xi_j,\qquad
 \Xi_j\overset{\rm iid}{\sim}\CN(0,e^{-2r}),
 \label{eq:bell}
\end{align}
where $\CN(0,\Gamma)$ denotes a circular complex Gaussian distribution with covariance $\Gamma$. The Bell noise is independent of the entire signal, with real and imaginary variances $e^{-2r}/2$. This readout runs continuously on every measurement time bin. Figure~\ref{fig:local-and-bell-readout}(b) shows the independent probe--idler pairs and their same-bin Bell readout.

We now construct an estimator from these outcomes. Group the data into $N\coloneqq \lfloor J/n\rfloor$ complete, consecutive windows of duration $T$, and denote their measured vectors by $\bm Y^{(i)}$. Averaging the Gaussian measurement noise multiplies each Fourier feature by $e^{-e^{-2r}\abs{\beta}^2}$. We compensate for this factor and define the estimator
\begin{align} \hat\chi_{T,N}(\beta)\coloneqq \frac{e^{e^{-2r}\abs{\beta}^2}}N\sum_{i=0}^{N-1} e^{(\bm Y^{(i)})^\dagger\beta-\beta^\dagger\bm Y^{(i)}}. \label{eq:estimator}\end{align}
Using Eq.~\eqref{eq:bell}, the noise average cancels the prefactor and leaves the signal Fourier feature in each window~\cite{oh}. Since all windows have the same stationary distribution, we obtain $\E_\theta[\hat\chi_{T,N}(\beta)]=\chi_{T,\theta}(\beta)$. Thus the estimator is unbiased even when the signal windows are correlated.

To determine the acquisition time, we separate the estimator variance into individual-window contributions and covariances between different windows. The Bell noises are independent across windows, so the latter arise solely from signal correlations and are controlled by the mixing condition in Eq.~\eqref{eq:mixsum}. Specifically, each summand in Eq.~\eqref{eq:estimator} has squared modulus $e^{2e^{-2r}\abs{\beta}^2}$, and its covariance with a summand $g$ windows away is bounded in absolute value by $2\beta_{\rm mix}(g)$. Summing these contributions gives
\begin{align}
 \E_\theta[\abs{\hat\chi_{T,N}(\beta)-\chi_{T,\theta}(\beta)}^2]
 \le\frac{e^{2e^{-2r}\kappa n}+\tau_{\rm dep}/T}{N}.
 \label{eq:upper-proof-outline}
\end{align}
The first term is the readout-noise penalty, which squeezing reduces; the second accounts for temporal dependence. Both contributions decrease as $1/N$.  For an acquisition of $N$ complete windows, $T_{\rm tot}=NT$, and Chebyshev's inequality gives the following guarantee (see Appendix~\ref{app:upper-disjoint} for the full calculation).

\begin{theorem}[Learning from Bell measurements]\label{thm:upper}
Consider a stationary signal with the dependence-time bound $\tau_{\rm dep}$ of Eq.~\eqref{eq:mixsum}. Fix $\kappa>0$, finite squeezing $r\ge0$, and $0<\eps,\delta<1$. The estimator~\eqref{eq:estimator} estimates the characteristic function at every fixed $\abs{\beta}^2\le\kappa n$ to accuracy $\eps$ and success probability $1-\delta$ using the sufficient observation time
\begin{align}
 T_{\mathrm{tot},\rm EA}
 \coloneqq T\left\lceil\frac{e^{2\kappa n e^{-2r}}+\tau_{\rm dep}/T}{\delta\eps^2}\right\rceil.
 \label{eq:generalupper}
\end{align}
\end{theorem}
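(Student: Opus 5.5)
The plan is to prove the theorem by a second-moment argument: show that the estimator~\eqref{eq:estimator} is unbiased, bound its mean-squared error by Eq.~\eqref{eq:upper-proof-outline}, and finish with Chebyshev's inequality. Write $Z_i\coloneqq e^{e^{-2r}\abs{\beta}^2}e^{(\bm Y^{(i)})^\dagger\beta-\beta^\dagger\bm Y^{(i)}}$, so that $\hat\chi_{T,N}=N^{-1}\sum_i Z_i$.

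\textbf{Unbiasedness.} By Eq.~\eqref{eq:bell}, $\bm Y^{(i)}=\bm X^{(i)}+\bm\Xi^{(i)}$, where the window noise $\bm\Xi^{(i)}$ has i.i.d.\ $\CN(0,e^{-2r})$ entries and is independent of the whole signal. The Fourier feature factorizes into a signal part and a noise part. For $Z\sim\CN(0,\Gamma)$ one has $\E[e^{2i\Im(Z^*b)}]=e^{-\Gamma\abs{b}^2}$, so averaging over the noise gives $e^{-e^{-2r}\abs{\beta}^2}$. This cancels the prefactor and leaves the signal feature $f(\bm X^{(i)})\coloneqq e^{\bm X^{(i)\dagger}\beta-\beta^\dagger\bm X^{(i)}}$. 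Since the window starting at $in\tc$ has law $P_{T,\theta}$ by stationarity, $\E Z_i=\chi_{T,\theta}(\beta)$.

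\textbf{Second moment.} Expand
\begin{align}
\E\abs{\hat\chi_{T,N}-\chi}^2=N^{-2}\sum_{i,k}\Cov(Z_i,Z_k).
\end{align}
For the diagonal terms, $\abs{Z_i}^2=e^{2e^{-2r}\abs{\beta}^2}\le e^{2e^{-2r}\kappa n}$, which bounds $\Var Z_i$.

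For $i\neq k$, condition on the signal. The noise vectors of distinct windows are independent of each other and of the signal, so
\begin{align}
\E[Z_iZ_k^*]=\E[f(\bm X^{(i)})f(\bm X^{(k)})^*].
\end{align}
Hence $\Cov(Z_i,Z_k)=\Cov(f(\bm X^{(i)}),f(\bm X^{(k)}))$, a covariance of unit-modulus functions of two windows separated by $g=\abs{i-k}$ window steps. For any function $F$ of the pair with $\abs{F}\le1$, $\abs{\E_{P}F-\E_{Q}F}\le 2\TV(P,Q)$. Applying this to $F=f\otimes f^*$, with $P$ the joint law and $Q$ the product of marginals, gives $\abs{\Cov}\le2\beta_{\rm mix}(g)$. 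There is one convention to check here: whether $\TV$ is defined as $\sup_A\abs{P(A)-Q(A)}$ or as the $L^1$ norm. With the sup convention the factor is 2, which is consistent with the $4T$ in Eq.~\eqref{eq:mixsum}. Counting ordered pairs, at most $2N$ pairs have separation $g$, so
\begin{align}
\sum_{i\ne k}\abs{\Cov}\le 4N\sum_g\beta_{\rm mix}(g)=N\tau_{\rm dep}/T.
\end{align}
Combining the diagonal and off-diagonal parts yields Eq.~\eqref{eq:upper-proof-outline}.

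\textbf{Conclusion.} Chebyshev's inequality (Markov applied to the squared modulus) gives
\begin{align}
\Prob[\abs{\hat\chi_{T,N}-\chi}>\eps]\le \frac{e^{2e^{-2r}\kappa n}+\tau_{\rm dep}/T}{N\eps^2}.
\end{align}
Choosing $N$ equal to the ceiling in Eq.~\eqref{eq:generalupper} makes this at most $\delta$. The acquisition time is then $T_{\rm tot}=NT$, since the windows are consecutive and disjoint.

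I expect the only delicate step to be the covariance bound. It needs three things: a precise statement of the TVD convention; the fact that the mixing coefficient applies to windows at origins $in\tc$, where uniformity over origins holds by assumption; and measurability of $f$, which holds because $f$ is continuous in the window vector.
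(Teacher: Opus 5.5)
Your proposal is correct and follows essentially the same route as the paper's proof in Appendix~\ref{app:upper}: the Gaussian Bell-noise average cancels the prefactor to give unbiasedness, the diagonal terms are bounded by $e^{2e^{-2r}\kappa n}$, and the off-diagonal covariances reduce to signal covariances bounded by $2\beta_{\rm mix}(g)$ before Chebyshev is applied. On your convention question, the paper uses $\TV=\tfrac12\int|dP-dQ|$, which equals the sup convention you assumed, and your count of at most $2N$ ordered pairs per separation yields the same $\tau_{\rm dep}/(NT)$ term as the paper's $\frac{4}{N^2}\sum_g(N-g)\beta_{\rm mix}(g)$.
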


The sensor response $h$ enters this bound through the output window statistics and their dependence time. For fixed $\kappa$, logarithmic squeezing $r=\tfrac12\log n+O(1)$ keeps the readout contribution bounded. Theorem~\ref{thm:upper} then gives polynomial acquisition time whenever $\tau_{\rm dep}/T$ is bounded and $T$, $\eps^{-1}$, and $\delta^{-1}$ grow at most polynomially in $n$. This requires a mean photon number $\sinh^2r=O(n)$ in each probe and idler. The idler-free lower bounds below impose no bound on probe energy, so the exponential separation holds even when those protocols can use arbitrarily energetic probes.

\section{Limits of local idler-free sensing}\label{sec:lower}
We now ask whether local sensing without a protected idler can achieve the same learning efficiency. We allow arbitrary probe energy and consider both sequentially prepared probes and inputs entangled across the acquisition, with destructive readout in each time bin. For a common stationary signal family, we show that both classes require acquisition time exponential in the number of selected temporal modes, establishing the separation from entanglement-assisted sensing.

We prove the acquisition-time lower bounds by reducing learning to hypothesis testing~\cite{huang2022,oh} (see Appendix~\ref{app:lower-proof} for the full proof). The key idea is to construct stationary signals whose characteristic functions differ by more than the allowed estimation error. A successful learning scheme can then distinguish these signals, which requires a constant TVD between their measurement outcome distributions. On the other hand, we show that each pulse contributes only an exponentially small amount to this TVD after averaging over the hidden Fourier direction, even when the protocol uses classical feedback. Consequently, the protocol must encounter exponentially many pulses. Relating this number to the elapsed time gives the desired acquisition-time lower bound.

\subsection{Observation-time learning limit}
We first state the lower bound for sequentially prepared probes (see Appendix~\ref{sec:time} for explicit finite-time constants), then explain the stationary pulse construction and the proof.

\begin{theorem}[Observation time for sequential probe preparation]\label{thm:lower}
Fix even $n\ge8$, query scale $\kappa\ge15/16$, and accuracy $0<\eps\le1/(24n)$. Consider learning an unknown strictly stationary displacement signal with finite mean power and $\tau_{\rm dep}\le4e^2T$. Every sequentially prepared local idler-free protocol described in Sec.~\ref{sec:protocols} satisfying Eq.~\eqref{eq:learning} uniformly over these signal laws and every fixed allowed query, with $\delta=1/3$, requires acquisition time
\begin{align}
 \frac{T_{\mathrm{tot},\rm IF}}{T}
 =\Omega\left(\frac{(2\kappa+\sqrt{1+4\kappa^2})^{n/2}}{n^2\eps^2}\right).
 \label{eq:generallower}
\end{align}
\end{theorem}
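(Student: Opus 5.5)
The plan is to reduce learning to a many-versus-one hypothesis test, following the strategy of Refs.~\cite{huang2022,oh}. I will build a strictly stationary pulse family, bound the outcome TVD that a sequentially prepared protocol can accumulate per pulse, and convert the number of pulses needed into acquisition time.

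\emph{Step 1: stationary pulse family.} I will build a renewal pulse train. Pulses of duration $T=n\tc$ occur with i.i.d. inter-arrival gaps, for instance a fixed spacing of a few windows plus a geometric or exponential extra delay. Stationarity is obtained by starting from the stationary (Palm) version of the renewal process, with a uniform random phase of the first arrival. Within each pulse the $n$ bin displacements are drawn i.i.d. from a pulse law. Under the null hypothesis $\theta_0$ this law is a circular Gaussian $\CN(0,\sigma^2\Id)$. Under the alternative $\theta_{\bm s}$ it is the Gaussian multiplied by a cosine modulation $1+c\cos(2\Im(\bm X^\dagger\beta_{\bm s}))$, where $c=\Theta(n\eps)$. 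The direction $\beta_{\bm s}$ is hidden and ranges over an exponentially large set, for example sign patterns with $\abs{\beta_{\bm s}}^2=\kappa n$.

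I will then verify three properties.
\begin{itemize}
\item The window characteristic functions differ at the query $\beta_{\bm s}$ by more than $2\eps$. A window contains a full aligned pulse only with probability $\Theta(1/n)$-ish, which is why $c$ must scale like $n\eps$, and why $\eps\le1/(24n)$ is needed to keep the modulated density positive.
\item The mean power is finite.
\item $\tau_{\rm dep}\le4e^2T$. This follows from the renewal structure: windows separated by more than one typical gap couple only through the renewal, which regenerates geometrically, so $\beta_{\rm mix}(g)$ decays geometrically and its sum can be tuned to the stated constant.
\end{itemize}
Consequently, a protocol satisfying Eq.~\eqref{eq:learning} with $\delta=1/3$ distinguishes $\theta_0$ from the uniform mixture over $\bm s$ with TVD at least $1/3$.

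\emph{Step 2: per-pulse bound under feedback.} This is the main obstacle. For sequential preparation, each bin's probe is a fresh state $\rho_j$ that may depend on past outcomes, and it is measured by a local POVM. Conditioned on the past, the output distribution is the displacement channel applied to $\rho_j$. I will follow the learning-tree argument. For every leaf, I lower bound the likelihood ratio $\E_{\bm s}[p_{\bm s}(\text{leaf})]/p_0(\text{leaf})$ by a product over pulses of factors $1-O(c^2\,\lambda^{-n/2})$ with $\lambda=2\kappa+\sqrt{1+4\kappa^2}$. A one-sided likelihood-ratio bound then gives $\TV\le O(M c^2 \lambda^{-n/2})$, where $M$ is the number of pulses touched.

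The single-pulse factor requires controlling, for an arbitrary product of single-mode states and measurements, the quantity $\E_{\bm s}\prod_k(\ldots)$. Each mode contributes the characteristic function of the probe's output evaluated at a random component of $\beta_{\bm s}$. I will bound the per-mode quantity uniformly over all states, including infinite-energy ones, by an uncertainty-type inequality $\abs{\chi_\rho(\gamma)}^2$-averaged $\le$ a constant determined by the Gaussian envelope $\sigma^2$ and by $\kappa$. Optimizing $\sigma^2$ should yield $\lambda^{-1/2}$ per mode.

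There is an additional subtlety: pulses may straddle bins in an unaligned way, and feedback can use the renewal times. I will handle this by revealing the arrival times to the protocol as side information, which only helps it, so that the analysis can proceed pulse by pulse.

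\emph{Step 3: time conversion.} Requiring $O(M n^2\eps^2\lambda^{-n/2})\ge1/3$ gives $M=\Omega(\lambda^{n/2}/(n^2\eps^2))$. Since the mean inter-arrival time is $\Theta(T)$, a renewal (Wald-type) concentration argument shows that an acquisition of length $\tobs$ contains $O(\tobs/T)$ pulses with high probability. This yields Eq.~\eqref{eq:generallower}.
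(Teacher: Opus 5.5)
\textbf{The central gap is alignment with the detector grid.} You write each pulse as ``$n$ bin displacements,'' which assumes it is aligned with the fixed bins. But strict stationarity under real time shifts (your uniform random phase) makes every pulse misaligned with that grid almost surely. You notice straddling, but revealing arrival times does not address it.

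With contiguous components of width $\tc$ and offset $f\tc$, each bin receives a combination $f\alpha_k+(1-f)\alpha_{k+1}$ of adjacent amplitudes. The $n$ amplitudes then occupy $n+1$ measured modes, and the spare direction can act as an effective idler. For an amplitude split evenly between two bins, an $\hat x$-squeezed probe in one bin and a $\hat p$-squeezed probe in the next read both of its quadratures with vanishing noise. This is the mechanism of Appendix~\ref{app:internal-idler}, and it is allowed for sequential idler-free probes of unbounded energy. So the per-mode factorization in your Step~2 does not exist for a generic pulse, and revealed arrival times only help the protocol exploit this geometry.

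Misalignment also breaks Step~1. Exact capture has probability zero. At offset $f\tc$ the induced query misses the hidden direction by a squared distance typically of order $f^2\kappa n$, so the positive Gaussian peak survives only for $f=O(n^{-1/2})$. The surviving stationary weight is therefore $O(n^{-3/2})$, not $\Theta(1/n)$. That forces $c\sim n^{3/2}\eps$, which is incompatible with positivity once $\eps\gg n^{-3/2}$ (so the range $\eps\le1/(24n)$ is not covered) and degrades the prefactor to $n^3\eps^2$.

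The missing idea is the paper's pulse design. It uses $m=n/2$ narrow components of width $\tc/8$ spaced by $\tc$, each flanked by $\zeta_k$ and $-\zeta_k$, plus a final $\zeta_F$, with i.i.d.\ $\CN(0,2)$ increments. When a group lies in one bin, the background cancels. This gives an exact-capture interval of length $5\tc/8$, hence weight $\Theta(1/n)$, and lets the displacement-overlap bound apply. When components cross boundaries, the captured displacement is $H_f\boldsymbol\alpha+\Delta_g\boldsymbol\zeta$ with $\norm{H_f}\le1$ and independent variance-two noise on all $m+1$ modes. Lemma~\ref{lem:noise} then restores the same bound $b_\kappa^{-m}$.

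\textbf{Two further steps need repair.} First, the learning guarantee is pointwise in the query, so your test can use $\beta_{\bm s}$ only if the direction is revealed after acquisition. Learning then lower-bounds $\E_{\bm s}[\TV(P_0^{\rm tr},P_{\bm s}^{\rm tr})]$. That quantity can exceed the mixture distance $\TV(P_0^{\rm tr},\E_{\bm s}P_{\bm s}^{\rm tr})$, which is what your likelihood ratio controls. Moreover, for a fixed direction the term linear in $c$ does not cancel. The paper fixes this by mixing, for each fixed $\eta$, two opposite signs of an odd perturbation $p_0[1\pm4\ep\sin(2\Im\boldsymbol\alpha^\dagger\eta)]$. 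Both signs stay $\gamma$-far from the null at $\beta_\eta$. In $\frac12\sum_{s=\pm1}\prod_e(1+4s\ep\Im G_\eta[\ell_e])$ the linear terms cancel, leaving only $a_{\ep}\sum_e(\Im G_\eta[\ell_e])^2$.

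Second, the hidden directions must spread over both quadratures of every mode. Your example of real sign patterns fails outright. For $\beta_{\bm s}=\sqrt\kappa\,\bm s$ one has $\Im(\bm X^\dagger\beta_{\bm s})=-\sqrt\kappa\sum_k s_k\Im X_k$, which depends on a single quadrature per bin. Infinitely squeezed local probes read that quadrature noiselessly, so there is no exponential suppression at all. The stated rate $(2\kappa+\sqrt{1+4\kappa^2})^{n/2}$ is exactly $b_\kappa^{m}$ from the displacement-overlap bound averaged over Gaussian $\eta\sim\CN(0,2\kappa\Id_m)$, with $\sigma_\alpha^2$ chosen so that $b_\kappa=\sigma_\alpha^2$.
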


\begin{figure}
\begin{tikzpicture}[
  x=1cm,y=1cm,
  font=\rmfamily\fontsize{8}{9.2}\selectfont,text=black,
  inner sep=1pt,
  line cap=round,line join=round,
  >={Stealth[length=1.5mm,width=1.0mm]},
  panel/.style={font=\rmfamily\fontsize{8}{9.2}\selectfont\bfseries,
    anchor=west},
  axis/.style={draw=black,line width=0.70pt},
  guide/.style={draw=black,line width=0.45pt,
    dash pattern=on 2pt off 1.8pt},
  note/.style={text=black},
  amplitude/.style={font=\rmfamily\fontsize{8}{9.2}\selectfont,
    inner sep=0pt},
  widthlabel/.style={inner sep=0pt}
]
\pgfmathsetlengthmacro{\pulsefigurexunit}{min(8.6cm,\linewidth)/8.6}
\pgfsetxvec{\pgfpoint{\pulsefigurexunit}{0pt}}
\path[use as bounding box] (0,-0.70) rectangle (8.60,6.95);

\def\panelleft{0.04}
\def\plotleft{0.35}
\def\plotright{8.25}
\def\startone{0.65}
\def\starttwo{3.80}
\def\startfinal{6.95}
\def\componentwidth{0.39375} 
\def\pulsebase{2.85}

\draw[draw=black,line width=0.55pt]
  (0.50,4.02)
  .. controls (0.50,4.10) and (0.58,4.14) .. (0.75,4.14)
  -- (6.73,4.14)
  .. controls (7.38,4.14) and (7.61,4.24) .. (7.685,4.72)
  .. controls (7.70,4.28) and (7.78,4.14) .. (7.94,4.14)
  .. controls (8.06,4.14) and (8.10,4.08) .. (8.10,4.02);

\node[panel] at (\panelleft,6.70) {(a) Random pulse arrivals};
\node[text=black] at (4.30,6.30)
  {$\Delta_e=\renewaloffset T+\operatorname{Exp}(1/T)$};
\draw[axis,->] (2.75,6.00) -- (5.25,6.00);
\draw[axis] (\plotleft,5.23) -- (6.20,5.23);
\draw[axis,->] (6.95,5.23) -- (\plotright,5.23);
\node[font=\rmfamily\fontsize{10}{11}\selectfont,text=black]
  at (6.575,5.23) {$\cdots$};
\node[note,anchor=west,text=black] at (8.33,5.23) {$t$};

\foreach \e/\x in {0/0.70,1/2.75,2/5.25,e/7.55}{
  \fill[probeBlue] (\x,5.23) rectangle (\x+0.27,5.83);
  \draw[axis] (\x,5.23) -- (\x,5.13);
  \node[text=black,fill=white] at (\x,4.94) {$\tau_{\e}$};
}
\draw[draw=black,line width=0.65pt,dash pattern=on 2pt off 1.8pt]
  (1.95,5.15) -- (1.95,5.90);
\node[text=black,fill=white] at (1.95,4.94) {$0$};

\node[panel] at (\panelleft,4.40)
  {(b) One pulse, expanded ($m=2$)};
\node[note,anchor=west] at (5.35,4.40)
  {$\operatorname{Re}\xi_e(t)$};

\foreach \x/\label in {
  \startone/{\tau_e},
  \starttwo/{\tau_e+\tau_B},
  \startfinal/{\tau_e+2\tau_B}}{
  \draw[guide] (\x,2.10) -- (\x,3.62);
  \node[anchor=north,text=black] at (\x,1.94) {$\label$};
}

\foreach \k/\x/\z/\a in {
  1/\startone/0.42/0.65,
  2/\starttwo/-0.32/0.49}{
  \fill[sensorTeal]
    (\x,\pulsebase) rectangle (\x+\componentwidth,\pulsebase+\z);
  \fill[signalOrange]
    (\x+\componentwidth,\pulsebase)
    rectangle (\x+2*\componentwidth,\pulsebase+\a);
  \fill[sensorTeal]
    (\x+2*\componentwidth,\pulsebase)
    rectangle (\x+3*\componentwidth,\pulsebase-\z);
}
\fill[preparePurple] (\startfinal,\pulsebase)
  rectangle (\startfinal+\componentwidth,\pulsebase+0.44);
\draw[axis,->] (\plotleft,\pulsebase) -- (\plotright,\pulsebase);
\node[note,anchor=west,text=black] at (8.33,\pulsebase) {$t$};

\def\amplituderow{3.88}
\foreach \k/\x in {1/\startone,2/\starttwo}{
  \node[amplitude,anchor=base,text=sensorTeal]
    at (\x+0.5*\componentwidth,\amplituderow) {$\backgroundamp{\k}$};
  \node[amplitude,anchor=base,text=signalOrange]
    at (\x+1.5*\componentwidth,\amplituderow) {$\signalamp{\k}$};
  \node[amplitude,anchor=base,text=sensorTeal]
    at (\x+2.5*\componentwidth,\amplituderow) {$-\backgroundamp{\k}$};
}
\node[amplitude,anchor=base,text=preparePurple]
  at (\startfinal+0.5*\componentwidth,\amplituderow) {$\finalamp$};
\node[note] at (2.65,2.48) {Zero field};

\node[panel] at (\panelleft,1.18) {(c) Component widths};

\def\widthunit{0.90}
\fill[sensorTeal] (\startone,0.45) rectangle (\startone+\widthunit,0.85);
\fill[signalOrange] (\startone+\widthunit,0.45)
  rectangle (\startone+2*\widthunit,0.85);
\fill[sensorTeal] (\startone+2*\widthunit,0.45)
  rectangle (\startone+3*\widthunit,0.85);
\fill[zeroGray] (\startone+3*\widthunit,0.45)
  rectangle (\startone+8*\widthunit,0.85);
\node[amplitude,text=white] at (\startone+0.5*\widthunit,0.65)
  {$\backgroundamp{k}$};
\node[amplitude,text=white] at (\startone+1.5*\widthunit,0.65)
  {$\signalamp{k}$};
\node[amplitude,text=white] at (\startone+2.5*\widthunit,0.65)
  {$-\backgroundamp{k}$};
\node[note] at (\startone+5.5*\widthunit,0.65) {Zero field};
\node[widthlabel,anchor=north] at (\startone+0.5*\widthunit,0.27) {$w$};
\node[widthlabel,anchor=north] at (\startone+1.5*\widthunit,0.27) {$w$};
\node[widthlabel,anchor=north] at (\startone+2.5*\widthunit,0.27) {$w$};
\node[widthlabel,anchor=north] at (\startone+5.5*\widthunit,0.27)
  {$\tau_B-3w$};
\node at (4.30,-0.48)
  {$w=\tau_B/8,\quad L_p=2\tau_B+w,\quad T=4\tau_B$};

\end{tikzpicture}

\caption{Stationary pulse sequence and the structure of one pulse, illustrated with $m=2$ signal components. (a) Pulse arrivals separated by $2T$ plus an exponential waiting time of rate $1/T$ (mean $T$); the origin represents a typical observation time. (b) The components of one pulse, with dashed guides marking offsets from its start. Group $k$ begins at $\tau_e+(k-1)\tc$ and contains $\zeta_k$, $\alpha_k$, and $-\zeta_k$, each of width $w=\tc/8$. The final background component $\zeta_F$ begins at $\tau_e+m\tc$ and has the same width. Heights schematically illustrate real parts of the complex random amplitudes. (c) The three components of width $w$ are followed by a zero-field interval of length $5w$, giving the ratio $1:1:1:5$ within one group period. The background law is common to all hypotheses; only the joint law of $\boldsymbol{\alpha}$ changes. The target window has duration $T=2m\tc$. The detector bin grid is fixed independently of the pulse arrivals.
}
\label{fig:stationary-pulses}
\end{figure}

\subsection{A hard stationary signal family}\label{sec:hard-family-main}
We construct the hard family directly in the output displacement field $\xi_\theta$, whose window statistics are the learning target. The simplest physical realization of this field uses $h(t)=\delta(t)$ and $S_\theta=\xi_\theta$ in Eq.~\eqref{eq:response}. Appendix~\ref{app:smooth-response} shows that smooth versions of the hard family can also be realized through a class of fixed, stable sensor responses with finite response time, including relaxation cascades and damped-oscillator responses~\cite{clerk2010}. It constructs stationary finite-power inputs satisfying $h*S_\theta=\xi_\theta$, while preserving the acquisition-time bounds.

 The construction must satisfy two conditions. First, the characteristic functions must remain distinguishable after averaging over random pulse positions. Second, a pulse crossing a measurement-bin boundary must not provide an effective idler.

To satisfy these conditions, consider a pulse starting at $\tau_e$ with $m\coloneqq n/2$ short signal components. Their complex amplitudes are $\alpha_1,\ldots,\alpha_m$, and successive components start $\tc$ apart. We place background components with amplitudes $\zeta_k$ and $-\zeta_k$ immediately before and after the $k$th signal component. Thus each group has the order $\zeta_k$, $\alpha_k$, $-\zeta_k$, with zero field between successive groups. We also add a final background component of amplitude $\zeta_F$ after the last group. All components have width $w\coloneqq \tc/8$ and the same integral normalization, giving a pulse duration $L_p\coloneqq m\tc+w<T$. Figure~\ref{fig:stationary-pulses}(b)--(c) illustrates the pulse groups and their component widths; Appendix~\ref{app:source} specifies the profiles and their normalization.

We distinguish the hypotheses by changing only the joint distribution of $\boldsymbol{\alpha}\coloneqq (\alpha_1,\ldots,\alpha_m)$. The background amplitudes have the same Gaussian distribution under all hypotheses and are independent of $\boldsymbol{\alpha}$. The role of the background can be understood by considering the following two cases.

\medskip
\noindent\textit{Preserving the signal feature.} When each group lies entirely within a measurement bin, the contributions from $\zeta_k$ and $-\zeta_k$ cancel, leaving the signal displacement $\alpha_k$. We show below that the resulting amplitude feature remains detectable after averaging over all pulse positions.

\medskip
\noindent\textit{Blocking an effective idler.} When one signal component crosses a bin boundary, all $m$ components cross corresponding boundaries because their starts are separated by $\tc$. The signal then occupies $m+1$ measured modes, although it has only $m$ complex signal amplitudes. The additional mode direction could therefore act as an effective idler. In this case, however, the paired background components fall in different bins and no longer cancel. We choose their covariance, together with $\zeta_F$, so that all $m+1$ modes receive independent Gaussian displacement noise of variance two. The final component supplies noise in the direction left unaffected by the $m$ pairs. The covariance and the resulting displacement map are given in Appendices~\ref{app:source} and \ref{sec:geometry}, respectively.

To distinguish the hypotheses, we use the same family of signal-amplitude distributions as in random displacement-channel learning~\cite{oh}. Under the null hypothesis, $\boldsymbol{\alpha}$ has the circular Gaussian density $p_0\coloneqq \CN(0,\sigma_{\alpha}^2\Id_m)$. For the query scale in Theorem~\ref{thm:lower}, we set
\begin{align}
 \sigma_{\alpha}^2=2\kappa+\sqrt{1+4\kappa^2}.
 \label{eq:concrete}
\end{align}
To define the alternatives, we draw a hidden Fourier direction $\eta\sim\CN(0,2\kappa\Id_m)$ once and keep it fixed throughout acquisition. We perturb $p_0$ along this direction with strength $\ep=6n\eps\le1/4$ and opposite signs:
\begin{align}
 p_{s\eta}(\boldsymbol{\alpha})\coloneqq p_0(\boldsymbol{\alpha})[1+4s\ep\sin(2\Im \boldsymbol{\alpha}^\dagger\eta)],\qquad s=\pm1.
 \label{eq:hardmarks}
\end{align}
These densities are nonnegative, normalized, and have the same mean-square amplitude. We label the signal law by $\theta=0$ or $(s,\eta)$, with both $s$ and $\eta$ fixed throughout a realization.

The parameter choices serve three purposes. Since the mean squared norm of $\eta$ is $\kappa n$, a constant fraction of directions lies within the allowed query ball.  The amplitude variance in Eq.~\eqref{eq:concrete} is chosen so that local idler-free measurements reveal exponentially little information about the hidden Fourier feature in each pulse (see Appendix~\ref{app:overlap}). The arrival positions that fully preserve the amplitude feature carry stationary weight $O(1/n)$, motivating a perturbation strength proportional to $n\eps$. Fourier transforming the perturbation gives two translated Gaussian peaks with opposite signs, in addition to the central Gaussian peak.  The translated peaks distinguish the hypotheses (see Appendix~\ref{sec:gap} for their explicit form).

To study acquisition from one continuing signal, we now embed these amplitude distributions in a stationary pulse process. We choose the pulse starts from a stationary renewal process with mean arrival rate $\rho=1/(3T)$.  The gaps between successive pulse starts consist of a fixed interval $2T$ followed by an exponential waiting time of mean $T$.  The origin is chosen at a typical time using the equilibrium renewal construction~\cite{thorisson1992} described in Appendix~\ref{app:source}. Figure~\ref{fig:stationary-pulses}(a) illustrates the random arrivals and the position of the observation origin. For each fixed hypothesis, we independently draw the signal and background amplitudes of each pulse from the distributions specified above and define $\xi_\theta(t)\coloneqq \sum_{e\in\mathbb Z}\xi_e(t)$, where the pulse starts $\{\tau_e\}_{e\in\mathbb Z}$ extend over the entire time axis.
Since the marked pulse sequence is invariant under all real time translations, the resulting field is strictly stationary. The minimum start spacing $2T$ ensures that a duration-$T$ window meets at most one pulse. Thus only the signal-amplitude law distinguishes the hypotheses; we next verify that its feature survives averaging over pulse positions.

\subsection{Hypothesis testing and the proof mechanism}\label{sec:proof-mechanism-main}
Following the learning-to-testing approach used in channel-learning lower bounds~\cite{chen2024tight,oh}, we show that accurate learning of this family requires a constant TVD between the null and the equal mixture of the signed alternatives. We then derive an upper bound on the same TVD for sequential local measurements. Comparing the two bounds proves Theorem~\ref{thm:lower}.

\medskip
\noindent\textit{A surviving signal feature.} For each hidden direction $\eta$, consider the query $\beta_\eta$ that places $\eta$ on the first $m$ coordinates of the target window and is zero elsewhere. When each complete pulse group lies in the corresponding bin, the background pairs cancel and this query recovers the amplitude feature.  For a set of directions with probability at least $9/20$, the query is allowed and the feature survives averaging over all positions, including empty windows and partial captures (see Appendix~\ref{app:window-testing}):
\begin{align}
 \abs{\chi_{T,s\eta}(\beta_\eta)-\chi_{T,0}(\beta_\eta)}\ge\frac94\eps>2\eps,\qquad s=\pm1.
 \label{eq:main-stationary-feature-gap}
\end{align}

\medskip
\noindent\textit{Learning requires distinguishable records.} The gap in Eq.~\eqref{eq:main-stationary-feature-gap} lets an estimate with error at most $\eps$ distinguish the null from either alternative: accept the null when the estimate is close to its known characteristic function, and reject it otherwise. The same rule distinguishes the null from an equal mixture of the signed alternatives, with the sign chosen once and held fixed throughout acquisition. We reveal $\eta$ after acquisition, so the test evaluates the required query using the common measurement record of Sec.~\ref{sec:setup}.

Let $P_\theta^{\rm tr}$ denote the distribution of this complete classical record, including the protocol's classical randomness. Since $\eta$ has the same distribution under both testing hypotheses,  the TVD of the direction and measurement record together is the average of the conditional TVD. On the separating directions, successful learning gives a test with success probability at least $2/3$; elsewhere, the test guesses randomly.  The probability of the separating directions therefore gives the testing requirement (see Appendix~\ref{app:window-testing})
\begin{align}
 \E_\eta\left[\TV\left(P_0^{\rm tr},\frac{P_{+\eta}^{\rm tr}+P_{-\eta}^{\rm tr}}2\right)\right]\ge\frac3{20}.
 \label{eq:main-testing-requirement}
\end{align}

\medskip
\noindent\textit{Limited response of one pulse.} We now upper-bound the same TVD for sequentially prepared probes. We may reveal the pulse schedule to the protocol: this can only increase the joint-data TVD, so an upper bound with this extra information also applies when the schedule is unknown.

The measurement outcomes respond only weakly to the hidden Fourier feature, even though it separates the target characteristic functions. For a fixed outcome sequence, this response is controlled by the characteristic function at $\eta$ of the pulse amplitudes conditioned on those outcomes under the null hypothesis. Its squared modulus, averaged over $\eta$, is at most $(\sigma_{\alpha}^2)^{-m}=\sigma_{\alpha}^{-n}$, uniformly over arrival times, probe states, and local measurements (see Appendix~\ref{app:overlap}). When each signal component lies within one bin, this follows from the displacement-overlap bound. When components cross bin boundaries, the background noise blocks the effective idler and gives the same bound. Thus neither probe choice nor pulse alignment removes the exponential suppression of this average response.

\medskip
\noindent\textit{From pulse responses to acquisition time.} For sequential preparation, fixing a complete outcome sequence also fixes all probe preparations and measurement choices along that sequence. The pulse likelihoods then factor because different pulses have independent amplitudes and occupy disjoint sets of bins. This allows us to bound their accumulated response even with classical feedback. For the equal mixture of the two fixed signs, the linear perturbations cancel, and the TVD is bounded by a sum of squared pulse responses. Averaging over outcomes, hidden directions, and pulse schedules gives (see Appendix~\ref{app:sequential})
\begin{align}
 \begin{aligned}
 \E_\eta\left[\TV\left(P_0^{\rm tr},\frac{P_{+\eta}^{\rm tr}+P_{-\eta}^{\rm tr}}2\right)\right]
 \le8\ep^2\sigma_{\alpha}^{-n}\frac{T_{\mathrm{tot}}+L_p}{3T}.
 \end{aligned}
 \label{eq:main-record-bound}
\end{align}
The bound separates the weak response of each pulse, through $\ep^2\sigma_{\alpha}^{-n}$, from the mean number of pulses encountered, $(T_{\mathrm{tot}}+L_p)/(3T)$. The latter includes pulses only partially measured at the acquisition boundaries. Since learning requires the constant TVD in Eq.~\eqref{eq:main-testing-requirement}, the exponential suppression per pulse must be compensated by exponentially many pulses. Their mean arrival rate is $1/(3T)$, so this requires exponentially long acquisition. Substituting $\ep=6n\eps$ and Eq.~\eqref{eq:concrete} gives Theorem~\ref{thm:lower}. 

\subsection{Temporally coherent inputs}\label{sec:coherent-main}
We now allow the input probes to be entangled across the entire acquisition, while keeping the local idler-free measurements of Sec.~\ref{sec:setup}.  We show that this more general class still requires exponentially long acquisition, although our lower bound has a smaller exponential rate (see Appendix~\ref{app:coherent-input} for the finite-time bound with explicit constants).

\begin{theorem}[Observation time with temporally entangled inputs]\label{thm:coherent-input}
For the signal class and parameter range of Theorem~\ref{thm:lower}, every local idler-free protocol of Sec.~\ref{sec:setup} with an arbitrary joint input state across acquisition, satisfying the same learning guarantee, requires acquisition time
\begin{align}
 \frac{T_{\mathrm{tot},\rm IF}}{T}=\Omega\left(\frac{\left[\frac{50}{81}\bigl(2\kappa+\sqrt{1+4\kappa^2}\bigr)\right]^{n/4}}{n^2\eps^2}\right).
 \label{eq:coherent-lower}
\end{align}
\end{theorem}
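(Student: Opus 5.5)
We reuse the two-sided structure of the proof of Theorem~\ref{thm:lower}. The lower side is unchanged. The same hard stationary family, the surviving feature gap~\eqref{eq:main-stationary-feature-gap}, and the testing requirement~\eqref{eq:main-testing-requirement} depend only on the signal laws and on the learning guarantee, not on how the probes are prepared. Learning therefore again forces $\E_\eta[\TV(P_0^{\rm tr},(P_{+\eta}^{\rm tr}+P_{-\eta}^{\rm tr})/2)]\ge 3/20$. All the new work is an upper bound on this averaged TVD when the input is an arbitrary joint state across the acquisition, with destructive local POVMs and causal classical feedback. As before, we reveal the pulse schedule to the protocol, which can only increase the TVD.

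The sequential argument relied on the record likelihood factoring over pulses once the outcome sequence is fixed. With an entangled input this fails: conditioning on earlier outcomes steers the unmeasured probes into states correlated with past displacements. These states can act partially as a reference, since later bins are conditionally entangled with earlier ones. We therefore replace the factorization by a \emph{learning-tree argument with purification}, in the style of Chen et al.\ and of the memory-bounded lower bound of~\cite{oh}. Fix a schedule and consider a single pulse occupying a block of at most $m+1$ bins. Conditioned on all outcomes before the block, the remaining probe modes are in some state $\rho$ on block $\otimes$ future. Only the block modes pass through the displacement, and the future modes are not measured jointly with them. However, the future modes are later measured locally with feedback, so they play the role of an idler that is read out with delay and only through local measurements.

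To control this, we split each pulse block into two halves of $m/2$ signal components each; this is where the exponent $n/4$ comes from. For one half, we treat every other system, including the past outcomes, the other half, and the future, as a potential idler. We then use a weaker averaged overlap bound. The response of one half to the hidden direction $\eta$ is bounded by the $\eta$-average of the squared modulus of a conditional characteristic function. We bound this average via a Cauchy--Schwarz or pretty-good-measurement step, losing a constant factor per mode; this loss is the $81/50$ factor. The result is $(\tfrac{81}{50}\sigma_\alpha^{-2})^{m/2}$ per pulse. Concretely, we will bound the averaged squared response $\E_\eta|\Tr[(\cdot)\,\mathcal{D}_\eta]|^2$ of the block by the Gaussian-averaged displacement twirl, which reduces to $\prod_k(1+\text{const}\cdot\sigma_\alpha^2)^{-1}$ over the modes that are \emph{not} jointly measured with their purification. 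We keep the $m/2$ such modes guaranteed by the destructive-within-bin constraint. The background noise from Appendix~\ref{sec:geometry} again handles boundary-crossing pulses, at the cost of the $50/81$ constant.

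To accumulate over pulses, we use the second-moment (chi-square) bound for the signed mixture. The linear terms cancel, and $\TV\le\frac12\sqrt{\chi^2}$ or the $1+x\le e^x$ telescoping bounds the total by $O(\ep^2)$ times the per-pulse suppression times the mean pulse count $(\tobs+L_p)/(3T)$. Combining this with $3/20$ and substituting $\ep=6n\eps$ and~\eqref{eq:concrete} gives~\eqref{eq:coherent-lower}. The main obstacle is the per-pulse step: proving that the conditional future state, even though it is entangled with the block, cannot recover more than half of the modes' worth of Fourier information under local destructive readout. I would first attempt it by purifying the whole input, writing the record as a sequential instrument on a single growing memory, and applying the Gaussian-twirl inequality on the block with the memory as the reference.
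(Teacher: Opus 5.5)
\textbf{There is a genuine gap, at exactly the step you flag.} Your lower side carries over unchanged (feature gap, testing requirement, revealing the schedule). Your target per-pulse constant $(81/(50\sigma_\alpha^2))^{m/2}$ is also the right one. But the step meant to produce it is missing, and the mechanism you sketch would not work.

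\textbf{Why the half-block mechanism fails.} Suppose the other half-block, the past and the future are treated as a general reference. Then there is no per-mode suppression to lose a constant from: with a jointly measured reference, a Bell readout as in Sec.~\ref{sec:upper} makes $\E_\eta[|G_\eta|^2]$ of order one. Suppose instead you invoke the destructive-within-bin constraint. It applies to every mode of the block, so nothing singles out $m/2$ modes ``not jointly measured with their purification.'' The exponent $n/4$ and the factor $81/50$ are therefore read off the statement, not derived. Boundary-crossing pulses also cost nothing extra: the variance-two background gives the same $b_\kappa^{-m}$ bound through Lemma~\ref{lem:noise}, even with undisplaced auxiliary systems.

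\textbf{The accumulation step has the same defect.} A chi-square or $1+x\le e^x$ telescoping over pulses presupposes scalar per-pulse likelihood factors. That is exactly the factorization you note fails. With entangled inputs there is no evident way to control $\E_{o\sim P_0^{\rm tr}}[(L_\eta-1)^2]$ by single-pulse quantities.

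\textbf{The missing idea is to keep each pulse's perturbation as an operator.} Then cross-time entanglement never has to be handled pulse by pulse. The paper's argument runs as follows.
\begin{enumerate}
\item For a fixed outcome branch, every feedback choice is fixed. Later preparation maps act only on probes that have not yet met the signal, so they commute with earlier displacements and destructive readouts. The branch is therefore represented by a joint input $\rho_o$ with fixed local effects.
\item Average each pulse block over its independent amplitudes and normalize by the null block effect. This gives $L_\eta(o)=\tfrac12\sum_{s=\pm1}\Tr[\sigma_o\prod_e(\Id+4s\ep S_{e,\eta})]$. Here $\sigma_o$ is an $\eta$-independent state, and $-\Id\preceq S_{e,\eta}\preceq\Id$ act on disjoint blocks, so they commute.
\item The scalar signed-product inequality then holds as an operator inequality. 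The positive-deficit identity $\TV=\E_{o\sim P_0^{\rm tr}}[(1-L_\eta)_+]$ needs only this one-sided bound: $(1-L_\eta)_+\le a_{\ep}\sum_e\Tr(\sigma_oS_{e,\eta}^2)$.
\item Each term depends only on the reduced state of $\sigma_o$ on block $e$. Your ``delayed idler'' is therefore harmless at this stage.
\item What remains is an operator bound on an expectation of a square, $\E_\eta[S_{e,\eta}^2]\preceq(9/(5\sqrt{2b_\kappa}))^m\Id$. The scalar overlap bound, which controls a squared expectation, does not imply this directly. The paper lifts the pure-state bound of Proposition~\ref{prop:event} to it by Gaussian smoothing, a weighted Cauchy--Schwarz inequality, and Plancherel.
\end{enumerate}
The square root lost in this lifting, together with the Gaussian constants $3/\sqrt8$ and $6/5$, produces $(50b_\kappa/81)^{m/2}$. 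That is the actual origin of the exponent $n/4$, not a split of the block into halves.
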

The bracketed factor exceeds one throughout the theorem's range.

\medskip

\noindent\textit{From scalar to operator responses.} Temporal input entanglement can correlate the outcomes of different pulses, so their scalar likelihoods need not factor. We instead represent each pulse's perturbation by an operator on its input modes, retaining the full joint input state. These operators act on disjoint mode blocks and commute, which allows the same signed-mixture argument to bound the TVD.

The resulting bound depends on operator second moments, which remain exponentially small in $n$. Combining this suppression with the same mean pulse count and testing requirement gives Theorem~\ref{thm:coherent-input}. The smaller exponent reflects the strength of our operator bound; it does not establish that temporal input entanglement achieves a corresponding improvement.

\subsection{Comparison on the same stationary signal}

We now compare the acquisition times for the same stationary signal family, whose finite mean power and dependence bound $\tau_{\rm dep}=O(T)$ are verified in Appendix~\ref{app:source-checks}. These properties allow us to apply Theorem~\ref{thm:upper} to the signals used in the lower bounds. For fixed $\kappa$, choose $r=\tfrac12\log(\kappa n)$ and use the accuracy and confidence of Theorem~\ref{thm:lower}. Then the Bell noise factor in the upper bound is $e^2$, and the mean photon number of each probe and idler is $\sinh^2r=O(n)$. Combining the bounds gives
\begin{align}
 \frac{T_{\mathrm{tot},\rm IF}}{T}
 &=\Omega\left(\frac{[\frac{50}{81}(2\kappa+\sqrt{1+4\kappa^2})]^{n/4}}{n^2\eps^2}\right),\\
 \frac{T_{\mathrm{tot},\rm EA}}{T}
 &=O\left(\frac1{\eps^2}\right).
 \label{eq:physical-scaling}
\end{align}
The idler-free lower bound allows arbitrary temporal input entanglement, while sequential preparation satisfies the stronger bound in Theorem~\ref{thm:lower}. In particular, for $\eps=\Theta(1/n)$ within the stated range, the entanglement-assisted scheme requires $O(n^2)$ windows of duration $T$, whereas local idler-free schemes require exponentially many. Therefore, we obtain a polynomial-versus-exponential separation in acquisition time for learning the same stationary window statistics. Here $n=T/\tc$ counts the selected temporal modes per target window: increasing $T$ at fixed readout rate includes longer temporal structure, while increasing $n$ at fixed $T$ compares finer interfaces with correspondingly rescaled hard signals.

For comparison, vacuum probes with heterodyne readout (VH) give the estimator and observation-time bound of Theorem~\ref{thm:upper} with $r=0$~\cite{oh}, corresponding to independent additive noise $\CN(0,1)$ in each bin. Figure~\ref{fig:acquisition-comparison} evaluates the finite-time bounds for these schemes and both classes of local idler-free protocols. It includes a fixed-squeezing example at $15\,\mathrm{dB}$, a level demonstrated for optical squeezed light~\cite{vahlbruch2016}, while retaining our ideal sensing assumptions. At fixed squeezing the EA upper bound still grows exponentially with $n$, albeit more slowly than the plotted IF lower bounds. Logarithmic squeezing instead gives the polynomial scaling above.

\begin{figure}[t]
 \centering
 \includegraphics[width=\columnwidth]{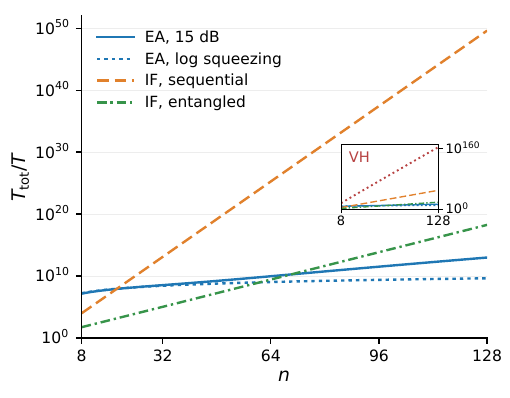}
 \caption{Finite-time acquisition bounds on the same stationary signal family, with $\kappa=1.4$, $\eps=1/(48n)$, and $\delta=1/3$. Blue curves give sufficient times for EA sensing from Theorem~\ref{thm:upper}, using $\tau_{\rm dep}/T=4e^2$: fixed $15\,\mathrm{dB}$ squeezing (solid) and $r=\tfrac12\log(\kappa n)$ (dotted), which reaches $22.5\,\mathrm{dB}$ at $n=128$. Orange and green curves give necessary times for sequential and temporally entangled IF protocols from Eqs.~\eqref{eq:sequential-finite} and \eqref{eq:coherent-finite}, using the finite-$n$ value of $q_m$ in Eq.~\eqref{eq:qgap-app} with $c=0.1$. The inset includes the VH sufficient-time bound ($r=0$). An IF lower bound above an EA upper bound certifies an advantage; crossings of the bounds need not coincide with the optimal performance crossover.}
 \label{fig:acquisition-comparison}
\end{figure}

The figure establishes a separation of rigorous bounds, but does not determine the optimal acquisition times. In particular, the large EA values are sufficient times and need not be necessary. The required accuracy $\eps=O(1/n)$ reflects the decreasing stationary weight of pulse positions that preserve the hidden feature, and contributes a factor $\eps^{-2}$ to the sufficient time. Thus the asymptotic advantage does not by itself imply a short acquisition in a finite experiment. Tighter finite-time bounds are needed to assess that regime.

\section{Conclusion}
We have established an exponential quantum advantage in the acquisition time required to learn stationary signal statistics from one continuously measured realization. Our entanglement-assisted estimator accounts for both readout noise and correlations between signal windows. For a common stationary pulse family, it achieves polynomial acquisition time with logarithmic squeezing and polynomial photon number, whereas local idler-free sensing requires exponentially long acquisition even with arbitrary probe energy, temporal input entanglement, and classical feedback.

The stationary setting connects exponential displacement-learning advantages to temporal statistics acquired from a continuing physical signal. The target distribution includes arbitrary pulse positions, partial captures, and empty intervals, while the acquisition cost includes waiting times and correlations between windows. No resets or event alignment are required. Because the learned coordinates are fixed by the sensor response and readout resolution, their joint statistics describe accessible correlations and fluctuations of the physical signal. This brings exponential learning advantages closer to continuous sensing tasks such as noise spectroscopy and the characterization of stationary fluctuations.

Our comparison identifies a protected idler as a resource that reduces acquisition time even when the probes may be entangled across time. The Bell scheme requires only local readout within each bin, and the idler can occupy the same temporal packet as the signal (see Appendix~\ref{app:internal-idler}). Quantum memories can enhance sensing sensitivity~\cite{zaiser2016}, and coherent processing can retrieve information missed by temporally local measurements~\cite{yang2023}. Extending the present learning comparison to such readout architectures, as well as to loss and other temporal-mode profiles, would clarify which resources are essential under experimental constraints.

Several quantitative questions remain open. Sharper finite-time bounds would determine whether the large sufficient acquisition times in Fig.~\ref{fig:acquisition-comparison} can be substantially reduced. It is also unknown whether temporal input entanglement changes the optimal exponential rate for local idler-free sensing. Finally, our construction establishes the separation at accuracy $\eps=O(1/n)$; obtaining an analogous stationary-window separation at an accuracy independent of $n$ would further strengthen its practical significance.

\bigskip

\noindent\textit{Note added.---}
While preparing this manuscript, we became aware of Ref.~\cite{mele2026continuous}, which also establishes exponential separations in total sensing time for continuously observed signals. That work studies squeezing-dependent separations for a single-mode Gaussian sensor. Our work establishes an exponential advantage from entanglement with protected idlers over local idler-free protocols, even with unlimited probe energy, non-Gaussian inputs and measurements, and temporally entangled probes under destructive single-bin readout.


\begin{acknowledgments}
    This work was supported by the National Research Foundation of Korea (NRF) grants funded by the Korean government (Ministry of Science and ICT, MSIT) (Nos. RS-2024-00431768 and RS-2025-00515456), and by the Institute of Information \& Communications Technology Planning \& Evaluation (IITP) grants funded by the Korean government (MSIT) (Nos. RS-2024-00437284, IITP-2025-RS-2025-02283189, and IITP-2025-RS-2025-02263264). This work was also supported by the Global Partnership Program of Leading Universities in Quantum Science and Technology (RS-2025-08542968) and the Creation of the Quantum Information Science R\&D Ecosystem Program (based on human resources, RS-2023-00256050), both through the NRF and funded by the Korean government (MSIT).
\par The authors used ChatGPT (OpenAI) to assist with detailed proof development, calculations, numerical code and figures, literature searches, and manuscript editing. The authors formulated the research problem and provided key conceptual ideas underlying the proofs, critically evaluated the AI-generated contributions, and independently verified the final arguments, numerical results, and references. The authors take full responsibility for the manuscript’s content and conclusions.\end{acknowledgments}

\bibliographystyle{apsrev4-2-etal}
\bibliography{main_references}

\clearpage
\appendix
\onecolumngrid
\addtocontents{toc}{\protect\fi}
\setcounter{tocdepth}{2}
\tableofcontents

\section{An effective idler in the same temporal packet}\label{app:internal-idler}
The sensing comparison in Sec.~\ref{sec:setup} depends on which modes the readout can access. We illustrate this with a complementary mode that acts as a protected idler within the same temporal packet and reproduces the Bell statistics in Eq.~\eqref{eq:bell}. The signal has a matched, constant profile within the packet. Let $u_1,u_2$ be normalized rectangular modes on its first and second halves. Their signal and complementary combinations are $\psi\coloneqq (u_1+u_2)/\sqrt2$ and $\phi\coloneqq (u_1-u_2)/\sqrt2$. Let $\alpha$ denote the displacement of $\psi$. The corresponding half-mode amplitudes are $\alpha/\sqrt2$, while $\phi$ is undisplaced. Prepare independent squeezed vacua with
\begin{align}
 \Var(x_{u_1})=\Var(p_{u_2})=e^{-2r},\qquad
 \Var(p_{u_1})=\Var(x_{u_2})=e^{2r}.
\end{align}
The signal gives $\E_{x_{u_1}}[x_{u_1}]=\sqrt2\Re\alpha$ and $\E_{p_{u_2}}[p_{u_2}]=\sqrt2\Im\alpha$. Their measured Gaussian noises are independent, so the normalized complex combination is
\begin{align}
 Y\coloneqq (x_{u_1}+ip_{u_2})/\sqrt2=\alpha+W,\qquad W\sim\CN(0,e^{-2r}).
 \label{eq:internal-idler}
\end{align} In the signal--complement basis, the same readout is
\begin{align}
 Y=\frac{x_\psi+x_\phi+i(p_\psi-p_\phi)}{2},\qquad
 [x_\psi+x_\phi,p_\psi-p_\phi]=0.
\end{align}
The variances of the two commuting EPR combinations are $2e^{-2r}$. To verify that the same input supplies probe--idler entanglement in this basis, use the covariance convention: for quadratures $\hat r_j$, write $V_{jk}\coloneqq \langle\{\Delta\hat r_j,\Delta\hat r_k\}\rangle/2$, with $\Delta\hat r_j\coloneqq \hat r_j-\langle\hat r_j\rangle$. The signal--complement covariance matrix has diagonal entries $\cosh(2r)$, with $V_{x_\psi x_\phi}=-\sinh(2r)$ and $V_{p_\psi p_\phi}=+\sinh(2r)$. This is the covariance matrix of a two-mode squeezed vacuum, up to a local phase convention. Thus product inputs in the half-mode basis can contain entanglement across the signal--complement mode decomposition. The orthogonal signal and complementary modes share the same temporal support.
The half-mode measurements may be performed in temporal order and their classical outcomes combined afterward. Their performance requires the specified signal profile and access to both half-modes; additional displacement of the complementary mode changes the readout. The two accessible half-modes thus realize a signal--idler readout within one temporal packet.

\section{Entanglement-assisted upper bound}\label{app:upper}\label{app:upper-disjoint}
We prove Theorem~\ref{thm:upper} for the estimator in Eq.~\eqref{eq:estimator}. The Bell measurement adds independent Gaussian noise in each bin, so the readout noise contributes only to the variance of each individual window, while the covariances between windows arise from the signal. Throughout this appendix, we write $\nu\coloneqq e^{-2r}$ for the Bell noise variance. The subscript $\theta$ in $\E_\theta$ and $\Pr_\theta$ means that the signal law is held fixed: the average and probability are over the signal realization and measurement outcomes, including the Bell noise. A subscript $\Xi$ denotes averaging only over the Bell noise.

For the TMSV phase used in Sec.~\ref{sec:upper}, the quadrature differences $\hat x_a-\hat x_i$ and $\hat p_a+\hat p_i$ commute and both have variance $2e^{-2r}$. The probe displacement shifts their means by $2\Re X_j$ and $2\Im X_j$. Dividing their complex combination by $2$ gives real and imaginary noise variances $\nu/2$, proving Eq.~\eqref{eq:bell}. Independent probe--idler pairs therefore give
\begin{align}
 \bm Y^{(i)}=\bm X_{iT}+\bm\Xi^{(i)},\qquad
 \bm\Xi^{(i)}\overset{\rm iid}{\sim}\CN(0,\nu\Id_n),
 \label{eq:upper-window-readout}
\end{align}
where the Bell noise vectors are independent of the entire signal. The signal vectors $\bm X_{iT}$ can remain correlated between windows.

We use the estimator in Eq.~\eqref{eq:estimator}, written as an average of its window summands $U_i$. We also define $f_i$ as the corresponding Fourier feature of the signal:
\begin{align}
 \hat\chi_{T,N}(\beta)=\frac1N\sum_{i=0}^{N-1}U_i,\qquad
 U_i\coloneqq e^{\nu\abs{\beta}^2}e^{(\bm Y^{(i)})^\dagger\beta-\beta^\dagger\bm Y^{(i)}},\qquad
 f_i\coloneqq e^{\bm X_{iT}^\dagger\beta-\beta^\dagger\bm X_{iT}}.
\end{align}
The Bell noise in Eq.~\eqref{eq:upper-window-readout} has Gaussian Fourier average
\begin{align}
 \E_\Xi[e^{(\bm\Xi^{(i)})^\dagger\beta-\beta^\dagger\bm\Xi^{(i)}}]=e^{-\nu\abs{\beta}^2}.
 \label{eq:conventions}
\end{align}
Since the Bell noise is independent of the signal, its average factors from the signal Fourier feature. The prefactor in $U_i$ cancels the Gaussian factor above, and stationarity gives
\begin{align}
 \E_\theta[U_i]=\E_\theta[f_i]=\chi_{T,\theta}(\beta),\qquad
 \E_\theta[\hat\chi_{T,N}(\beta)]=\chi_{T,\theta}(\beta).
 \label{eq:upper-unbiased}
\end{align}
The estimator is therefore unbiased, even when the signal windows are correlated.

We now calculate the variance of the estimator, using the complex covariance $\Cov(U,V)\coloneqq \E_\theta[U\overline V]-\E_\theta[U]\overline{\E_\theta[V]}$. Since $\abs{U_i}^2=e^{2\nu\abs{\beta}^2}$ and $\E_\theta[U_i]=\chi_{T,\theta}(\beta)$, we have
\begin{align}
 \Cov(U_i,U_i)=e^{2\nu\abs{\beta}^2}-\abs{\chi_{T,\theta}(\beta)}^2
 \le e^{2\nu\abs{\beta}^2}.
\end{align}
For $i\ne j$, the Bell noises are independent of one another and of the signal. Their Gaussian averages cancel both prefactors, giving $\E_\theta[U_i\overline U_j]=\E_\theta[f_i\overline f_j]$ and hence $\Cov(U_i,U_j)=\Cov(f_i,f_j)$.  Thus the off-diagonal covariances depend only on the signal. Since $f_i\overline f_j$ has unit modulus, the TVD between the joint signal-window law and the product of its marginals bounds
\begin{align}
 \abs{\Cov(U_i,U_j)}=\left|\E_\theta[f_i\overline f_j]-\E_\theta[f_i]\overline{\E_\theta[f_j]}\right|\le 2\beta_{\rm mix}(\abs{i-j}).
\end{align}
Expanding the variance of the average, and grouping the $N-g$ pairs at each separation $g$, now gives
\begin{align}
 \E_\theta[\abs{\hat\chi_{T,N}(\beta)-\chi_{T,\theta}(\beta)}^2]
 &=\Var_\theta(\hat\chi_{T,N}(\beta))
 =\frac1{N^2}\sum_{i,j=0}^{N-1}\Cov(U_i,U_j)\nonumber\\
 &\le\frac{e^{2\nu\abs{\beta}^2}}{N}
  +\frac{4}{N^2}\sum_{g=1}^{N-1}(N-g)\beta_{\rm mix}(g)
  \le\frac{e^{2\nu\kappa n}+\tau_{\rm dep}/T}{N}.
 \label{eq:variance-main}
\end{align}
The factor $4$ combines the two orientations of each pair with the factor $2$ in the TVD bound. The last line uses $\abs{\beta}^2\le\kappa n$ and the definition of $\tau_{\rm dep}$ in Eq.~\eqref{eq:mixsum}. The dependence time therefore bounds the accumulated covariance between windows.

Since the protocol observes $N$ consecutive windows, $T_{\rm tot}=NT$, and Chebyshev's inequality gives
\begin{align}
 \Pr_\theta\!\left[\abs{\hat\chi_{T,N}(\beta)-\chi_{T,\theta}(\beta)}>\eps\right]
 \le\frac{T e^{2\nu\kappa n}+\tau_{\rm dep}}{T_{\rm tot}\eps^2}.
 \label{eq:upper-confidence}
\end{align}
Choosing $N=\left\lceil(e^{2\nu\kappa n}+\tau_{\rm dep}/T)/(\delta\eps^2)\right\rceil$ makes this probability at most $\delta$ and proves Theorem~\ref{thm:upper}. In particular, the signal-correlation contribution to the mean-square error is at most $\tau_{\rm dep}/T_{\rm tot}$; squeezing reduces the readout term, while a longer acquisition reduces both terms.

\section{Proof of the observation-time lower bounds}\label{app:lower-proof}
In this appendix, we prove Theorems~\ref{thm:lower} and \ref{thm:coherent-input}. As in the main-text proof sketch, we reduce learning to hypothesis testing~\cite{huang2022,oh}. Accurate learning requires a constant TVD between the measurement outcome distributions, whereas we show that local sensing contributes only an exponentially small amount per pulse. Comparing these two bounds gives the required acquisition time. We organize the proof as follows.

\begin{enumerate}
 \item \textit{Construct the hard instance (Sec.~\ref{app:source}).} Define a strictly stationary null signal and signed alternatives, with a common pulse schedule and background. Only the signal-amplitude law carries the hidden feature.
 \item \textit{Reduce learning to testing (Sec.~\ref{app:window-testing}).} Show that the feature survives averaging over pulse positions. An accurate learner then distinguishes the null from the signed mixture, imposing a lower bound on the TVD of measurement outcomes.
 \item \textit{Control one pulse (Sec.~\ref{app:overlap}).} Bound its conditional Fourier response uniformly over arrival times, including pulses crossing bin boundaries. This supplies the scalar per-pulse estimate.
 \item \textit{Convert the TVD bound to acquisition time (Sec.~\ref{app:sequential}).} Insert the uniform pulse-response bound and average the pulse count, then compare with the testing requirement. This proves Theorem~\ref{thm:lower}.
 \item \textit{Include temporally entangled inputs (Sec.~\ref{app:coherent-input}).} Extend the likelihood-deficit calculation to correlated inputs and derive an operator second-moment bound from the scalar Fourier response. This proves Theorem~\ref{thm:coherent-input}.
 \item \textit{Verify the signal properties (Sec.~\ref{app:source-checks}).} Establish finite mean power and the dependence bound, placing the family in the stated signal class and enabling the Bell upper bound.
\end{enumerate}
Steps 1--5 derive the acquisition-time bounds on the hard family. The final step verifies the signal properties required by the theorem statements. We use $\TV(P,Q)\coloneqq \tfrac12\int|dP-dQ|$, for which optimal equal-prior testing success is $\tfrac12+\tfrac12\TV(P,Q)$.

\subsection{Setting and stationary signal family}\label{app:source}
We begin by constructing a stationary signal as a sum of pulses with random arrival times, $\xi_\theta(t)\coloneqq \sum_{e=-\infty}^{\infty}\xi_e(t)$. We specify the arrival process, the pulse shape, and the amplitude distributions in that order. Recall that the learning task in Sec.~\ref{sec:setup} concerns $n$ temporal modes within a window of duration $T=n\tc$, measured during a total time $\tobs$. The acquisition protocol is independent of the query, and for every fixed signal law and every fixed allowed query it must satisfy
\begin{align}
 \Prob_\theta[\abs{\hat\chi_T(\beta)-\chi_{T,\theta}(\beta)}\le\eps]\ge\frac23.
 \label{eq:lower-learning}
\end{align}
We treat the sequentially prepared protocols described in Sec.~\ref{sec:protocols} and the temporally entangled inputs of Theorem~\ref{thm:coherent-input}, both with destructive single-bin readout and causal classical feedback.

\medskip

Take even $n\ge8$ and set $m\coloneqq n/2$, the number of signal components in each pulse. The remaining window coordinates accommodate the final background component, keeping the pulse duration below $T$.

We construct stationary pulse arrivals using the standard equilibrium renewal construction~\cite{thorisson1992}. The ordinary gap between successive pulse starts is $2T$ plus an exponential waiting time of mean $T$, with density
\begin{align}
 \frac1T e^{-(d-2T)/T}\mathbf1_{d\ge2T}
\end{align}
and mean $3T$. Since each pulse lasts less than $T$, the minimum spacing ensures that a target window intersects at most one pulse.

To make the origin an arbitrary observation time, we sample its containing gap in proportion to its length: longer gaps contain more possible observation times. This gives the density $d e^{-(d-2T)/T}\mathbf1_{d\ge2T}/(3T^2)$. We then place the origin uniformly within this gap and extend the pulse starts in both directions using independent ordinary gaps. The resulting stationary process has mean pulse rate $\rho=1/(3T)$. We index the starts $\tau_e$ by $e\in\mathbb Z$ and independently draw each pulse's amplitudes from the common law specified below, independently of the arrival process.

\medskip
We next specify the waveform relative to each arrival time. Each pulse consists of $m$ rectangular groups with starts $s_{e,k}\coloneqq \tau_e+(k-1)\tc$, $k=1,\ldots,m$. Within each group, three intervals of width $w\coloneqq \tc/8$ carry the amplitudes $\zeta_k$, $\alpha_k$, and $-\zeta_k$ in that order. A final interval carries $\zeta_F$, as illustrated in Fig.~\ref{fig:stationary-pulses}(b)--(c). The complete waveform is
\begin{align}
 \xi_e(t)\coloneqq \frac{\sqrt\tc}{w}
 \begin{cases}
  \zeta_k, & s_{e,k}\le t<s_{e,k}+w,\\
  \alpha_k, & s_{e,k}+w\le t<s_{e,k}+2w,\\
  -\zeta_k, & s_{e,k}+2w\le t<s_{e,k}+3w,\\
  \zeta_F, & \tau_e+m\tc\le t<\tau_e+m\tc+w,\\
  0, & \text{otherwise}.
 \end{cases}
 \label{eq:event}
\end{align}
The first three cases apply for every $k=1,\ldots,m$. Since $3w<\tc$, the intervals are disjoint, with zero field between successive groups. The factor $\sqrt\tc/w$ makes a fully captured signal component contribute exactly $\alpha_k$ to the normalized time-bin displacement. The pulse is supported within $[\tau_e,\tau_e+L_p]$, with $L_p=m\tc+w<T$.

\medskip
We next define the amplitude distributions for the testing hypotheses. Let $\boldsymbol{\alpha}\coloneqq (\alpha_1,\ldots,\alpha_m)\in\mathbb C^m$ collect the signal amplitudes in one pulse. Under the null hypothesis, they follow a circular Gaussian distribution with variance $\sigma_{\alpha}^2>1$ per component; the explicit parameter range in the main text has $\sigma_{\alpha}^2\ge4$. Thus,
\begin{align}
 p_0(\boldsymbol{\alpha})\coloneqq \frac{e^{-\abs{\boldsymbol{\alpha}}^2/\sigma_{\alpha}^2}}{(\pi \sigma_{\alpha}^2)^m}.
\end{align}
To obtain different characteristic functions, we add a sinusoidal perturbation along a hidden Fourier direction $\eta$. This direction is drawn once and remains fixed throughout the signal realization. We choose its Gaussian distribution as
\begin{align}
 \eta\sim\CN(0,2\kappa\Id_m),\qquad \kappa>0.
 \label{eq:lower-parameters}
\end{align}
For each direction, the two alternatives add opposite sinusoidal perturbations of strength $0<\ep\le1/4$:
\begin{align}
 p_{s\eta}(\boldsymbol{\alpha})\coloneqq p_0(\boldsymbol{\alpha})[1+4s\ep\sin(2\Im \boldsymbol{\alpha}^\dagger\eta)],\qquad s=\pm1.
 \label{eq:lower-marks}
\end{align}
Thus $\E[\abs{\eta}^2]=\kappa n$, and a constant fraction of directions lies in the query ball. Section~\ref{app:window-testing} chooses the signal variance to obtain a constant fraction of separating queries, and Sec.~\ref{sec:time} sets the perturbation strength to match the requested accuracy.
The sine is odd under $\boldsymbol{\alpha}\mapsto-\boldsymbol{\alpha}$, while $p_0$ is even. Thus each density is normalized, and $4\ep\le1$ makes it nonnegative. The same symmetry leaves the even second moment $\E_{\boldsymbol{\alpha}\sim p_\theta}[\abs{\boldsymbol{\alpha}}^2]=m\sigma_{\alpha}^2$ unchanged. We use these densities to define two equally likely testing hypotheses.
\begin{enumerate}
 \item[(i)] \emph{Null hypothesis.} Every pulse has an independent signal vector drawn from $p_0$.
 \item[(ii)] \emph{Alternative hypothesis.} Draw $\eta$ from Eq.~\eqref{eq:lower-parameters} and choose $s=\pm1$ with equal probability, once for the entire acquisition. Conditional on this fixed pair $(s,\eta)$, every pulse has an independent signal vector drawn from $p_{s\eta}$.
\end{enumerate}
The schedule and Gaussian background have the same distribution under both hypotheses and are independent of the signal amplitudes. For each fixed $\eta$, hypothesis (ii) therefore gives an equal mixture of the two distributions of complete measurement outcomes. Note that the sign is chosen only once: averaging it independently for each pulse would give the null amplitude distribution. As in the main text, each stationary signal is indexed by $\theta=0$ or $(s,\eta)$. We prove stationarity and the dependence bound for each fixed $\theta$; the random choice of sign and direction is used only to define the hypothesis test.

When a signal component crosses a bin boundary, its preceding positive-noise component and following negative-noise component fall in adjacent bins. The captured background therefore consists of successive differences of the amplitudes. Write $\boldsymbol{\zeta}\coloneqq (\zeta_1,\ldots,\zeta_m,\zeta_F)^{\mathrm{T}}$ and let $\Delta_g$ take these differences. To fill every crossing mode with the independent noise required by Lemma~\ref{lem:noise}, we choose the background so that
\begin{align}
 \Delta_g\boldsymbol{\zeta}\coloneqq (\zeta_1,\zeta_2-\zeta_1,\ldots,\zeta_m-\zeta_{m-1},\zeta_F-\zeta_m)^{\mathrm{T}}\sim\CN(0,2\Id_{m+1}).
 \label{eq:noiseidentity}
\end{align}
We realize this by drawing $\zeta_1\sim\CN(0,2)$ and adding an independent $\CN(0,2)$ increment at each subsequent step, including the final amplitude $\zeta_F$. We draw all increments independently of $\boldsymbol{\alpha}$ and use the same law under every hypothesis. The resulting background vector has distribution
\begin{align}
 \boldsymbol{\zeta}\sim\CN(0,C_g),\qquad (C_g)_{jk}=2\min(j,k),\quad j,k=1,\ldots,m+1.
 \label{eq:noise}
\end{align}

Thus $\Delta_g C_g\Delta_g^{\mathrm{T}}=2\Id_{m+1}$, and every crossing mode receives independent displacement noise of variance two. Figure~\ref{fig: non-crossing and crossing case} illustrates background cancellation within bins and the noise produced at crossing boundaries; Sec.~\ref{sec:geometry} verifies these capture properties from the pulse profiles.

\begin{figure}
\centering
\begin{tikzpicture}[
    x=1cm,y=1cm,scale=1.08,transform shape,
  font=\rmfamily\fontsize{8}{9.2}\selectfont,text=black,
  inner sep=1pt,
  line cap=round,line join=round,
  >={Stealth[length=1.5mm,width=1.0mm]},
  panel/.style={font=\rmfamily\fontsize{8}{9.2}\selectfont\bfseries,
    anchor=west,text=black},
  axis/.style={draw=black,line width=0.70pt},
  guide/.style={draw=black,line width=0.45pt,
    dash pattern=on 2pt off 1.8pt},
  note/.style={align=center}
]
\pgfmathsetlengthmacro{\binfigurexunit}{min(8.6cm,\linewidth)/8.6}
\pgfsetxvec{\pgfpoint{\binfigurexunit}{0pt}}
\path[use as bounding box] (0,-0.35) rectangle (8.60,6.25);

\def\binleft{0.30}
\def\binperiod{2.50}
\def\bincomponentwidth{0.3125} 
\def\completegroupstart{0.925}
\def\boundaryshift{1.40625}

\node[panel] at (0.04,6.05) {(a) Complete groups};
\node[panel] at (0.04,3.25) {(b) Signal components cross boundaries};

\foreach \base/\offset in {4.80/0,2.00/\boundaryshift}{
  \foreach \j in {0,1,2,3}{
    \draw[guide]
      (\binleft+\j*\binperiod,\base-0.43)
      -- (\binleft+\j*\binperiod,\base+0.60);
  }
  \foreach \j in {0,1,2}{
    \node[text=black] at ({\binleft+(\j+0.5)*\binperiod},\base+0.90)
      {bin \j};
  }

  \foreach \k in {1,2}{
    \pgfmathsetmacro{\groupstart}{\completegroupstart+(\k-1)*\binperiod+\offset}
    \fill[sensorTeal]
      (\groupstart,\base)
      rectangle (\groupstart+\bincomponentwidth,\base+0.21);
    \fill[signalOrange]
      (\groupstart+\bincomponentwidth,\base)
      rectangle (\groupstart+2*\bincomponentwidth,\base+0.40);
    \fill[sensorTeal]
      (\groupstart+2*\bincomponentwidth,\base)
      rectangle (\groupstart+3*\bincomponentwidth,\base-0.21);

  }

  \pgfmathsetmacro{\finalstart}{\completegroupstart+2*\binperiod+\offset}
  \fill[preparePurple]
    (\finalstart,\base)
    rectangle (\finalstart+\bincomponentwidth,\base+0.31);

  \draw[axis,->] (\binleft,\base) -- (8.25,\base);
  \node[text=black,anchor=west] at (8.33,\base) {$t$};
}

\foreach \j/\result in {0/{\signalamp{1}},1/{\signalamp{2}},2/{\finalamp}}{
  \node[text=black] at ({\binleft+(\j+0.5)*\binperiod},4.15)
    {$X_{\j}=\result$};
}

\node[text=black] at (1.25,1.00)
  {$X_0=f\signalamp{1}+\backgroundamp{1}$};
\node[text=black] at (4.05,1.00)
  {$\begin{gathered}
    X_1=(1-f)\signalamp{1}+f\signalamp{2}\\
    {}+\backgroundamp{2}-\backgroundamp{1}
  \end{gathered}$};
\node[text=black] at (6.85,1.00)
  {$\begin{gathered}
    X_2=(1-f)\signalamp{2}\\
    {}+\finalamp-\backgroundamp{2}
  \end{gathered}$};

\node[note,text=black] at (4.30,0.30)
  {Background differences fill all three modes:};
\node[text=black] at (4.30,-0.08)
  {$\Delta_g\boldsymbol{\zeta}\sim\mathcal{CN}(0,2I_3)$};
\end{tikzpicture}
\caption{Pulse capture and background noise at bin boundaries, illustrated with two signal components. (a) Each complete group lies within a single bin, so the paired background contributions cancel, leaving the signal amplitude $\alpha_k$. The final background component occupies the following bin. The noncrossing requirement requires only that the signal component be in a single bin. Therefore, when noise is cut by bin boundaries, it does not cancel and instead appears as additive background displacements within the bins. 
(b) The signal components cross bin boundaries, and the background differences provide independent Gaussian displacement noise of variance two in all three captured modes. Here, $f$ is the fraction of each signal component captured in the earlier bin. The drawing shows $f=1/2$, while the displayed expressions hold for general $f$.}
\label{fig: non-crossing and crossing case}
\end{figure}

For each fixed $\theta$, the equilibrium renewal process is invariant under every real time translation~\cite{thorisson1992}. Attaching independent identically distributed amplitude vectors preserves this invariance, and the waveform map in Eq.~\eqref{eq:event} commutes with translations. The resulting field is locally square integrable and strictly stationary, so its retained windows have a common law. Section~\ref{app:source-checks} verifies the finite power and dependence bound required in Theorem~\ref{thm:lower}. We next establish a separating feature of the window law.

\subsection{Signal statistics and the learning-to-testing reduction}\label{app:window-testing}\label{sec:gap}\label{sec:testing-reduction}
We now show that a learner satisfying Eq.~\eqref{eq:lower-learning} can distinguish the two hypotheses. To do this, we first find a query for which the null and alternative characteristic functions differ by more than the estimation error. We then use the estimator to construct a hypothesis test and obtain a lower bound on the TVD of its measurement outcome distributions.

\medskip
\noindent\textit{Window characteristic function.} To find such a query, we calculate the window characteristic function by averaging over the pulse amplitudes and arrival times. By stationarity, we may take the target window to be $[0,T)$. For this fixed window, $\bm X_e\in\mathbb C^n$ records only the displacement captured from pulse $e$, including partial captures. It vanishes for starts outside $\mathcal R\coloneqq (-L_p,T)$. This interval has length less than $2T$, whereas successive starts are separated by at least $2T$, so at most one pulse contributes. Hence $\bm X=\sum_{e\in\mathbb Z}\bm X_e\in\mathbb C^n$ has at most one nonzero pulse contribution, and for every realization,
\begin{align}
 e^{\bm X^\dagger\beta-\beta^\dagger\bm X}
 =1+\sum_{e\in\mathbb Z}\left(e^{\bm X_e^\dagger\beta-\beta^\dagger\bm X_e}-1\right).
\end{align}
The baseline $1$ is the Fourier factor of an empty window. Each summand is its change due to one pulse and is zero for pulses outside the window.

We now express the captured contribution of a pulse in terms of its amplitudes. Consider pulse $e$ starting at time $\tau$ relative to the target window $[0,T)$. Depending on $\tau$, each signal component can lie inside one bin, cross a bin boundary, or extend outside the window. We write the resulting displacement as $\bm X_e=M_\tau\boldsymbol{\alpha}+N_\tau\boldsymbol{\zeta}$, where the two matrices describe the captured signal and background components. Specifically, $M_\tau$ is a real $n\times m$ matrix whose rows correspond to window bins $j=0,\ldots,n-1$ and columns to signal components $k=1,\ldots,m$. Its entry $(M_\tau)_{jk}$ is the fraction of component $k$ captured in bin $j$:
\begin{align}
 (M_\tau)_{jk}\coloneqq \frac{\bigl|[j\tc,(j+1)\tc)\cap[\tau+(k-1)\tc+w,\tau+(k-1)\tc+2w)\bigr|}{w}.
\end{align}
Here $|I|$ denotes the length of an interval. The background-capture matrix $N_\tau$ is a real $n\times(m+1)$ matrix. For each paired background amplitude $\zeta_k$, its entry is the captured fraction of the positive interval minus that of the negative interval. The last column records the captured fraction of $\zeta_F$.   Using the group starts $s_{e,k}$ with $\tau_e=\tau$, the paired columns ($k=1,\ldots,m$) and final column have entries
\begin{align}
   (N_\tau)_{jk}&\coloneqq \frac{\bigl|[j\tc,(j+1)\tc)\cap[s_{e,k},s_{e,k}+w)\bigr|-\bigl|[j\tc,(j+1)\tc)\cap[s_{e,k}+2w,s_{e,k}+3w)\bigr|}{w},
 \label{eq:background-capture-pairs}\\
 (N_\tau)_{j,m+1}&\coloneqq \frac{\bigl|[j\tc,(j+1)\tc)\cap[\tau+m\tc,\tau+m\tc+w)\bigr|}{w}.
 \label{eq:background-capture-final}
\end{align}
These entries follow from the background intervals in Eq.~\eqref{eq:event}. The capture matrices include both splitting at bin boundaries and truncation at the window edges.

We first average over the amplitudes with the pulse start $\tau$ fixed. Substituting $\bm X_e=M_\tau\boldsymbol{\alpha}+N_\tau\boldsymbol{\zeta}$ separates the Fourier exponent into signal and background terms. Since the amplitudes are independent, this average factorizes. Write $\lambda_\theta(z)\coloneqq \mathbb E_{\boldsymbol{\alpha}\sim p_\theta}[e^{\boldsymbol{\alpha}^\dagger z-z^\dagger\boldsymbol{\alpha}}]$ for the signal-amplitude characteristic function. The common Gaussian background has covariance $C_g$, giving
\begin{align}
 \mathbb E_{\boldsymbol{\alpha},\boldsymbol{\zeta}}[e^{\bm X_e^\dagger\beta-\beta^\dagger\bm X_e}]=\lambda_\theta(M_\tau^{\mathrm{T}}\beta)\Lambda_\tau(\beta),\qquad \Lambda_\tau(\beta)\coloneqq e^{-\beta^\dagger N_\tau C_gN_\tau^{\mathrm{T}}\beta}.
 \label{eq:fixed-arrival-fourier}
\end{align}
Here $M_\tau^{\mathrm{T}}\beta$ is the induced query on the signal amplitudes, and $\Lambda_\tau(\beta)$ is the background attenuation. Thus the amplitude-averaged change from one pulse is $\lambda_\theta(M_\tau^{\mathrm{T}}\beta)\Lambda_\tau(\beta)-1$.

We next average over pulse positions. For a stationary arrival process with rate $\rho$, the expected number of starts in a small interval $d\tau$ is $\rho d\tau$. Averaging a sum of pulse contributions therefore amounts to integrating the contribution at each possible start time against $\rho d\tau$. This is Campbell's formula~\cite{kallenberg}: for any integrable function $F$,
\begin{align}
 \mathbb E_{\{\tau_e\}}\!\left[\sum_{e\in\mathbb Z}F(\tau_e)\right]
 =\rho\int_{\mathbb R}F(\tau)d\tau.
 \label{eq:campbell-first-moment}
\end{align}
Applying this formula to the amplitude-averaged changes in the preceding sum gives
\begin{align}
 \chi_{T,\theta}(\beta)=1+\rho\int_{-L_p}^{T}
 \left\{\lambda_\theta(M_\tau^{\mathrm{T}}\beta)\Lambda_\tau(\beta)-1\right\}d\tau.
 \label{eq:campbell}
\end{align}
The integral is restricted to $(-L_p,T)$ because starts outside this interval contribute zero. It averages over all pulse positions relative to the fixed window, including complete and partial captures.

\medskip
\noindent\textit{Preserving the amplitude feature.} We first identify a feature that distinguishes the pulse-amplitude laws, then choose a window query that reads this feature when the pulse groups are fully captured. Fourier transforming the amplitude densities in Eq.~\eqref{eq:lower-marks} gives
\begin{align}
 \lambda_0(z)=e^{-\sigma_{\alpha}^2\abs{z}^2},\qquad
 \lambda_{s\eta}(z)-\lambda_0(z)
 =2is\ep\{\lambda_0(z-\eta)-\lambda_0(z+\eta)\}. \label{eq:markCF}
\end{align}
At $z=\eta$, the first Gaussian equals one and the second equals $e^{-4\sigma_\alpha^2|\eta|^2}$. We use this amplitude difference to distinguish the window laws.

To read this difference from the window, consider a pulse starting at $0<\tau<\tc-3w$. Its first group, of duration $3w$, lies wholly inside the first bin. Since the groups are spaced by $\tc$, all $m$ groups lie wholly inside the first $m$ bins. In each bin the $\zeta_k$ and $-\zeta_k$ contributions cancel, leaving $\alpha_k$. The final background component lies in the following bin. We therefore put $\eta$ on the first $m$ window coordinates and zero on the rest:
\begin{align}
 \beta_\eta\coloneqq (\eta_1,\ldots,\eta_m,0,\ldots,0),\qquad \abs{\beta_\eta}^2=\abs{\eta}^2.
 \label{eq:separating-query}
\end{align}
For these pulse positions, the window query reads exactly the amplitude feature:
\begin{align}
 \bm X_e^\dagger\beta_\eta-\beta_\eta^\dagger\bm X_e
 =\boldsymbol{\alpha}^\dagger\eta-\eta^\dagger\boldsymbol{\alpha}.
\end{align}
The query is fixed independently of the pulse positions; the equality holds throughout the arrival interval $\mathcal C\coloneqq (0,\tc-3w)$, whose length is
\begin{align}
 \lc\coloneqq \tc-3w=5\tc/8.
 \label{eq:clean}
\end{align}

To express this capture property in the notation used below, let $\mathcal J\coloneqq \{0,\ldots,m-1\}$ denote the first $m$ bins and $M_{\tau,\mathcal J}$ the corresponding rows of $M_\tau$. For any arrival time, the induced amplitude query is $M_\tau^{\mathrm T}\beta_\eta=M_{\tau,\mathcal J}^{\mathrm T}\eta$. On the interval $\mathcal C$, full capture and background cancellation give
\begin{align}
 M_{\tau,\mathcal J}=\Id,\qquad N_\tau^{\mathrm{T}}\beta_\eta=0,\qquad \Lambda_\tau(\beta_\eta)=1\quad(\tau\in\mathcal C).
 \label{eq:clean-capture}
\end{align}
Consequently, the positive Gaussian peak in Eq.~\eqref{eq:markCF} contributes one throughout $\mathcal C$, giving an arrival integral of at least $\lc$. We next bound the subtracted peak after averaging over all other pulse positions.

\medskip
\noindent\textit{Separation after averaging over arrivals.} Other pulse positions can only add to the nonnegative peak. We retain the positive contribution from complete groups and bound the subtracted peak. With $B_\tau\coloneqq \Id+M_{\tau,\mathcal J}^{\mathrm T}$, Eqs.~\eqref{eq:campbell}--\eqref{eq:markCF} give
\begin{align}
 \chi_{T,s\eta}(\beta_\eta)-\chi_{T,0}(\beta_\eta)
 &=2is\ep\rho\int_{\mathcal R}\Lambda_\tau(\beta_\eta)
 \left[e^{-\sigma_{\alpha}^2\abs{(M_{\tau,\mathcal J}^{\mathrm{T}}-\Id)\eta}^2}
 -e^{-\sigma_{\alpha}^2\abs{B_\tau\eta}^2}\right]d\tau,\\
 |\chi_{T,s\eta}(\beta_\eta)-\chi_{T,0}(\beta_\eta)|
 &\ge2\ep\rho[\lc-\mathcal E_-(\eta)],
 \label{eq:gapintegral}\end{align}
where $\mathcal E_-(\eta)\coloneqq \int_{\mathcal R}e^{-\sigma_{\alpha}^2\abs{B_\tau\eta}^2}d\tau$ bounds the total subtracted contribution: we have omitted the background attenuation using $0<\Lambda_\tau\le1$. The right-hand side is the preserved contribution minus an upper bound on this loss. Thus a detectable gap remains whenever $\mathcal E_-(\eta)$ is sufficiently small compared with $\lc$. Fix $0<c<1$ and collect the directions for which the loss is at most $c\lc$ and the query is allowed:
\begin{align}
 \mathcal G\coloneqq \{\eta:\abs{\eta}^2\le\kappa n,\quad\mathcal E_-(\eta)\le c\lc\}.
 \label{eq:good-query-directions}
\end{align} For every $\eta\in\mathcal G$, the loss is at most $c\lc$, so Eq.~\eqref{eq:gapintegral} ensures a positive feature gap. The factor $\rho\lc=\lc/(3T)$ is the stationary weight of the complete-capture interval and sets the inverse-$n$ feature scale.

\medskip
\noindent\textit{Probability of retaining the gap.}   We now lower-bound $\Prob_\eta(\mathcal G)$, the probability that a Gaussian direction gives an allowed query with a surviving feature gap. By the definition of $\mathcal G$,
\begin{align}
 \Prob_\eta(\mathcal G)\ge\Prob_\eta[\abs{\eta}^2\le\kappa n]-\Prob_\eta[\mathcal E_-(\eta)>c\lc].
\end{align}
For the first term, $\abs{\eta}^2/(2\kappa)$ is a sum of $m$ independent unit-mean exponential variables. Writing its distribution function as $F_m(x)\coloneqq1-e^{-x}\sum_{j=0}^{m-1}x^j/j!$, we obtain $\Prob_\eta[\abs{\eta}^2\le\kappa n]=F_m(n/2)$.

To bound the second term, we calculate the mean loss.
Index the selected bins and signal components locally by $i,k=1,\ldots,m$, and let $(S_d)_{ik}\coloneqq \mathbf1_{i-k=d}$. The equal spacing between signal components gives $M_{\tau,\mathcal J}=fS_d+(1-f)S_{d+1}$ for some integer $d$ and $f\in[0,1]$, including truncated captures at window boundaries. The matrix $B_\tau$ is triangular. Its diagonal is $1+f$ for $d=0$, $2-f$ for $d=-1$, and $1$ otherwise. In particular $|\det B_\tau|\ge1$. Gaussian integration therefore gives
\begin{align}
 \E_\eta[e^{-\sigma_{\alpha}^2\abs{B_\tau\eta}^2}]
 =\det(\Id+2\kappa\sigma_{\alpha}^2 B_\tau^\dagger B_\tau)^{-1}\le(1+2\kappa\sigma_{\alpha}^2)^{-m}. \label{eq:detgap}
\end{align}
To show the last inequality, let the positive eigenvalues of $B_\tau^\dagger B_\tau$ be $e^{t_k}$. The function $t\mapsto\log(1+2\kappa\sigma_{\alpha}^2 e^t)$ is increasing and convex, and $\sum_{k=1}^{m}t_k=\log|\det B_\tau|^2\ge0$.  Jensen's inequality proves the claim. Integrating over the arrival range of length $T+L_p$ gives
\begin{align}
 \E_\eta[\mathcal E_-(\eta)]\le(T+L_p)(1+2\kappa\sigma_{\alpha}^2)^{-m}.
\end{align}
Markov's inequality then gives
\begin{align}
 \Prob_\eta[\mathcal E_-(\eta)>c\lc]\le\frac{\E_\eta[\mathcal E_-(\eta)]}{c\lc}\le\frac{T+L_p}{c\lc}(1+2\kappa\sigma_{\alpha}^2)^{-m}.
\end{align}
Substituting the bounds for both terms yields the following result.

\medskip
\begin{proposition}[Separation of the signal features]\label{prop:gap}
Every direction $\eta\in\mathcal G$ gives an allowed query $\abs{\beta_\eta}^2\le\kappa n$ with
\begin{align}
 |\chi_{T,s\eta}(\beta_\eta)-\chi_{T,0}(\beta_\eta)|\ge 2\ep\rho\lc(1-c)\eqqcolon\gamma\qquad(s=\pm1).
 \label{eq:featuregap}
\end{align}
Moreover,
\begin{align}
 \Prob_\eta(\mathcal G)\ge F_m\!\left(\frac n2\right)-\frac{T+L_p}{c\lc}(1+2\kappa\sigma_{\alpha}^2)^{-m}\eqqcolon q_m.
 \label{eq:qgap-app}
\end{align}
\end{proposition}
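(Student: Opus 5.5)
The proof assembles the pieces derived just before the statement; I would write it in two parts, one for the gap bound and one for the probability bound.

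\emph{Gap bound.} Start from the Campbell representation, Eq.~\eqref{eq:campbell}, for both $\theta=s\eta$ and $\theta=0$ at the query $\beta_\eta$ of Eq.~\eqref{eq:separating-query}. The constant $1$ and the $-1$ terms cancel in the difference. The background factor $\Lambda_\tau(\beta_\eta)$ is common to both hypotheses, so the difference is $\rho\int_{\mathcal R}\Lambda_\tau\,[\lambda_{s\eta}-\lambda_0](M_{\tau,\mathcal J}^{\mathrm T}\eta)\,d\tau$. Inserting Eq.~\eqref{eq:markCF} gives the first line of Eq.~\eqref{eq:gapintegral}; there $\lambda_0(z\mp\eta)$ becomes the two Gaussians in $(M^{\mathrm T}_{\tau,\mathcal J}\mp\Id)\eta$, with $B_\tau=\Id+M^{\mathrm T}_{\tau,\mathcal J}$.

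The integrand is then split. The positive Gaussian term is nonnegative everywhere, so restricting it to $\mathcal C$ can only decrease it. On $\mathcal C$, Eq.~\eqref{eq:clean-capture} makes both $\Lambda_\tau$ and the positive Gaussian equal to $1$, so this part contributes at least $\lc$. For the subtracted term, use $0<\Lambda_\tau\le1$ to bound its integral by $\mathcal E_-(\eta)$. Since the whole integral is real, its modulus is at least $\lc-\mathcal E_-(\eta)$ when that quantity is positive. Membership $\eta\in\mathcal G$ gives $\mathcal E_-\le c\lc$, hence the bound $\ge 2\ep\rho\lc(1-c)=\gamma$. Membership in $\mathcal G$ also gives $\abs{\beta_\eta}^2=\abs\eta^2\le\kappa n$, so the query is allowed.

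\emph{Probability bound.} Apply a union bound: $\Prob(\mathcal G)\ge\Prob[\abs\eta^2\le\kappa n]-\Prob[\mathcal E_->c\lc]$. Because $\abs{\eta}^2/(2\kappa)$ is a Gamma$(m,1)$ variable, the first term equals $F_m(n/2)$. For the second term, exchange expectation and integral by Tonelli and evaluate $\E_\eta e^{-\sigma_\alpha^2|B_\tau\eta|^2}$ by Gaussian integration, which gives $\det(\Id+2\kappa\sigma_\alpha^2B_\tau^\dagger B_\tau)^{-1}$.

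\emph{Main obstacle.} The step needing care is the uniform determinant bound $\le(1+2\kappa\sigma_\alpha^2)^{-m}$ for every arrival time $\tau$, including truncated captures at the window edges. I would verify the structural claim $M_{\tau,\mathcal J}=fS_d+(1-f)S_{d+1}$ from equal spacing: each signal component of width $w$ is split between at most two adjacent bins with the same fraction $f$ for all $k$, and rows outside $\mathcal J$ are simply dropped. This structure makes $B_\tau$ triangular. Its diagonal entries are $1+f$ (when $d=0$), $2-f$ (when $d=-1$), or $1$ otherwise, so $|\det B_\tau|\ge1$.

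With the eigenvalues of $B_\tau^\dagger B_\tau$ written as $e^{t_k}$ and $\sum_k t_k\ge0$, note that $t\mapsto\log(1+a e^t)$ is convex and increasing. Jensen's inequality then gives $\sum_k\log(1+ae^{t_k})\ge m\log(1+ae^{\bar t})\ge m\log(1+a)$, with $a=2\kappa\sigma_\alpha^2$. One degenerate case must be handled: for $d$ with $|d|\ge m$ (or $d=-1$ with $f=1$), $M_{\tau,\mathcal J}$ vanishes or is a shift, and $B_\tau$ is still unit triangular, so the bound holds there too.

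Integrating over $\mathcal R$, whose length is $T+L_p$, bounds $\E_\eta\mathcal E_-$. Markov's inequality at threshold $c\lc$ then yields $q_m$ and completes Eq.~\eqref{eq:qgap-app}.
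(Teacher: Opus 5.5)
Your proposal is correct and follows the paper's proof essentially step for step. The gap bound uses the Campbell representation with the Fourier identity for the signed perturbation, restricts the nonnegative positive peak to the clean interval $\mathcal C$, and bounds the subtracted peak using $\Lambda_\tau\le 1$. The bound on $\Prob_\eta(\mathcal G)$ then follows from the Gamma tail, the Gaussian determinant, the triangular structure of $B_\tau$ with Jensen's inequality, and Markov's inequality. Your separate ``degenerate case'' is not needed, since the general diagonal formula for $B_\tau$ (entries $1+f$, $2-f$, or $1$) already covers it.
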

\medskip
\noindent\textit{A constant fraction of separating queries.} We now evaluate the probability bound in Proposition~\ref{prop:gap} for the parameters used in the main text. Take
\begin{align}
 m=n/2,\qquad \sigma_{\alpha}^2=2\kappa+\sqrt{1+4\kappa^2},\qquad c=\frac1{10}.
 \label{eq:lower-scale-choice}
\end{align}
The mean squared hidden-direction norm equals the query-ball radius. For $m\ge4$ and $\kappa\ge15/16$, Eq.~\eqref{eq:qgap-app} gives
\begin{align}
 q_m\ge\frac12-(48m+2)(17/2)^{-m}>\frac9{20}.
 \label{eq:main-query-probability}
\end{align}
Here $F_m(m)\ge1/2$: in its Poisson representation $F_m(m)=\Prob[\operatorname{Pois}(m)\ge m]$, the probability at $m+j$ is at least that at $m-1-j$ for $0\le j<m$, as follows by pairing factors around $m$ in the factorial ratio. Summing these pairs proves the claim. Also $2\kappa\sigma_{\alpha}^2\ge15/2$, and the subtracted term is largest at $m=4$, where it is less than $0.038$. Thus $\Prob_\eta(\mathcal G)>9/20$: at least 45\% of directions give an allowed query whose feature gap survives averaging over all pulse positions. We use this constant fraction to convert a learning guarantee into a testing requirement.

\medskip
\noindent\textit{From learning to testing.} Suppose a protocol satisfies Eq.~\eqref{eq:lower-learning} at accuracy $\eps<\gamma/2$. We use its estimate to distinguish the null signal from the equal mixture of the two signed alternatives. A referee draws $\eta\sim\CN(0,2\kappa\Id_m)$ and chooses either hypothesis with equal probability. Under the alternative, the sign $s=\pm1$ is sampled uniformly once and held fixed throughout acquisition. The referee reveals $\eta$ only after acquisition. Write $P_\theta^{\rm tr}$ for the distribution of the complete classical record during $[0,\tobs)$, including the protocol's classical randomness.

For a revealed direction $\eta\in\mathcal G$, we evaluate the query $\beta_\eta$ and accept the null if
\begin{align}
 |\hat\chi_T-\chi_{T,0}(\beta_\eta)|<\gamma/2.
\end{align}
For $\eta\notin\mathcal G$, we flip a fair coin. To see why this test succeeds, note that when the estimate is accurate, its distance from the null value is at most $\eps<\gamma/2$ under the null and at least $\gamma-\eps>\gamma/2$ under either alternative. Thus, for every $\eta\in\mathcal G$, the test succeeds with probability at least $2/3$ under either hypothesis. Since the test succeeds with probability $1/2$ outside $\mathcal G$ and $\Prob_\eta(\mathcal G)\ge q_m$, its overall success probability is at least $1/2+q_m/6$.

Since the referee reveals $\eta$ after acquisition, the test receives the pair $(\eta,o)$, where $o$ is the complete outcome sequence. Let $g_\kappa(\eta)$ be the Gaussian density in Eq.~\eqref{eq:lower-parameters}. Writing outcome probabilities for discrete outcomes, the joint data distributions under hypotheses (i) and (ii) are
\begin{align}
 \mathsf P_0(\eta,o)\coloneqq g_\kappa(\eta)P_0^{\rm tr}(o),\qquad
 \mathsf P_1(\eta,o)\coloneqq g_\kappa(\eta)\frac{P_{+\eta}^{\rm tr}(o)+P_{-\eta}^{\rm tr}(o)}2.
 \label{eq:testing-joint-laws}
\end{align}
These expressions are densities in $\eta$ and probabilities in $o$; the corresponding conditional-distribution identity also holds for continuous readouts, treated in Sec.~\ref{app:outcomes}. Both hypotheses have the same direction density $g_\kappa$, so integrating the absolute difference over $\eta$ and summing over $o$ gives the joint-data TVD below. Since equal-prior binary testing has optimal success probability $\tfrac12+\tfrac12\TV(\mathsf P_0,\mathsf P_1)$, the preceding test implies
\begin{align}
 \TV(\mathsf P_0,\mathsf P_1)
 =\E_\eta\left[\TV\left(P_0^{\rm tr},\frac{P_{+\eta}^{\rm tr}+P_{-\eta}^{\rm tr}}2\right)\right]
 \ge\frac{q_m}{3}.
 \label{eq:TVlower}
\end{align}
For each fixed $(s,\eta)$, the test uses only the query $\beta_\eta$, independently of the realized outcomes. The pointwise guarantee in Eq.~\eqref{eq:learning} therefore suffices.

Equation~\eqref{eq:TVlower} completes the reduction: learning requires an average TVD of at least $q_m/3$. We next bound how much each encountered pulse can contribute to this same quantity under local readout.

\subsection{From record distinguishability to single-pulse responses}\label{app:overlap}\label{sec:geometry}
We now upper-bound the TVD achievable by sequentially prepared local idler-free probes and compare it with the learning requirement in Eq.~\eqref{eq:TVlower}. We first express the TVD through the alternative-to-null likelihood ratio. For a fixed hidden direction $\eta$ and complete outcome sequence $o$,
\begin{align}
 \TV\left(P_0^{\rm tr},\frac{P_{+\eta}^{\rm tr}+P_{-\eta}^{\rm tr}}2\right)
 =\frac12\E_{o\sim P_0^{\rm tr}}\left[\left|\frac{P_{+\eta}^{\rm tr}(o)+P_{-\eta}^{\rm tr}(o)}{2P_0^{\rm tr}(o)}-1\right|\right].
 \label{eq:record-tvd-likelihood}
\end{align}
Write the signed-mixture likelihood ratio as $L_\eta(o)\coloneqq [P_{+\eta}^{\rm tr}(o)+P_{-\eta}^{\rm tr}(o)]/[2P_0^{\rm tr}(o)]$. Normalization gives $\E_{o\sim P_0^{\rm tr}}[L_\eta(o)]=1$ and hence $\E_{o\sim P_0^{\rm tr}}[L_\eta(o)-1]=0$. The positive and negative parts therefore have equal null averages, giving, with $(x)_+\coloneqq \max(x,0)$,
\begin{align}
 \TV\!\left(P_0^{\rm tr},\frac{P_{+\eta}^{\rm tr}+P_{-\eta}^{\rm tr}}2\right)
 =\frac12\E_{o\sim P_0^{\rm tr}}[|L_\eta(o)-1|]
 =\E_{o\sim P_0^{\rm tr}}[(1-L_\eta(o))_+].
 \label{eq:record-positive-deficit}
\end{align}
Our goal is therefore a lower bound on the full-record ratio $L_\eta(o)$. We derive this bound first for sequential preparation, following Ref.~\cite{oh}, Supplemental Material, Sec. S3 A. We allow the protocol to know the pulse schedule: the schedule has the same law under both hypotheses, and revealing it can only increase the TVD. Fix the schedule and classical randomness, and let $M$ count the pulses met during acquisition. Below, $L_\eta(o\mid\mathrm{schedule})$ denotes the corresponding conditional ratio, for which the same positive-deficit identity holds. We average over the random schedule and classical randomness in Sec.~\ref{app:sequential}.

For a fixed complete outcome sequence $o$, all causal probe and readout choices are fixed. Write $o_{<j}\coloneqq (o_0,\ldots,o_{j-1})$ for the preceding outcomes, $\rho_j(o_{<j})$ for the prepared probe state, and $E_j(o_j\mid o_{<j})$ for the local outcome effect. Their outcome-history dependence includes classical feedback. Let $B_e$ contain the observed bins met by pulse $e$. The end-to-start gap between successive pulses is at least $2T-L_p>\tc$, so the sets $B_e$ are disjoint. In a bin $j\in B_e$, the captured displacement $X_j(\boldsymbol{\alpha}_e,\boldsymbol{\zeta}_e)$ therefore depends only on the signal and background amplitudes of pulse $e$.

If all pulse amplitudes are also fixed, sequential preparation and local readout give the record probability by the chain rule and the binwise Born rule:
\begin{align}
 P^{\rm tr}\left(o\mid\mathrm{schedule},\{\boldsymbol{\alpha}_e,\boldsymbol{\zeta}_e\}_{e=1}^{M}\right)
 =C(o)\prod_{e=1}^{M}\prod_{j\in B_e}
 \Tr\{E_j(o_j\mid o_{<j})\hat D(X_j(\boldsymbol{\alpha}_e,\boldsymbol{\zeta}_e))\rho_j(o_{<j})\hat D^\dagger(X_j(\boldsymbol{\alpha}_e,\boldsymbol{\zeta}_e))\}.
 \label{eq:record-fixed-amplitudes}
\end{align}
Here $C(o)\coloneqq\prod_{j\notin\bigcup_e B_e}\Tr[E_j(o_j\mid o_{<j})\rho_j(o_{<j})]$ contains the bins without a pulse. For this fixed branch, it is independent of the pulse amplitudes and the signal hypothesis. Although the outcomes can be correlated by feedback, fixing $o$ fixes every state and effect in this expression; no independence of the outcomes is assumed.

For each pulse, average its binwise product over its background amplitudes and define
\begin{align}
 \ell_e(\boldsymbol{\alpha};o)\coloneqq \E_{\boldsymbol{\zeta}}\left[\prod_{j\in B_e}
 \Tr\{E_j(o_j\mid o_{<j})\hat D(X_j(\boldsymbol{\alpha},\boldsymbol{\zeta}))\rho_j(o_{<j})\hat D^\dagger(X_j(\boldsymbol{\alpha},\boldsymbol{\zeta}))\}\right]. \label{eq:eventlik}
\end{align}
The background average is outside the product because the same background amplitudes contribute to several bins of one pulse. For pulses partially captured at the acquisition boundaries, we complete the unobserved bins with vacuum inputs and identity effects; each added Born factor is one, so the likelihood is unchanged.

Under a fixed signal hypothesis $\theta$, the signal and background amplitudes are independent between pulses. Averaging Eq.~\eqref{eq:record-fixed-amplitudes} first over the backgrounds and then over the signal amplitudes therefore gives
\begin{align}
 P_\theta^{\rm tr}(o\mid\mathrm{schedule})
 =C(o)\E_{\boldsymbol{\alpha}_1,\ldots,\boldsymbol{\alpha}_M\sim p_\theta^{\otimes M}}\left[\prod_{e=1}^{M}\ell_e(\boldsymbol{\alpha}_e;o)\right]
 =C(o)\prod_{e=1}^{M}\E_{\boldsymbol{\alpha}\sim p_\theta}[\ell_e(\boldsymbol{\alpha};o)].
 \label{eq:record-pulse-factorization}
\end{align}
Dividing the expression for $\theta=s\eta$ by that for $\theta=0$ cancels $C(o)$ and leaves the product of the single-pulse ratios. The sign $s$ is chosen once for the entire acquisition, so averaging the two signs is performed outside this product:
\begin{align}
 L_\eta(o\mid\mathrm{schedule})
 =\frac12\sum_{s=\pm1}\prod_{e=1}^{M}
 \frac{\E_{\boldsymbol{\alpha}\sim p_{s\eta}}[\ell_e(\boldsymbol{\alpha};o)]}
 {\E_{\boldsymbol{\alpha}\sim p_0}[\ell_e(\boldsymbol{\alpha};o)]}.
 \label{eq:record-pulse-likelihoods}
\end{align}
We now bound the single-pulse ratios in Eq.~\eqref{eq:record-pulse-likelihoods}.

\medskip
\noindent\textit{Single-pulse likelihood ratio.} Initially assume bounded outcome effects or densities; general outcomes are treated in Sec.~\ref{app:outcomes}. For the fixed pulse and outcome sequence, set $\rho\coloneqq \bigotimes_{j\in B_e}\rho_j(o_{<j})$ and $E_o\coloneqq \bigotimes_{j\in B_e}E_j(o_j\mid o_{<j})$. Collecting the captured displacements into $\bm X_e$ writes Eq.~\eqref{eq:eventlik} as
\begin{align}
 \ell_e(\boldsymbol{\alpha};o)
 =\E_{\boldsymbol{\zeta}}\left[\Tr\left[E_o\hat D(\bm X_e)\rho\hat D^\dagger(\bm X_e)\right]\right],
 \label{eq:pulse-outcome-likelihood}
\end{align}
Here $\bm X_e=M_\tau\boldsymbol{\alpha}+N_\tau\boldsymbol{\zeta}$ is the same captured pulse displacement as in Sec.~\ref{app:window-testing}, now taken over the time bins met by the pulse. The capture matrices have one row for each of these bins. The fixed arrival time and protocol choices are suppressed in the likelihood notation. For the fixed pulse and branch, abbreviate $\ell_e(\cdot;o)$ by $\ell$. Since the two signals differ only in $p_0$ versus $p_{s\eta}$, substituting the amplitude perturbation in Eq.~\eqref{eq:lower-marks} gives the single-pulse probability ratio
\begin{align}
 \frac{\E_{\boldsymbol{\alpha}\sim p_{s\eta}}[\ell(\boldsymbol{\alpha})]}{\E_{\boldsymbol{\alpha}\sim p_0}[\ell(\boldsymbol{\alpha})]}
 =1+4s\ep\frac{\E_{\boldsymbol{\alpha}\sim p_0}[\sin(2\Im\boldsymbol{\alpha}^\dagger\eta)\ell(\boldsymbol{\alpha})]}{\E_{\boldsymbol{\alpha}\sim p_0}[\ell(\boldsymbol{\alpha})]}.
 \label{eq:pulse-likelihood-perturbation}
\end{align}
The ratio therefore depends on the sine feature conditioned on the outcome. With null weight $\mathcal Z[\ell]\coloneqq \E_{\boldsymbol{\alpha}\sim p_0}[\ell(\boldsymbol{\alpha})]>0$, define its Fourier response by
\begin{align}
 G_\eta[\ell]\coloneqq \frac{\E_{\boldsymbol{\alpha}\sim p_0}[e^{\boldsymbol{\alpha}^\dagger\eta-\eta^\dagger\boldsymbol{\alpha}}\ell(\boldsymbol{\alpha})]}{\mathcal Z[\ell]}.
 \label{eq:G}
\end{align}
This is the characteristic function of the pulse amplitudes conditioned on the chosen outcome under the null signal. Its imaginary part is the conditional sine average in Eq.~\eqref{eq:pulse-likelihood-perturbation}, giving the likelihood ratio $1+4s\ep\Im G_\eta[\ell]$. 

\medskip
\noindent\textit{From pulse ratios to the full-record deficit.} Substituting the single-pulse ratio $1+4s\ep\Im G_\eta[\ell_e]$ into Eq.~\eqref{eq:record-pulse-likelihoods} gives
\begin{align}
 L_\eta(o\mid\mathrm{schedule})
 =\frac12\sum_{s=\pm1}\prod_{e=1}^{M}(1+4s\ep\Im G_\eta[\ell_e]).
 \label{eq:scalar-signed-ratio}
\end{align}
Since $|\Im G_\eta[\ell_e]|\le1$ and $\ep\le1/4$, every factor is nonnegative. The arithmetic--geometric mean inequality and $a_{\ep}\coloneqq 1-\sqrt{1-16\ep^2}$ give
\begin{align}
 L_\eta(o\mid\mathrm{schedule})
 \ge\prod_{e=1}^{M}\sqrt{1-16\ep^2(\Im G_\eta[\ell_e])^2}
 \ge1-a_{\ep}\sum_{e=1}^{M}(\Im G_\eta[\ell_e])^2.
 \label{eq:scalar-product-bound}
\end{align}
For the last step, use $\sqrt{1-16\ep^2x}\ge1-a_{\ep}x$ for $0\le x\le1$, followed by $\prod_e(1-x_e)\ge1-\sum_e x_e$. Consequently,
\begin{align}
 (1-L_\eta(o\mid\mathrm{schedule}))_+
 \le a_{\ep}\sum_{e=1}^{M}(\Im G_\eta[\ell_e])^2.
\end{align}
For a fixed pulse and outcome branch, abbreviate $G_\eta[\ell_e]$ by $G_\eta$ and let $\pi_o(\boldsymbol{\alpha})\coloneqq p_0(\boldsymbol{\alpha})\ell_e(\boldsymbol{\alpha};o)/\mathcal Z[\ell_e]$ be the null amplitude distribution conditioned on the outcome. This distribution is independent of $\eta$. Gaussian averaging gives
\begin{align}
 \E_\eta[G_\eta^2]
 =\iint\pi_o(\boldsymbol{\alpha})\pi_o(\boldsymbol{\alpha}')e^{-2\kappa\abs{\boldsymbol{\alpha}+\boldsymbol{\alpha}'}^2}\,d\boldsymbol{\alpha}\,d\boldsymbol{\alpha}'\ge0.
\end{align}
Consequently,
\begin{align}
 \E_\eta[(\Im G_\eta)^2]
 =\frac12\left(\E_\eta[|G_\eta|^2]-\Re\E_\eta[G_\eta^2]\right)
 \le\frac12\E_\eta[|G_\eta|^2].
 \label{eq:sine-overlap}
\end{align}
Taking the null outcome average and using Eqs.~\eqref{eq:record-positive-deficit} and \eqref{eq:sine-overlap} therefore gives
\begin{align}
 \E_\eta\!\left[\TV\!\left(P_0^{\rm tr},\frac{P_{+\eta}^{\rm tr}+P_{-\eta}^{\rm tr}}2\right)\right]
 \le\frac{a_{\ep}}2\E_{\eta,o}\!\left[\sum_{e=1}^{M}|G_\eta[\ell_e]|^2\right].
 \label{eq:scalar-record-deficit}
\end{align}
Here the outcome laws are conditioned on the fixed schedule, and $o$ is averaged under the null, independently of $\eta$. No independence of the pulse responses along the branch is required. We now bound each pulse contribution in the sum. Its fixed arrival time determines whether its signal components lie within individual measurement bins of duration $\tc$ or cross their boundaries. Since the components start $\tc$ apart, they all have the same alignment within a pulse, while different pulses may have different alignments. We use the displacement-overlap bound for noncrossing pulses and the Gaussian-noise bound for crossing pulses. Both give the same bound on $\E_\eta[|G_\eta[\ell_e]|^2]$, uniformly over arrival times and outcome histories.

\medskip
\noindent\textit{Displacement-overlap bound.} We use the following multimode form of the displacement-overlap estimate in Ref.~\cite{oh}, Supplemental Material, Lemma S1. Assume
\begin{align}
 0<\kappa\le\frac{\sigma_{\alpha}^4-1}{4\sigma_{\alpha}^2},\qquad
 b_\kappa\coloneqq 1+\frac{4\kappa\sigma_{\alpha}^2}{1+\sigma_{\alpha}^2}.
 \label{eq:overlap-range}
\end{align}
Here $b_\kappa$ is the inverse one-mode response bound.

For a positive mixture $\ell=\sum_z w_z\ell_z$, the normalized response is $G_\eta[\ell]=\sum_z\pi_zG_\eta[\ell_z]$, where $\pi_z\coloneqq w_z\mathcal Z[\ell_z]/\mathcal Z[\ell]$ and $\sum_z\pi_z=1$, omitting components of zero null weight. If the component weights are independent of $\boldsymbol{\alpha}$ and $\eta$, these null weights are independent of $\eta$, so convexity gives
\begin{align}
 \E_\eta[|G_\eta[\ell]|^2]\le\sum_z\pi_z\E_\eta[|G_\eta[\ell_z]|^2].
 \label{eq:mixture-second-moment}
\end{align}
The same bound holds for continuous mixtures, with sums replaced by integrals. 

\begin{lemma}[Multimode displacement overlap]\label{lem:overlap}
For any $m$-mode density operator $\rho$ and any bounded $E\ge0$, let $g(z)\coloneqq \Tr[E \hat D(z)\rho \hat D^\dagger(z)]$, with $z\in\mathbb C^m$. Under Eq.~\eqref{eq:overlap-range}, if $\mathcal Z[g]>0$, then
\begin{align}
 \E_{\eta\sim\CN(0,2\kappa\Id_m)}[\abs{G_\eta[g]}^2]\le b_\kappa^{-m}. \label{eq:overlap}
\end{align}
Both the input and the effect may be entangled among the $m$ modes.
\end{lemma}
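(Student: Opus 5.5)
The plan is to reduce the left-hand side of Eq.~\eqref{eq:overlap} to a single Gaussian-weighted quadratic form in $g$, and then to bound that form by rewriting it as a correlated displacement channel acting on two copies of the pair $(\rho,E)$. The first step averages over $\eta$ before anything else. For $\eta\sim\CN(0,2\kappa\Id_m)$ one has $\E_\eta[e^{(\boldsymbol{\alpha}-\boldsymbol{\alpha}')^\dagger\eta-\eta^\dagger(\boldsymbol{\alpha}-\boldsymbol{\alpha}')}]=e^{-2\kappa\abs{\boldsymbol{\alpha}-\boldsymbol{\alpha}'}^2}$. Writing $\abs{G_\eta[g]}^2$ as a double integral over $\boldsymbol{\alpha},\boldsymbol{\alpha}'$ then gives
\begin{align}
 \E_\eta[\abs{G_\eta[g]}^2]=\frac{\iint p_0(\boldsymbol{\alpha})p_0(\boldsymbol{\alpha}')e^{-2\kappa\abs{\boldsymbol{\alpha}-\boldsymbol{\alpha}'}^2}g(\boldsymbol{\alpha})g(\boldsymbol{\alpha}')\,d\boldsymbol{\alpha}\,d\boldsymbol{\alpha}'}{\mathcal Z[g]^2}.
\end{align}
Next I change variables to $u\coloneqq(\boldsymbol{\alpha}+\boldsymbol{\alpha}')/\sqrt2$ and $v\coloneqq(\boldsymbol{\alpha}-\boldsymbol{\alpha}')/\sqrt2$. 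The Gaussian weight factorizes into a $u$-part of variance $\sigma_{\alpha}^2$ and a $v$-part of variance $\sigma_v^2\coloneqq(\sigma_{\alpha}^{-2}+4\kappa)^{-1}$, up to the explicit prefactor $(\sigma_v^2/\sigma_{\alpha}^2)^m$. The numerator and the squared denominator thus differ only in that the $v$-noise is narrowed from $\sigma_{\alpha}^2$ to $\sigma_v^2$.

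The second step makes this comparison at the operator level. I write $g(\boldsymbol{\alpha})g(\boldsymbol{\alpha}')=\Tr[(E\otimes E)\,\hat D(\boldsymbol{\alpha})\otimes\hat D(\boldsymbol{\alpha}')(\rho\otimes\rho)\hat D^\dagger(\boldsymbol{\alpha})\otimes\hat D^\dagger(\boldsymbol{\alpha}')]$. A balanced beam splitter $U$ acting on each mode pair converts $\hat D(\boldsymbol{\alpha})\otimes\hat D(\boldsymbol{\alpha}')$ into $\hat D(u)\otimes\hat D(v)$ up to conjugation by $U$. Both the numerator and $\mathcal Z[g]^2$ then become $\Tr[\tilde E\,(\mathcal N^u_{\sigma_{\alpha}^2}\otimes\mathcal N^v_{s})(\tilde\rho)]$, with $\tilde\rho=U(\rho\otimes\rho)U^\dagger\ge0$, $\tilde E=U(E\otimes E)U^\dagger\ge0$, and $\mathcal N_s$ the Gaussian random-displacement channel of variance $s$. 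The only difference is $s=\sigma_v^2$ in the numerator versus $s=\sigma_{\alpha}^2$ in the denominator. The hypothesis in Eq.~\eqref{eq:overlap-range} is equivalent to $\sigma_v^2\ge\sigma_{\alpha}^{-2}$, so the narrower channel still has variance at least $\sigma_{\alpha}^{-2}$.

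The remaining task, which I expect to be the main obstacle, is an inequality of the form
\begin{align}
 \Tr[\tilde E\,\mathcal N^v_{\sigma_v^2}(\tilde\rho')]\le C\,\Tr[\tilde E\,\mathcal N^v_{\sigma_{\alpha}^2}(\tilde\rho')]
\end{align}
with $C(\sigma_v^2/\sigma_{\alpha}^2)^m=b_\kappa^{-m}$. This must hold for arbitrary positive $\tilde\rho'$ and $\tilde E$, including entangled ones, with $\tilde\rho'\coloneqq\mathcal N^u_{\sigma_{\alpha}^2}(\tilde\rho)$. A narrower Gaussian is not pointwise dominated by a wider one, so positivity of $\tilde E$ and $\tilde\rho'$ must be used. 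I would follow Lemma~S1 of Ref.~\cite{oh}. Split the wider noise as $\sigma_{\alpha}^2=\sigma_v^2+(\sigma_{\alpha}^2-\sigma_v^2)$, and expand the $u$-mode noise of variance $\sigma_{\alpha}^2\ge1$ into coherent-state (Husimi/Glauber) form; this is what the lower bound on the variances is for. The traces then become Gaussian integrals of nonnegative Husimi functions against Gaussian kernels, where one can hope for a direct comparison. The target constant is fixed by the single-mode extremal case: a coherent input with a coherent-state effect gives $g(z)\propto e^{-\abs{z-z_0}^2}$, so the conditional law of $\boldsymbol{\alpha}$ is Gaussian with variance $\sigma_{\alpha}^2/(1+\sigma_{\alpha}^2)$. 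A direct computation then gives $\E_\eta\abs{G_\eta}^2=(1+4\kappa\sigma_{\alpha}^2/(1+\sigma_{\alpha}^2))^{-1}=b_\kappa^{-1}$ per mode, which I would use as a check on every constant.

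The last step extends the bound from this extremal case to arbitrary $\rho$ and $E$. I would decompose $\rho$ and $E$ through their coherent-state representations after absorbing unit-variance noise, so that $g$ becomes a positive mixture of heterodyne-type Gaussian likelihoods with weights independent of $\boldsymbol{\alpha}$ and $\eta$. The mixture convexity bound in Eq.~\eqref{eq:mixture-second-moment} then reduces the claim to the explicit Gaussian computation above, which factorizes over the $m$ modes and yields $b_\kappa^{-m}$. Entanglement among the $m$ modes enters only through the mixture weights, so it cannot affect the bound.
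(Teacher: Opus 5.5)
\textbf{There is a genuine gap: your reduction to coherent states is invalid, and it never uses Eq.~\eqref{eq:overlap-range}.}

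Your first two steps are correct.
\begin{itemize}
\item The $\eta$-average gives the kernel $e^{-2\kappa|\boldsymbol\alpha-\boldsymbol\alpha'|^2}$.
\item The $(u,v)$ rotation and the beam-splitter rewriting are valid.
\item Eq.~\eqref{eq:overlap-range} is indeed equivalent to $\sigma_v^2\sigma_\alpha^2\ge1$.
\end{itemize}

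The argument then rests on two steps that do not work. First, the noise-comparison inequality is the entire content of the lemma, and you leave it at ``one can hope for a direct comparison.'' Second, your final reduction to coherent states is wrong. In Lemma~\ref{lem:overlap} the displacement acts with no added noise, so there is no unit-variance noise to ``absorb.'' That device belongs to Lemma~\ref{lem:noise}, where the variance-two background supplies it. For nonclassical $\rho$ or $E$, such as squeezed states, there is no nonnegative coherent-state ($P$-function) decomposition. Correspondingly, $g(z)=\Tr[E\hat D(z)\rho\hat D^\dagger(z)]$ is in general not a positive mixture of terms $e^{-|z-z_0|^2}$: it can be narrower than the vacuum width in one direction. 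Your claim that mode entanglement ``enters only through the mixture weights'' relies on the same invalid decomposition.

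The reduction proves too much, because the bound fails without Eq.~\eqref{eq:overlap-range}. Take $m=1$ and $\rho=E$ equal to the squeezed vacuum with $\Var(\hat x)=e^{-2r}$. Then $g(z)=e^{-e^{2r}(\Re z)^2-e^{-2r}(\Im z)^2}$ and the posterior is Gaussian. One finds $\E_\eta|G_\eta|^2=[(1+4\kappa/c_R)(1+4\kappa/c_I)]^{-1/2}$, with $c_R=\sigma_\alpha^{-2}+e^{2r}$ and $c_I=\sigma_\alpha^{-2}+e^{-2r}$. As $r\to\infty$ this tends to $(1+4\kappa\sigma_\alpha^2)^{-1/2}$. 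That limit exceeds $b_\kappa^{-1}$ exactly when $\kappa>(\sigma_\alpha^4-1)/(4\sigma_\alpha^2)$, and it equals $b_\kappa^{-1}$ on the boundary, which is the main-text choice $\sigma_\alpha^2=2\kappa+\sqrt{1+4\kappa^2}$. So coherent states are not extremal in general. The hypothesis is what controls squeezed probes, and it has to enter an argument that covers arbitrary pure inputs and effects.

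The valid reduction, which is the paper's, goes the other way:
\begin{itemize}
\item Write $\rho=\sum_ip_i|\psi_i\rangle\langle\psi_i|$ and $E=\sum_j|\phi_j\rangle\langle\phi_j|$.
\item Then $g=\sum_{i,j}p_i|\langle\phi_j|\hat D(z)|\psi_i\rangle|^2$ is a positive mixture with weights independent of $z$ and $\eta$, so Eq.~\eqref{eq:mixture-second-moment} applies.
\item It remains to prove the bound for an arbitrary pure input and rank-one effect, which may be squeezed and entangled across the $m$ modes.
\end{itemize}
The paper takes that last case from Ref.~\cite{oh}, Lemma S1 and Eq.~(S170), with $2\sigma^2=\sigma_\alpha^{-2}$ and $\sigma_\gamma^2=\kappa$. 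To complete your route, you would have to establish your noise-comparison inequality for all such pure states, not only for coherent ones.
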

\begin{proof}
For pure input and rank-one effect, this is Ref.~\cite{oh}, Supplemental Material, Lemma S1 and Eq.~(S170), with $2\sigma^2=\sigma_\alpha^{-2}$ and $\sigma_\gamma^2=\kappa$. The general case follows by positive decompositions of $\rho$ and $E$ and convexity in Eq.~\eqref{eq:mixture-second-moment}.
\end{proof}

\medskip
\noindent\textit{Gaussian-noise bound.} Crossing pulses spread the signal amplitudes across extra measured modes. The variance-two background in Eq.~\eqref{eq:noiseidentity} makes their likelihood a positive Gaussian mixture, giving a bound that depends on the signal-amplitude dimension.

\begin{lemma}[Gaussian displacement noise]\label{lem:noise}
Suppose a $d$-mode block is displaced by $H\boldsymbol{\alpha}+z$, with $z\sim\CN(0,2\Id_d)$ independent of $\boldsymbol{\alpha}\in\mathbb C^m$. Let $\lambda_k$ be the eigenvalues of $H^\dagger H$. For any input state, including entanglement with undisplaced auxiliary systems, and any bounded positive joint effect, its likelihood $\ell(\boldsymbol{\alpha})$ satisfies
\begin{align}
 \E_\eta[|G_\eta[\ell]|^2]
 \le\prod_{k=1}^m\frac{1+\sigma_{\alpha}^2\lambda_k}{1+\sigma_{\alpha}^2\lambda_k+4\kappa\sigma_{\alpha}^2}\le b_\kappa^{-m}
 \quad\text{if }\norm H\le1. \label{eq:noisyoverlap}
\end{align}
\end{lemma}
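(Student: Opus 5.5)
The plan is to absorb the Gaussian noise into the input state, turning the likelihood into a positive Gaussian mixture of displacement likelihoods. Then I will compute the second moment of the response directly with Gaussian integrals.

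\textbf{Step 1: rewrite the likelihood.} For fixed $\boldsymbol{\alpha}$, average over $z\sim\CN(0,2\Id_d)$. This gives
\begin{align}
 \ell(\boldsymbol{\alpha})=\Tr\bigl[E\,\hat D(H\boldsymbol{\alpha})\,\mathcal N(\rho)\,\hat D^\dagger(H\boldsymbol{\alpha})\bigr],
\end{align}
where $\mathcal N$ is the classical additive Gaussian noise channel of variance $2$ on the $d$ displaced modes. Auxiliary undisplaced systems simply ride along and are absorbed into $E$ after a partial trace against $\rho$.

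The key observation is that variance-$2$ noise is large enough for the noisy state to be positive-$P$ representable. Its Glauber $P$ function is the convolution of the Husimi function (variance $1$) with a $\CN(0,1)$ kernel; more concretely, $\mathcal N$ factors as measure-and-prepare. For any $\rho$ (including the auxiliary part), $\mathcal N(\rho)=\int d\mu(\gamma)\,\sigma_\gamma\otimes\ket{\gamma}\bra{\gamma}$. Here $\ket{\gamma}$ are $d$-mode coherent states with $\gamma=u+w$. The variable $u$ is the heterodyne outcome, which accounts for variance $1$, and $w\sim\CN(0,1)$ is an extra shift. The $\sigma_\gamma$ are conditional auxiliary states.

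Consequently $\ell(\boldsymbol{\alpha})=\int d\mu(\gamma)\,c_\gamma\,\abs{\bra{\phi_\gamma}\ldots}$. More usefully, each component is $\Tr[E_\gamma\ket{\gamma+H\boldsymbol{\alpha}}\bra{\gamma+H\boldsymbol{\alpha}}]$, which is a positive combination of terms $e^{-\abs{H\boldsymbol{\alpha}+\gamma-v}^2}$ once $E_\gamma$ is decomposed via its $Q$-type representation. The weights are independent of $\boldsymbol{\alpha}$ and $\eta$, so the mixture convexity \eqref{eq:mixture-second-moment} reduces the claim to a single Gaussian likelihood $\ell(\boldsymbol{\alpha})\propto e^{-\abs{H\boldsymbol{\alpha}-c}^2/s}$ with $s\ge1$. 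The variance-$2$ noise guarantees at least unit width: one unit goes to the coherent-state overlap and one to the extra shift, and I will track this bookkeeping to get width exactly $s\ge 1$ in the worst case.

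\textbf{Step 2: Gaussian computation.} With $p_0=\CN(0,\sigma_\alpha^2\Id_m)$ and a Gaussian likelihood, $\pi_o$ is Gaussian with covariance $\Sigma=(\sigma_\alpha^{-2}\Id+H^\dagger H/s)^{-1}$ and some mean $\mu$. Then $\abs{G_\eta}^2=e^{-2\eta^\dagger\Sigma\eta}$, independent of $\mu$. Averaging over $\eta\sim\CN(0,2\kappa\Id_m)$ gives $\det(\Id+4\kappa\Sigma)^{-1}$. Diagonalizing $H^\dagger H$, each factor is
\begin{align}
 \Bigl(1+\frac{4\kappa\sigma_\alpha^2}{1+\sigma_\alpha^2\lambda_k/s}\Bigr)^{-1}.
\end{align}
This is increasing in $s$ and hence is worst at the largest effective width. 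So I must control the width from above, not below. This is the real subtlety: the bound in \eqref{eq:noisyoverlap} corresponds to $s=1$.

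I therefore will re-organize Step 1 the opposite way, writing $\ell$ as a mixture of likelihoods of width at most $1$. To do this, I decompose $E$ (and its auxiliary partner) by coherent states with positive $Q$ weights, $E=\int Q_E(v)\ket{v}\bra{v}dv/\pi^d$, which is valid for bounded positive effects. The $z$-noise of variance $2$ is then split as: variance $1$ to turn $\rho$ into its $Q$-mixture of coherent-state displacements, plus the Gaussian overlap $\abs{\langle v|\gamma\rangle}^2=e^{-\abs{v-\gamma}^2}$. Combining these gives exactly width $s=1$ components in the variable $H\boldsymbol{\alpha}$. Handling this decomposition with auxiliary entanglement, by conditioning on the auxiliary part of a product effect decomposition, is the step I expect to be the main obstacle.

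\textbf{Step 3: the final inequality.} With $s=1$, the product formula in \eqref{eq:noisyoverlap} follows. For $\norm H\le1$ we have $\lambda_k\le1$, and each factor is increasing in $\lambda_k$. Setting $\lambda_k=1$ gives
\begin{align}
 \frac{1+\sigma_\alpha^2}{1+\sigma_\alpha^2+4\kappa\sigma_\alpha^2}=b_\kappa^{-1},
\end{align}
which completes the proof.
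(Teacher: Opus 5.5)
There is a genuine gap in the step that carries the lemma: writing $\ell$ as a positive mixture of unit-width Gaussians in $H\boldsymbol{\alpha}$, with weights independent of $\boldsymbol{\alpha}$ and $\eta$.

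\textbf{The decomposition of $E$ is false.} Your identity $E=\int Q_E(v)\ket{v}\bra{v}\,d^{2d}v/\pi^d$ does not hold for general bounded positive effects. Its right-hand side is the heterodyne measure-and-prepare map applied to $E$, i.e.\ $E$ with one unit of Gaussian noise added. For example, $E=\ket{0}\bra{0}$ gives the thermal operator $\sum_k 2^{-k-1}\ket{k}\bra{k}$. A general positive operator such as $\ket{1}\bra{1}$ admits no positive coherent-state decomposition at all. Your Step 1 relies on the same false identity when it decomposes $E_\gamma$ ``via its $Q$-type representation.''

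\textbf{The noise accounting is also off.} In Step 2 one unit of variance goes to $\rho$, and the other is attributed to the coherent-state overlap, which is not noise. Even granting the identity, you would compute $\Tr[E\hat D(H\boldsymbol{\alpha})\Phi_1(\rho)\hat D^\dagger(H\boldsymbol{\alpha})]$ rather than $\ell$, where $\Phi_t$ is the additive displacement-noise channel of variance $t$.

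\textbf{The missing idea} is to spend the second unit of variance on the effect. Write $\Phi_2=\Phi_1\circ\Phi_1$. Move one factor onto $E$, using that $\Phi_1$ is self-dual and commutes with displacements. Then apply $\Phi_1(X)=\int\bra{z}X\ket{z}\,\ket{z}\bra{z}\,d^{2d}z/\pi^d$ on both sides. With the partial matrix elements $\omega_z=\bra{z}\rho\ket{z}$ and $F_w=\bra{w}E\ket{w}$, which are positive operators on the auxiliary systems, this gives $\ell(\boldsymbol{\alpha})=\iint\Tr[F_w\omega_z]\,e^{-\abs{H\boldsymbol{\alpha}+z-w}^2}\,d^{2d}z\,d^{2d}w/\pi^{2d}$. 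The width is exactly one and the weights are nonnegative.

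This also settles the auxiliary-entanglement issue you flagged. Your remark that the auxiliaries can be absorbed into $E$ by a partial trace against $\rho$ fails for entangled $\rho$.

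Your original positive-$P$ route can be repaired the same way. Let $u$ be the heterodyne point, $w\sim\CN(0,1)$ the extra shift, and $E_u\coloneqq\Tr_{\rm aux}[E(\omega_u\otimes\Id)]\succeq0$. Integrate $w$ against $\bra{u+w+H\boldsymbol{\alpha}}E_u\ket{u+w+H\boldsymbol{\alpha}}$. The noise density itself then supplies the unit Gaussian kernel, and the $Q$ function of $E_u$ supplies the positive weight.

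\textbf{Your monotonicity in Step 2 is backwards.} With $y=\sigma_\alpha^2\lambda_k/s$, each factor equals $(1+y)/(1+y+4\kappa\sigma_\alpha^2)$. This increases with $y$ and therefore \emph{decreases} with $s$: broader, noisier likelihoods give a smaller response. The bound needs width at least one, which was your first instinct. The ``reorganization'' toward width at most one was prompted by this error, not by a real obstruction.

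Once the unit-width decomposition is in place, the rest of your argument matches the paper's. That includes the posterior covariance $(\sigma_\alpha^{-2}\Id_m+H^\dagger H)^{-1}$, the average $\det(\Id_m+4\kappa\Sigma)^{-1}$, the convexity step, and monotonicity in $\lambda_k\le\norm{H}^2\le1$.
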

\begin{proof}
Write $\Phi_t$ for additive displacement noise of variance $t$. Splitting $\Phi_2=\Phi_1\circ\Phi_1$ between the state and effect, and using the Gaussian measure-and-prepare identity $\Phi_1(X)=\int\langle z|X|z\rangle\ket z\bra z\,d^{2d}z/\pi^d$~\cite[Theorem~2]{holevo}, gives
\begin{align}
 \ell(\boldsymbol{\alpha})=\iint\frac{d^{2d}z\,d^{2d}w}{\pi^{2d}}
 \Tr[F_w\omega_z]e^{-\abs{H\boldsymbol{\alpha}+z-w}^2}, \label{eq:garbling}
\end{align}
where $\omega_z\coloneqq \langle z|\rho|z\rangle$ and $F_w\coloneqq \langle w|E|w\rangle$ are positive operators on the undisplaced auxiliary systems. Thus the likelihood is a positive mixture of Gaussian components with the same $H$ and weights independent of $\boldsymbol{\alpha}$ and $\eta$.

For each component, the null amplitude law gives posterior covariance $C=(\sigma_\alpha^{-2}\Id_m+H^\dagger H)^{-1}$. Its mean contributes only a phase, so $|G_\eta|^2=e^{-2\eta^\dagger C\eta}$ and
\begin{align}
 \E_\eta[|G_\eta|^2]=\det(\Id_m+4\kappa C)^{-1}
 =\prod_{k=1}^{m}\frac{1+\sigma_\alpha^2\lambda_k}{1+\sigma_\alpha^2\lambda_k+4\kappa\sigma_\alpha^2}.
 \label{eq:gausslik}
\end{align}
Convexity in Eq.~\eqref{eq:mixture-second-moment} passes this bound to $\ell$. Each factor increases with $\lambda_k$, giving $b_\kappa^{-m}$ when $\norm H\le1$.
\end{proof}

\medskip
\noindent\textit{Uniform bound over arrival times.} We now apply the two lemmas to arbitrary pulse arrival times. If every signal component lies within one bin, we use the displacement-overlap bound. Otherwise, the pulse occupies an additional measured mode, and we use the Gaussian-noise bound. To describe these two cases, let $b\in[0,\tc)$ be the time from the first signal onset $s_{e,1}+w$ to the next bin boundary, with $s_{e,1}+w+b\in\tc\mathbb Z$. Since successive components start $\tc$ apart, the same value of $b$ applies to all components.

If $b\notin(0,w)$, each signal component lies within one distinct measurement time bin and its amplitude is captured fully. The noise may spill into neighboring time bins, but conditioning on its entire vector $\boldsymbol{\zeta}$ makes it a deterministic displacement in each time bin. The clean phases in Eq.~\eqref{eq:clean} are a subset of these noncrossing phases. The per-event likelihood bound covers the entire noncrossing case, including phases with nonzero captured background.

If $0<b<w$, each signal component crosses a time-bin boundary. Its left capture fraction is $f\coloneqq b/w$. On the $m+1$ consecutive time bins met by the event,
\begin{align}
 \bm X_e=H_f \boldsymbol{\alpha}+\Delta_g\boldsymbol{\zeta},\qquad
 H_f \boldsymbol{\alpha}\coloneqq f(\boldsymbol{\alpha},0)+(1-f)(0,\boldsymbol{\alpha}). \label{eq:crossing-app}
\end{align}
Indeed the left noise component is wholly in time bin $k$, the right noise component in time bin $k+1$, and the additional noise component in time bin $m+1$. The condition $3w<\tc$ guarantees these containments. Here $(\boldsymbol{\alpha},0)$ and $(0,\boldsymbol{\alpha})$ are the two ways to place the $m$ amplitudes in $m+1$ bins. Since both embeddings preserve the norm, their convex combination satisfies $\abs{H_f\boldsymbol{\alpha}}\le f\abs{\boldsymbol{\alpha}}+(1-f)\abs{\boldsymbol{\alpha}}=\abs{\boldsymbol{\alpha}}$. Thus $\norm{H_f}\le1$. Equations~\eqref{eq:noiseidentity} and \eqref{eq:crossing-app} meet Lemma~\ref{lem:noise} exactly.

\begin{proposition}[Uniform per-event bound]\label{prop:event}
For every arrival phase, the likelihood $\ell_e$ of a fixed product of local measurement effects on an arbitrary event-block input state satisfies
\begin{align}
 \E_{\eta\sim\CN(0,2\kappa\Id_m)}[|G_\eta[\ell_e]|^2]\le b_\kappa^{-m}, \label{eq:eventbound}
\end{align}
provided its null average is finite and positive. The input may be entangled across the bins of the block.
\end{proposition}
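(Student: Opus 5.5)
We split on the arrival phase $b\in[0,\tc)$ defined above and reduce each case to one of the two preceding lemmas. Throughout, the pulse block is fixed. The input is $\rho=\bigotimes_{j\in B_e}\rho_j(o_{<j})$, or more generally an arbitrary state on the block, possibly entangled with undisplaced auxiliaries. The effect is the fixed product $E_o$, which is a bounded positive joint effect on the block. The likelihood is $\ell_e(\boldsymbol{\alpha})=\E_{\boldsymbol{\zeta}}\Tr[E_o\hat D(M_\tau\boldsymbol{\alpha}+N_\tau\boldsymbol{\zeta})\rho\hat D^\dagger(\cdot)]$. Bins of the block that capture no signal component carry only $\boldsymbol{\zeta}$-dependent displacements, or none at all.

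\emph{Noncrossing case} ($b\notin(0,w)$). Each signal component lies in a single bin, and distinct components lie in distinct bins, so there are $m$ signal bins. We first condition on $\boldsymbol{\zeta}$. For fixed $\boldsymbol{\zeta}$, the displacement $N_\tau\boldsymbol{\zeta}$ together with the displacements of all non-signal bins is deterministic. These unitaries are absorbed into the state, giving $\rho_{\boldsymbol{\zeta}}\coloneqq\hat D(N_\tau\boldsymbol{\zeta})\rho\hat D^\dagger(N_\tau\boldsymbol{\zeta})$, because displacements on different modes commute and compose up to phases that cancel in the conjugation. The remaining dependence is $\hat D(\boldsymbol{\alpha})$ on the $m$ signal bins. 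We then treat all other bins as auxiliary systems: the partial trace against the effect on those bins yields an $m$-mode operator. To do this, write $\rho_{\boldsymbol{\zeta}}$ and $E_o$ via positive decompositions. Equivalently, regard $g_{\boldsymbol{\zeta}}(\boldsymbol{\alpha})=\Tr[E_o\hat D(\boldsymbol{\alpha})\rho_{\boldsymbol{\zeta}}\hat D^\dagger(\boldsymbol{\alpha})]$ as a positive mixture, over a purification and a spectral decomposition of the auxiliary parts, of functions of the form in Lemma~\ref{lem:overlap}. Lemma~\ref{lem:overlap} holds for an arbitrary $m$-mode state and bounded effect, and it extends to auxiliary systems by convexity. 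It therefore gives $\E_\eta|G_\eta[g_{\boldsymbol{\zeta}}]|^2\le b_\kappa^{-m}$. Finally, $\ell_e=\int g_{\boldsymbol{\zeta}}\,dP(\boldsymbol{\zeta})$ is a positive mixture whose weights are independent of $\boldsymbol{\alpha}$ and $\eta$. The mixture inequality~\eqref{eq:mixture-second-moment} then transfers the bound to $\ell_e$. Components of zero null weight are dropped, which is harmless since $\mathcal Z[\ell_e]>0$.

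\emph{Crossing case} ($0<b<w$). We use Eq.~\eqref{eq:crossing-app}. On the $m+1$ bins capturing signal or background, $\bm X_e=H_f\boldsymbol{\alpha}+\Delta_g\boldsymbol{\zeta}$, and by Eq.~\eqref{eq:noiseidentity} we have $\Delta_g\boldsymbol{\zeta}\sim\CN(0,2\Id_{m+1})$, independent of $\boldsymbol{\alpha}$. Any further bins of the block are undisplaced; we treat them as auxiliary systems, which Lemma~\ref{lem:noise} explicitly permits. Since $\norm{H_f}\le1$, Lemma~\ref{lem:noise} with $d=m+1$ gives the bound $b_\kappa^{-m}$ directly. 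It remains to check that the monotone factor at $\lambda_k\le1$ equals $b_\kappa^{-1}$. Setting $\lambda=1$ in $(1+\sigma_\alpha^2\lambda)/(1+\sigma_\alpha^2\lambda+4\kappa\sigma_\alpha^2)$ gives $(1+\sigma_\alpha^2)/(1+\sigma_\alpha^2+4\kappa\sigma_\alpha^2)=b_\kappa^{-1}$, as required.

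\emph{Main obstacle.} The delicate step is the bookkeeping in the noncrossing case. There, background capture is phase-dependent: it may spill into the bins neighbouring a signal component, and it may also spill into the same bins where a complete pair does not cancel. We must make sure that conditioning on $\boldsymbol{\zeta}$ leaves exactly $\hat D(\boldsymbol{\alpha})$ on $m$ distinct modes with unit coefficients, that is $M_\tau$ restricted to those bins equals the identity. We also need the range condition~\eqref{eq:overlap-range} for Lemma~\ref{lem:overlap}. I would verify this condition for the chosen $\sigma_\alpha^2=2\kappa+\sqrt{1+4\kappa^2}$: this choice gives $\sigma_\alpha^4-1=4\kappa\sigma_\alpha^2$, so the condition holds with equality. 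Truncation at the acquisition edges is handled by the vacuum and identity completion introduced earlier, so a pulse can always be analyzed as fully present on its block.
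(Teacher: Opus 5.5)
Your proposal is correct and follows the paper's proof. In the noncrossing case you condition on $\boldsymbol{\zeta}$, absorb the deterministic background displacements into the input, and reduce the bins without signal to a positive unnormalized $m$-mode input; the paper does this by applying the square roots of those effects and tracing out, which is your partial trace. You then invoke Lemma~\ref{lem:overlap} and average over $\boldsymbol{\zeta}$ via Eq.~\eqref{eq:mixture-second-moment}. In the crossing case you apply Lemma~\ref{lem:noise} with $d=m+1$ and $\norm{H_f}\le1$, exactly as the paper does. The purification and spectral-decomposition detour is unnecessary but harmless.
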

\begin{proof}
For noncrossing events, fix $\boldsymbol{\zeta}$, absorb the deterministic background displacements into the input, and apply the square roots of the effects on bins without signal before tracing those bins out. This leaves a positive, possibly unnormalized input on the $m$ signal bins, so Lemma~\ref{lem:overlap} applies after normalization. Averaging over $\boldsymbol{\zeta}$ preserves the bound by Eq.~\eqref{eq:mixture-second-moment}. The crossing case follows from Lemma~\ref{lem:noise} and the capture geometry above.
\end{proof}

This completes the scalar per-pulse estimate, uniformly over arrival times and outcome histories. We now insert it into Eq.~\eqref{eq:scalar-record-deficit} and average the pulse count to obtain the acquisition-time bound.

\subsection{From pulse responses to acquisition time}\label{app:sequential}\label{sec:transcript}\label{sec:time}\label{app:outcomes}
Equation~\eqref{eq:scalar-record-deficit} bounds the conditional TVD by $a_{\ep}/2$ times the null average of the sum of squared pulse responses. For every fixed schedule and outcome branch, Eq.~\eqref{eq:eventbound} bounds each term by $b_\kappa^{-m}$ after averaging over $\eta$. Taking the null outcome average therefore gives
\begin{align}
 \E_\eta\!\left[\TV\!\left(P_0^{\rm tr},\frac{P_{+\eta}^{\rm tr}+P_{-\eta}^{\rm tr}}2\right)\right]
 \le\frac{a_{\ep}}2 M b_\kappa^{-m}.
 \label{eq:sequential-fixed-count}
\end{align}
The estimate is uniform over all adaptive choices fixed by the branch.

\medskip
\noindent\textit{Averaging the pulse count.} Pulses intersecting $[0,\tobs)$ start in $(-L_p,\tobs)$. Campbell's formula gives
\begin{align}
 \E_{\rm schedule}[M]=\rho(\tobs+L_p)=\frac{\tobs+L_p}{3T}.
 \label{eq:event-count}
\end{align}
Averaging over the common schedule and classical randomness, then discarding the revealed schedule, yields
\begin{align}
 \E_\eta\!\left[\TV\!\left(P_0^{\rm tr},\frac{P_{+\eta}^{\rm tr}+P_{-\eta}^{\rm tr}}2\right)\right]
 \le\frac{a_{\ep}}2\rho(\tobs+L_p)b_\kappa^{-m}.
 \label{eq:TVupper}
\end{align}

\medskip
\noindent\textit{Comparison with the testing requirement.} For parameters satisfying Eq.~\eqref{eq:overlap-range}, $q_m>0$, and $\eps<\ep\rho\lc(1-c)$, Eq.~\eqref{eq:TVlower} requires the same average TVD to be at least $q_m/3$. Comparing with Eq.~\eqref{eq:TVupper} gives
\begin{align}
 \frac{\tobs}{T}\ge\frac{2q_m}{a_{\ep}}b_\kappa^m-\frac{L_p}{T}.
 \label{eq:sequential-finite}
\end{align}
To obtain the main-text bound, use Eq.~\eqref{eq:lower-scale-choice}, which gives $b_\kappa=\sigma_\alpha^2$ and $q_m>9/20$, and choose
\begin{align}
 \ep=6n\eps,\qquad 0<\eps\le\frac1{24n}.
 \label{eq:lower-accuracy}
\end{align}
Then $\ep\le1/4$, the feature gap is $\gamma=9\eps/4>2\eps$, and $a_{\ep}\le16\ep^2=576n^2\eps^2$. Substitution into Eq.~\eqref{eq:sequential-finite}, with $m=n/2$ and $L_p/T<1$, proves Theorem~\ref{thm:lower}; the subtraction is absorbed into a universal prefactor for $n\ge8$.

\medskip
\noindent\textit{General outcomes.} For continuous POVMs, project each measured output onto an increasing finite Fock subspace, recording truncation failure as an extra outcome. The resulting POVMs have bounded densities relative to a full-rank reference-state outcome law and retain the original feedback on ordinary outcomes. For finite acquisition, trace-norm convergence under this truncation, contractivity of the subsequent adaptive operations, and dominated convergence~\cite{kallenberg} give TVD convergence of the complete classical records, so the cutoff-independent bounds pass to the limit. The same argument applies to the operator bound in Sec.~\ref{app:coherent-input}.

The sequential bound uses scalar factorization of the full-record likelihood. We next retain temporal input correlations by replacing this factorization with an operator representation.

\subsection{Operator estimate for temporally entangled inputs}\label{app:coherent-input}
We now bound the same full-record TVD for temporally entangled inputs. The pulse amplitudes remain independent between events, but the joint input correlations prevent the scalar likelihood factorization used in Sec.~\ref{app:overlap}. We first state the final TVD bound, then derive its operator likelihood representation and the single-pulse estimate needed to prove it.

\begin{proposition}[Record bound for temporally entangled inputs]\label{prop:transcript}
Under Eq.~\eqref{eq:overlap-range} and $0<\ep\le1/4$, any local idler-free protocol with temporally entangled inputs satisfies
\begin{align}
 \E_\eta\!\left[\TV\!\left(P_0^{\rm tr},\frac{P_{+\eta}^{\rm tr}+P_{-\eta}^{\rm tr}}2\right)\right]
 \le a_{\ep}\rho(\tobs+L_p)\left(\frac{9}{5\sqrt{2b_\kappa}}\right)^m.
 \label{eq:coherent-tv}
\end{align}
Here $\rho=1/(3T)$ is the stationary pulse rate and $a_{\ep}$ is defined in Eq.~\eqref{eq:scalar-product-bound}. The record laws include the protocol's classical randomness.
\end{proposition}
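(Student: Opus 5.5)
Fix the revealed schedule and the classical randomness, and keep the positive-deficit identity~\eqref{eq:record-positive-deficit}, so the task is to lower-bound the full-record ratio $L_\eta$. With temporally entangled inputs the scalar factorization~\eqref{eq:record-pulse-factorization} fails. I will therefore write the record probability in the Heisenberg picture. Since the measurements are destructive and local within bins, the outcome-conditioned effects act on disjoint bins. Along a fixed outcome branch, the feedback-controlled operations can be pulled into an adapted effect on the joint input, using a Stinespring/purification argument as in Ref.~\cite{chen2024tight}. This gives $P_\theta^{\rm tr}(o)=\Tr[\rho_{\rm in}\,\mathcal K_o(\bigotimes_e\mathcal A_{e,\theta})]$. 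Here $\mathcal A_{e,\theta}$ is the Heisenberg action of pulse $e$'s random displacement, averaged over $\boldsymbol{\alpha}\sim p_\theta$ and the background, on the block $B_e$. Because the blocks are disjoint, the maps $\mathcal A_{e,s\eta}=\mathcal A_{e,0}+4s\ep\,\mathcal S_{e,\eta}$ commute, where $\mathcal S_{e,\eta}$ is the sine-weighted displacement average.

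Expanding the product and averaging over the sign $s$ kills all odd terms. This leaves $L_\eta-1$ as a sum over even subsets $A$ of pulses, with weight $(4\ep)^{|A|}$, of normalized operator correlators $\langle\prod_{e\in A}\mathcal S_{e,\eta}\rangle/\langle\cdot\rangle_0$ conditioned on the branch. Instead of bounding each term, I will mimic~\eqref{eq:scalar-product-bound}. Write the signed mixture as $\tfrac12\sum_s\prod_e(1+4s\ep R_e)$, with $R_e$ the commuting Hermitian response operators obtained by symmetrizing $\mathcal A_{e,0}^{-1/2}\mathcal S_{e,\eta}\mathcal A_{e,0}^{-1/2}$ inside the null state. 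Operator AM--GM for commuting operators then gives $L_\eta\ge1-a_\ep\sum_e\langle R_e^2\rangle_{o}$. Here $\langle\cdot\rangle_o$ is the null expectation conditioned on the branch, and convexity of $\sqrt{1-16\ep^2x}$ is applied spectrally. Averaging over $o$ under the null turns $\E_o\langle R_e^2\rangle_o$ into an unconditioned second moment $\Tr[\sigma_e R_e^2]$ for the reduced null input $\sigma_e$ on block $B_e$, together with the conjugated effect. This yields~\eqref{eq:coherent-tv} once I show $\E_\eta\Tr[\sigma_eR_e^2]\le(9/(5\sqrt{2b_\kappa}))^m$, after which I sum over pulses using~\eqref{eq:event-count}.

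The main obstacle is this single-pulse operator second moment. The scalar bound of Proposition~\ref{prop:event} controls $|G_\eta|^2$ only for a fixed rank-one effect, but now the effect is an operator-valued remainder. My plan is to write $R_e$ as an integral of $\sin(2\Im\boldsymbol{\alpha}^\dagger\eta)$ against the displacement twirl. I then expand $\Tr[\sigma R^2]$ as a double integral over $\boldsymbol{\alpha},\boldsymbol{\alpha}'$ of $\Tr[\sigma D(\bm X)^\dagger Q D(\bm X)D(\bm X')^\dagger QD(\bm X')]$, and apply Cauchy--Schwarz to split it into two scalar Fourier responses of the form in Lemmas~\ref{lem:overlap} and~\ref{lem:noise}. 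The Cauchy--Schwarz step costs a square root, giving $b_\kappa^{-m/2}$ and hence the exponent $n/4$. The Gaussian averages over the cross term $e^{-2\kappa|\boldsymbol{\alpha}-\boldsymbol{\alpha}'|^2}$ should produce a per-mode constant, which I expect to bound by $9/(5\sqrt2)$ using $\sigma_\alpha^2\ge4$. I will handle crossing pulses through the variance-two background exactly as in Lemma~\ref{lem:noise}, splitting the noise symmetrically between the two displacement factors. Continuous outcomes follow by the truncation argument of Sec.~\ref{app:outcomes}.
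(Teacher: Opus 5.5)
Your reduction up to the single-pulse estimate is the paper's. You fix the branch and move the feedback-controlled preparations to the front to get a joint input $\rho_o$. You normalize each pulse effect as $F_{e,0}^{-1/2}(\cdot)F_{e,0}^{-1/2}$ to get commuting responses $-\Id\preceq S_{e,\eta}\preceq\Id$ (your $R_e$). You absorb the null factors into a normalized $\sigma_o$ and apply the signed-product inequality spectrally, giving $(1-L_\eta)_+\le a_{\ep}\sum_e\Tr(\sigma_oS_{e,\eta}^2)$. One correction: $S_{e,\eta}$ depends on the branch $o$, so the null average over $o$ does not reduce to an unconditioned second moment for a reduced block state. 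What you need, and what suffices, is the operator inequality $\E_\eta[S_{e,\eta}^2]\preceq(9/(5\sqrt{2b_\kappa}))^m\Id$ on the support of $F_{e,0}$, uniformly over branches.

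The genuine gap is that operator inequality, Lemma~\ref{lem:coherent-lifting}, which is the only new ingredient beyond the sequential case. Let $R_{e,\eta}$ be the null-normalized complex Fourier response, with $\Im R_{e,\eta}=S_{e,\eta}$. Proposition~\ref{prop:event} controls only diagonal responses: $\E_\eta\abs{\bra vR_{e,\eta}\ket v}^2\le b_\kappa^{-m}$ for unit $v\propto F_{e,0}^{1/2}\psi$. But $\bra vR_{e,\eta}^\dagger R_{e,\eta}\ket v=\sum_u\abs{\bra uR_{e,\eta}\ket v}^2$ runs over off-diagonal elements in an infinite-dimensional block. This is exactly the square-of-expectation versus expectation-of-square issue. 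A Cauchy--Schwarz step inside your $\boldsymbol\alpha,\boldsymbol\alpha'$ integral does not reduce it to two scalar responses of the lemmas' form. For a positive map $\Phi$ and a pure phase $f$, the estimate $\abs{\bra u\Phi(f)\ket v}^2\le\bra u\Phi(\abs f)\ket u\,\bra v\Phi(\abs f)\ket v$ returns only the trivial bound $1$.

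Your integrand $\Tr[\sigma\hat D(\bm X)^\dagger Q\hat D(\bm X)\hat D(\bm X')^\dagger Q\hat D(\bm X')]$ also drops the inner $F_{e,0}^{-1}$ and the outer $F_{e,0}^{-1/2}$. These are what make the response bounded and normalized. Redoing the crossing-pulse noise split does not touch this issue. The constant you expect to extract ``using $\sigma_\alpha^2\ge4$'' is not derived: the actual per-mode factor $9/(5\sqrt2)=(3/\sqrt8)(6/5)$ does not involve $\sigma_\alpha$.

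The missing idea is an operator version of $\norm f_2^2\le\norm f_\infty\norm f_1$. Treat $\Phi_e(f)\coloneqq F_{e,0}^{-1/2}[\int p_0fL_e\,d\boldsymbol\alpha]F_{e,0}^{-1/2}$ as a positive operator-valued density with $\Phi_e(1)=\Id$ whose Fourier transform is $R_{e,\eta}$. The argument then has four steps:

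1. Apply the scalar bound to \emph{every} unit vector, not just the actual state; this is why the lemma needs arbitrary pure block inputs. Fourier-invert against a Gaussian weight $h_x(\boldsymbol\alpha)=e^{-4\kappa\abs{\boldsymbol\alpha-x}^2/3}$. Cauchy--Schwarz in $\eta$ then gives $\Phi_e(h_x)\preceq q\Id$ with $q=(3/\sqrt{8b_\kappa})^m$. This is where the single square root, and hence the exponent $n/4$, is actually paid.
2. Positivity of $\Phi_e$ gives $\Phi_e(h_x^3)^2\preceq q\,\Phi_e(h_x^5)$.
3. Plancherel gives $\E_\eta[R_{e,\eta}^\dagger R_{e,\eta}]=(8\kappa/\pi)^m\int\Phi_e(h_x^3)^2\,d^{2m}x$. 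Combined with $\int\Phi_e(h_x^5)\,d^{2m}x=(3\pi/(20\kappa))^m\Id$, this yields the bound $q(6/5)^m=(9/(5\sqrt{2b_\kappa}))^m$.
4. Finally, $S_{e,\eta}^2\preceq\tfrac12(R_{e,\eta}^\dagger R_{e,\eta}+R_{e,\eta}R_{e,\eta}^\dagger)$.

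Without this sup-times-normalization step or an equivalent, your plan does not bound $\Tr(\sigma_oS_{e,\eta}^2)$, and the proposition does not follow.
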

\begin{proof}
We first fix and reveal the schedule, as in Sec.~\ref{app:overlap}, and fix the protocol's classical randomness. We derive a conditional bound for the resulting $M$ pulse blocks, then average over the schedule. Initially assume bounded outcome effects or densities; the approximation in Sec.~\ref{app:sequential} extends the final bound to general outcomes. Fix a complete outcome sequence $o$. This fixes every feedback choice, including all preparation maps and local effects $E_j(o_j\mid o_{<j})$. A later preparation map $\mathcal C_j$ acts only on probes that have not yet encountered the signal. It therefore commutes with an earlier displacement and destructive readout. For an earlier measured bin $i$,
\begin{align}
 \Tr_i\!\left[(E_i\otimes\Id)(\operatorname{id}_i\otimes\mathcal C_j)(\omega)\right]
 =\mathcal C_j\!\left(\Tr_i[(E_i\otimes\Id)\omega]\right).
 \label{eq:coherent-preparation-commutation}
\end{align}
Moving these maps to the beginning in their original order represents the branch by a positive joint input operator $\rho_o$ and the fixed local effects. Both are independent of the signal hypothesis. The operator $\rho_o$ can depend on $o$ and retains the temporal input correlations. This is a representation of the branch probability; it does not change the causal order of the physical protocol. Let $B_e$ be the observed bins met by pulse $e$. The gap $2T-L_p>\tc$ makes these sets disjoint, so the displacement in $j\in B_e$ depends only on $(\boldsymbol\alpha_e,\boldsymbol\zeta_e)$. Let $Q_o$ be the tensor product of the local effects on bins outside all pulse blocks. For fixed amplitudes, the joint-input representation obtained by applying Eq.~\eqref{eq:coherent-preparation-commutation} and the binwise Born rule gives the counterpart of Eq.~\eqref{eq:record-fixed-amplitudes}:
\begin{align}
 P^{\rm tr}\!\left(o\mid\mathrm{schedule},\{\boldsymbol\alpha_e,\boldsymbol\zeta_e\}_{e=1}^M\right)
 =\Tr\!\left[\rho_o\left\{\left(\bigotimes_{e=1}^M\bigotimes_{j\in B_e}
 \hat D^\dagger(X_j)E_j\hat D(X_j)\right)\otimes Q_o\right\}\right].
 \label{eq:coherent-fixed-amplitudes}
\end{align}
Here the displacement arguments and outcome-history dependence of $E_j$ are suppressed only in this equation. Unlike the sequential expression, the trace is taken in one joint input and need not factor into pulse probabilities.

Average each pulse block over its common background, then over its signal amplitudes:
\begin{align}
 L_e(\boldsymbol\alpha;o)\coloneqq\E_{\boldsymbol\zeta}\!\left[\bigotimes_{j\in B_e}
 \hat D^\dagger(X_j(\boldsymbol\alpha,\boldsymbol\zeta))E_j(o_j\mid o_{<j})\hat D(X_j(\boldsymbol\alpha,\boldsymbol\zeta))\right],\qquad F_{e,\theta}\coloneqq\int p_\theta(\boldsymbol\alpha)L_e(\boldsymbol\alpha;o)d\boldsymbol\alpha.
 \label{eq:event-effects}
\end{align}
For a pulse-block input $\rho$, $\Tr[\rho L_e(\boldsymbol\alpha;o)]$ is the scalar single-pulse likelihood. We keep $L_e$ as an operator so that input correlations with other blocks remain available. Identity effects complete unobserved bins, as in Sec.~\ref{app:overlap}. The classical amplitudes of different pulses are independent. Their averages thus factor inside the trace, giving
\begin{align}
 P_\theta^{\rm tr}(o\mid\mathrm{schedule})
 =\Tr\!\left[\rho_o\left\{\left(\bigotimes_{e=1}^M F_{e,\theta}\right)\otimes Q_o\right\}\right].
 \label{eq:coherent-record-probability}
\end{align}
This is the analogue of Eq.~\eqref{eq:record-pulse-factorization}: the pulse effects factor, while the joint input expectation retains their quantum correlations.

Substituting the signed amplitude law $p_{s\eta}(\boldsymbol\alpha)=p_0(\boldsymbol\alpha)[1+4s\ep\sin(2\Im\boldsymbol\alpha^\dagger\eta)]$ gives
\begin{align}
 F_{e,s\eta}-F_{e,0}
 =4s\ep\int p_0(\boldsymbol\alpha)\sin(2\Im\boldsymbol\alpha^\dagger\eta)L_e(\boldsymbol\alpha;o)d\boldsymbol\alpha.
\end{align}
To form the record likelihood ratio, we express each signed pulse effect relative to its null effect. The scalar calculation gives the normalized factor $1+4s\ep\Im G_\eta[\ell_e]$. For operators, multiplication by $F_{e,0}^{-1/2}$ on both sides normalizes the null effect to the identity and preserves Hermiticity. Accordingly, define $S_{e,\eta}$ on the support of $F_{e,0}$ by
\begin{align}
 S_{e,\eta}\coloneqq F_{e,0}^{-1/2}\left[\int p_0(\boldsymbol\alpha)\sin(2\Im\boldsymbol\alpha^\dagger\eta)L_e(\boldsymbol\alpha;o)d\boldsymbol\alpha\right]F_{e,0}^{-1/2}.
 \label{eq:event-sine-response}
\end{align}
The dependence of $S_{e,\eta}$ on the fixed branch $o$ and schedule is left implicit. Since $L_e\ge0$ and the sine lies in $[-1,1]$, the integral in brackets lies between $-F_{e,0}$ and $F_{e,0}$. Hence
\begin{align}
 -\Id\preceq S_{e,\eta}\preceq\Id,\qquad
 F_{e,s\eta}=F_{e,0}^{1/2}(\Id+4s\ep S_{e,\eta})F_{e,0}^{1/2}.
 \label{eq:event-normalization}
\end{align}
Throughout, inverse square roots are taken on the support of the null effect, and each normalized product denotes the bounded extension of the associated form. Zero null mass gives zero alternative mass because $p_{s\eta}\le(1+4\ep)p_0$.

In the full-record trace, the two sets of null factors can now be absorbed into the input. Define
\begin{align}
 K_o\coloneqq\left(\bigotimes_{e=1}^M F_{e,0}^{1/2}\right)\otimes Q_o^{1/2},\qquad
 \sigma_o\coloneqq\frac{K_o\rho_oK_o}{\Tr(K_o\rho_oK_o)}.
\end{align}
The null probability is $P_0^{\rm tr}(o\mid\mathrm{schedule})=\Tr(K_o\rho_oK_o)$, so for a branch of positive null probability, $\sigma_o$ is a normalized joint state independent of $\eta$. To see how the null effects cancel in the ratio, substitute Eq.~\eqref{eq:event-normalization} into the tensor product in Eq.~\eqref{eq:coherent-record-probability}. Since each pulse occupies a disjoint block,
\begin{align}
 \left(\bigotimes_{e=1}^M F_{e,s\eta}\right)\otimes Q_o
 &=\left[\left(\bigotimes_{e=1}^M F_{e,0}^{1/2}\right)\otimes Q_o^{1/2}\right]
 \left[\prod_{e=1}^M(\Id+4s\ep S_{e,\eta})\right]
 \left[\left(\bigotimes_{e=1}^M F_{e,0}^{1/2}\right)\otimes Q_o^{1/2}\right]\\
 &=K_o\left[\prod_{e=1}^M(\Id+4s\ep S_{e,\eta})\right]K_o.
 \label{eq:coherent-null-factorization}
\end{align}
Here each $S_{e,\eta}$ is extended by the identity on all other pulse blocks and on the bins outside events. Inserting this factorization into the record probability and cycling the trace gives
\begin{align}
 P_{s\eta}^{\rm tr}(o\mid\mathrm{schedule})=\Tr\!\left[\rho_oK_o\left\{\prod_e(\Id+4s\ep S_{e,\eta})\right\}K_o\right]=\Tr\!\left[K_o\rho_oK_o\prod_e(\Id+4s\ep S_{e,\eta})\right].
 \label{eq:coherent-trace-normalization}
\end{align}
For the null hypothesis the middle product is the identity, giving $P_0^{\rm tr}(o\mid\mathrm{schedule})=\Tr(K_o\rho_oK_o)$. Dividing by this probability and using the definition of $\sigma_o$ therefore yields
\begin{align}
 \frac{P_{s\eta}^{\rm tr}(o\mid\mathrm{schedule})}{P_0^{\rm tr}(o\mid\mathrm{schedule})}
 =\Tr\!\left[\sigma_o\prod_{e=1}^M(\Id+4s\ep S_{e,\eta})\right].
 \label{eq:coherent-record}
\end{align}
All input correlations remain in $\sigma_o$. The sign is chosen once for the full acquisition, so its average is outside the product:
\begin{align}
 L_\eta(o\mid\mathrm{schedule})
 =\frac12\sum_{s=\pm1}\Tr\!\left[\sigma_o\prod_{e=1}^M(\Id+4s\ep S_{e,\eta})\right].
 \label{eq:signed-record-ratio}
\end{align}

Operators on distinct pulse blocks commute. Their joint spectral values belong to $[-1,1]$, so the scalar signed-product inequality in Eq.~\eqref{eq:scalar-product-bound} gives the operator inequality
\begin{align}
 \frac12\sum_{s=\pm1}\prod_{e=1}^M(\Id+4s\ep S_{e,\eta})
 \succeq\Id-a_{\ep}\sum_{e=1}^M S_{e,\eta}^2.
\end{align}
Taking its expectation in $\sigma_o$, and then the positive deficit, yields
\begin{align}
 (1-L_\eta(o\mid\mathrm{schedule}))_+
 \le a_{\ep}\sum_{e=1}^M\Tr(\sigma_oS_{e,\eta}^2).
 \label{eq:branch-likelihood-deficit}
\end{align}
The null outcome average of this deficit is the conditional TVD by Eq.~\eqref{eq:record-positive-deficit}. Therefore
\begin{align}
 \E_\eta\!\left[\TV\!\left(P_0^{\rm tr},\frac{P_{+\eta}^{\rm tr}+P_{-\eta}^{\rm tr}}2\right)\right]
 \le a_{\ep}\E_{\eta,o}\!\left[\sum_{e=1}^M\Tr(\sigma_oS_{e,\eta}^2)\right].
 \label{eq:record-master}
\end{align}
The laws in this equation are conditional on the fixed schedule and randomness, and $o$ is averaged under the null. If $\sigma_o$ factors over pulse blocks, the trace in Eq.~\eqref{eq:coherent-record} factors into scalar responses $\Tr(\sigma_oS_{e,\eta})$; applying the scalar inequality after this factorization recovers the sequential calculation. For a general entangled input we must instead bound the second moments in Eq.~\eqref{eq:record-master}.

\medskip
\noindent\textit{The remaining second-moment bound.} Equation~\eqref{eq:record-master} requires a bound on $\Tr(\sigma_oS_{e,\eta}^2)$ for each pulse. The scalar estimate from Sec.~\ref{app:overlap} controls the square of an expectation, whereas the present expression contains the expectation of a square:
\begin{align}
 [\Tr(\sigma_oS_{e,\eta})]^2\le\Tr(\sigma_oS_{e,\eta}^2).
\end{align}
An upper bound on the left-hand side does not bound the right-hand side. The following lemma gives the required operator bound using Proposition~\ref{prop:event} for every pure input of the same pulse block. We first apply it to finish the proof of Proposition~\ref{prop:transcript}, then give its proof.

\begin{lemma}[Single-pulse operator bound]\label{lem:coherent-lifting}
For each pulse $e$ and fixed outcome sequence $o$, let $L_e(\boldsymbol\alpha;o)$ and $F_{e,0}$ be the pulse effects in Eq.~\eqref{eq:event-effects}, and let $S_{e,\eta}$ be their normalized sine response in Eq.~\eqref{eq:event-sine-response}. Under Eq.~\eqref{eq:overlap-range},
\begin{align}
 \E_{\eta\sim\CN(0,2\kappa\Id_m)}[S_{e,\eta}^2]\preceq\left(\frac{9}{5\sqrt{2b_\kappa}}\right)^m\Id
 \label{eq:coherent-event}
\end{align}
on the support of $F_{e,0}$, uniformly over arrival times and outcome histories.
\end{lemma}

Applying Eq.~\eqref{eq:coherent-event} to each pulse block and using the normalization and $\eta$-independence of $\sigma_o$ gives
\begin{align}
 \E_\eta[\Tr(\sigma_oS_{e,\eta}^2)]
 =\Tr\!\left[\sigma_o\E_\eta[S_{e,\eta}^2]\right]
 \le\left(\frac{9}{5\sqrt{2b_\kappa}}\right)^m.
\end{align}
Inserting this into Eq.~\eqref{eq:record-master} and taking the null outcome average gives the conditional bound
\begin{align}
 \E_\eta\!\left[\TV\!\left(P_0^{\rm tr},\frac{P_{+\eta}^{\rm tr}+P_{-\eta}^{\rm tr}}2\right)\right]
 \le a_{\ep}M\left(\frac{9}{5\sqrt{2b_\kappa}}\right)^m.
\end{align}
Finally, average over classical randomness and the common stationary schedule. Equation~\eqref{eq:event-count} gives $\E[M]=\rho(\tobs+L_p)$. Discarding the revealed schedule can only decrease TVD, proving Eq.~\eqref{eq:coherent-tv} for the original record.
\end{proof}

\begin{proof}[Proof of Lemma~\ref{lem:coherent-lifting}]
Fix the pulse $e$ and outcome sequence $o$. The proof uses the null normalization in Eq.~\eqref{eq:event-sine-response} with the sine replaced by Fourier and Gaussian weights. For a bounded measurable weight $f$, denote the resulting normalized average on the support of $F_{e,0}$ by $\Phi_e(f)$, defined by
\begin{align}
 \Phi_e(f)\coloneqq F_{e,0}^{-1/2}\left[\int p_0(\boldsymbol\alpha)f(\boldsymbol\alpha)L_e(\boldsymbol\alpha;o)d\boldsymbol\alpha\right]F_{e,0}^{-1/2},\qquad \Phi_e(1)=\Id.
 \label{eq:lift-normalized-average}
\end{align}
The sine weight recovers $S_{e,\eta}$, while the complex exponential gives the Fourier response needed to apply Proposition~\ref{prop:event}:
\begin{align}
 R_{e,\eta}\coloneqq\Phi_e(e^{\boldsymbol\alpha^\dagger\eta-\eta^\dagger\boldsymbol\alpha}),\qquad \Im R_{e,\eta}=\Phi_e(\sin(2\Im\boldsymbol\alpha^\dagger\eta))=S_{e,\eta}.
 \label{eq:lift-fourier-sine}
\end{align}
Here $\Im R=(R-R^\dagger)/(2i)$. Cauchy--Schwarz for the defining integral gives a bounded form with $\norm{\Phi_e(f)}\le\norm f_\infty$ on the support of $F_{e,0}$. In particular, $\norm{R_{e,\eta}}\le1$, and the phase symmetry gives $R_{e,-\eta}=R_{e,\eta}^\dagger$. All operator integrals are understood weakly. For every pure pulse-block input $\ket\psi$ with positive null weight, $\bra\psi L_e(\boldsymbol\alpha;o)\ket\psi$ is the single-pulse likelihood of Proposition~\ref{prop:event}. Its normalized Fourier response is the ratio below, so that proposition gives
\begin{align}
 \E_\eta\!\left[\left|\frac{\bra\psi F_{e,0}^{1/2}R_{e,\eta}F_{e,0}^{1/2}\ket\psi}{\bra\psi F_{e,0}\ket\psi}\right|^2\right]\le b_\kappa^{-m}.
 \label{eq:lift-hypothesis}
\end{align}
Every unit vector $v$ in the support of $F_{e,0}$ is a norm limit of normalized vectors in the range of $F_{e,0}^{1/2}$. Since $\norm{R_{e,\eta}}\le1$, dominated convergence extends Eq.~\eqref{eq:lift-hypothesis} to all such $v$, giving
\begin{align}
 \E_\eta[|\bra vR_{e,\eta}\ket v|^2]\le b_\kappa^{-m}\qquad(\langle v|v\rangle=1).
\end{align}

For $h_x(\boldsymbol{\alpha})\coloneqq e^{-4\kappa\abs{\boldsymbol{\alpha}-x}^2/3}$, the Gaussian Fourier representation and Cauchy--Schwarz give, for every unit $v$,
\begin{align}
 \bra v\Phi_e(h_x)\ket v
 &=\E_\eta\!\left[\left(\frac32\right)^m e^{-\abs\eta^2/(4\kappa)}e^{-x^\dagger\eta+\eta^\dagger x}\bra vR_{e,\eta}\ket v\right]
 \\&\le\left(\frac32\right)^m\!\left(\E_\eta e^{-\abs\eta^2/(2\kappa)}\right)^{1/2}\!\left(\E_\eta|\bra vR_{e,\eta}\ket v|^2\right)^{1/2}\le\left(\frac{3}{\sqrt{8b_\kappa}}\right)^m\eqqcolon q.
\end{align}
Thus $0\preceq\Phi_e(h_x)\preceq q\Id$. Applying Cauchy--Schwarz inside the defining amplitude integral, with the factors $h_x^{1/2}$ and $h_x^{5/2}$, gives for unit vectors $u,v$
\begin{align}
 |\bra u\Phi_e(h_x^3)\ket v|^2\le\bra u\Phi_e(h_x)\ket u\,\bra v\Phi_e(h_x^5)\ket v\le q\bra v\Phi_e(h_x^5)\ket v.
 \label{eq:lift-weighted-cauchy-schwarz}
\end{align}
This follows first on the dense range of $F_{e,0}^{1/2}$ and then by continuity. Taking the supremum over $u$ yields $\Phi_e(h_x^3)^2\preceq q\Phi_e(h_x^5)$.

The Gaussian Fourier transform of $h_x^3$ gives
\begin{align}
 \Phi_e(h_x^3)=\frac{1}{(4\pi\kappa)^m}\int_{\mathbb C^m}e^{-\abs\eta^2/(4\kappa)}e^{-x^\dagger\eta+\eta^\dagger x}R_{e,\eta}\,d^{2m}\eta.
\end{align}
Plancherel's identity, applied after acting on any vector $v$, gives the first equality below; the Cauchy--Schwarz bound gives the inequality:
\begin{align}
 \E_\eta[R_{e,\eta}^\dagger R_{e,\eta}]=\left(\frac{8\kappa}{\pi}\right)^m\int_{\mathbb C^m}\Phi_e(h_x^3)^2\,d^{2m}x\preceq q\left(\frac{8\kappa}{\pi}\right)^m\int_{\mathbb C^m}\Phi_e(h_x^5)\,d^{2m}x=q\left(\frac65\right)^m\Id=\left(\frac{9}{5\sqrt{2b_\kappa}}\right)^m\Id.
 \label{eq:gaussian-lifting}
\end{align}
Here $\int_{\mathbb C^m}h_x^5\,d^{2m}x=(3\pi/(20\kappa))^m$ and $\Phi_e(1)=\Id$. Symmetry under $\eta\mapsto-\eta$ gives the same bound for $\E_\eta[R_{e,\eta}R_{e,\eta}^\dagger]$. Since $S_{e,\eta}=\Im R_{e,\eta}$, the inequality $S_{e,\eta}^2\preceq\tfrac12(R_{e,\eta}^\dagger R_{e,\eta}+R_{e,\eta}R_{e,\eta}^\dagger)$ proves Eq.~\eqref{eq:coherent-event}.
\end{proof}

\medskip
\noindent\textit{Acquisition-time lower bound.} 
For parameters satisfying Eq.~\eqref{eq:overlap-range}, $q_m>0$, and $\eps<\ep\rho\lc(1-c)$, comparing with the testing requirement~\eqref{eq:TVlower} yields
\begin{align}
 \frac{\tobs}{T}\ge\frac{q_m}{a_{\ep}}\left(\frac{50b_\kappa}{81}\right)^{m/2}-\frac{L_p}{T}.
 \label{eq:coherent-finite}
\end{align}
With the main-text choices in Eqs.~\eqref{eq:lower-scale-choice} and \eqref{eq:lower-accuracy}, $q_m>9/20$, $a_{\ep}\le576n^2\eps^2$, and $50b_\kappa/81\ge200/81>1$. The subtraction $L_p/T<1$ is absorbed into a universal prefactor for $n\ge8$, proving Theorem~\ref{thm:coherent-input}.

Section~\ref{app:source-checks} verifies the remaining power and dependence conditions for the stated stationary signal class.

\medskip
\noindent\textit{Remark (smaller query scales).} For any fixed $\kappa>0$, the variance choice in Eq.~\eqref{eq:lower-scale-choice} gives $b_\kappa=\sigma_\alpha^2>1$. The subtracted term in Eq.~\eqref{eq:qgap-app} then vanishes as $m\to\infty$, so the sequential bound has the same scaling as Theorem~\ref{thm:lower} for even $n$ above a $\kappa$-dependent threshold.

\subsection{Signal power and temporal dependence}\label{app:source-checks}\label{app:source-mixing}
We verify finite mean power and a dependence time of order $T$, uniformly over the signal-law index $\theta$, as required in Theorem~\ref{thm:lower}.

The components in Eq.~\eqref{eq:event} have disjoint supports, with $\E_\theta[\abs{\boldsymbol\alpha}^2]=m\sigma_\alpha^2$ and $\E[|\zeta_k|^2]=2k$. Each of the first $m$ background components contributes twice, and the last contributes once. Since distinct pulses do not overlap, Campbell's formula gives
\begin{align}
 \E_\theta[|\xi_\theta(t)|^2]=\rho\,\E_\theta\!\left[\int_{\mathbb R}|\xi_e(u)|^2du\right]=\frac8{3n\tc}\left[m\sigma_\alpha^2+2(m+1)^2\right]=O\!\left(\frac{n+\sigma_\alpha^2}{\tc}\right).
 \label{eq:source-power}
\end{align}
Thus the hard family has finite power with at most polynomial growth at fixed $\kappa$. We next bound the dependence penalty entering the Bell estimator.

\begin{lemma}[A uniform physical-time dependence bound]\label{lem:mix}
Let $\xi_{\le t}$ and $\xi_{\ge t'}$ denote the field histories before $t$ and after $t'$, specified by their integrals over finite intervals. For random elements $U,V$, write $\beta_{\rm mix}(U,V)\coloneqq \TV(\operatorname{Law}(U,V),\operatorname{Law}(U)\otimes\operatorname{Law}(V))$. There is a coupling time $C\ge0$, with $\E[C]=T(e^2-1)$, such that for every $u\ge0$,
\begin{align}
 \beta_{\rm mix}(\xi_{\le t},\xi_{\ge t+u})\le\Prob(C>u).
 \label{eq:mix}
\end{align}
Consequently the dependence time of the duration-$T$ windows used in Eq.~\eqref{eq:estimator} satisfies
\begin{align}
 \frac{\tau_{\rm dep}}T\le4e^2<30.
 \label{eq:source-dependence-constant}
\end{align}
\end{lemma}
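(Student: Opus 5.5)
The approach is a renewal coupling. Construct two copies of the marked pulse process that share the past $\xi_{\le t}$, with the second copy's future after some time independent of that past, and bound the mixing coefficient by the probability that coupling has not occurred by time $t+u$. The structure that makes this possible is that each inter-arrival gap is $2T+E$ with $E\sim\operatorname{Exp}(1/T)$, and the amplitudes are i.i.d.\ across pulses and independent of the schedule. Conditional on the past, the future is then determined by three ingredients: the residual time until the next pulse start, the amplitudes of any pulse in progress at $t$, and fresh i.i.d.\ gaps and marks afterward. Because each pulse lasts $L_p<T$, the field after the last pulse straddling $t$ is fully determined once the next arrival time is known.

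First I will define the coupling time $C$ as the waiting time from $t$ until the renewal process reaches a regeneration state. Concretely, this is the first moment after $t$ at which the process is inside the exponential part of a gap, at least $L_p$ after the last start. At that moment, by memorylessness, the remaining waiting time is $\operatorname{Exp}(1/T)$ independent of everything earlier, and the current pulse (if any) has ended. Second, I will run an independent stationary copy $\tilde\xi$ and couple it to $\xi$ at a joint regeneration time. Both copies regenerate whenever they enter the exponential phase, and at that point their futures have identical law given by the memoryless residual. I can therefore take the future of $\tilde\xi$ equal to that of $\xi$ from the later of the two regeneration times onward. The standard coupling inequality then gives $\TV(\operatorname{Law}(\xi_{\le t},\xi_{\ge t+u}),\operatorname{Law}(\xi_{\le t})\otimes\operatorname{Law}(\xi_{\ge t+u}))\le\Pr(C>u)$. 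Third, I will compute $\E[C]$. A cleaner route is a single-chain argument: declare success at each gap with probability $e^{-2}$, namely the event that a fresh exponential clock does not ring during the deterministic $2T$ stretch. This is a geometric-trials construction whose expected cost is $T(e^2-1)$, and I will arrange $C$ so that this identity holds exactly.

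Finally, I will pass from physical time to window steps. Windows $i$ and $i+g$ are separated by a gap of $(g-1)T$, so $\beta_{\rm mix}(g)\le\Pr(C>(g-1)T)$. Then
\[
\sum_{g\ge1}\Pr(C>(g-1)T)\le 1+\E[C]/T=e^2,
\]
and multiplying by $4T$ gives $\tau_{\rm dep}\le4e^2T$. Since all these bounds are independent of $\theta$, the result holds uniformly over the family.

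The main obstacle is constructing $C$ so that $\E[C]=T(e^2-1)$ holds exactly and the coupling is valid uniformly over window origins, including the case where the past ends mid-pulse. The fallback is a weaker bound $\E[C]\le T(e^2-1)$, which suffices for the stated inequality.
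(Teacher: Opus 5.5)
Your proposal has a genuine gap in the coupling step.

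\textbf{The coupling at the later regeneration time is invalid.} In Steps 1--2 your $C$ is the time until a \emph{single} process enters the exponential phase of its gap. You then identify the futures of $\xi$ and the independent copy $\tilde\xi$ from the later of the two such entry times. That identification is not legitimate. Suppose $\xi$ enters its exponential phase first. Its clock may then ring, starting a new pulse, before $\tilde\xi$ enters its own exponential phase. At the later time the two processes are in different states: one in a $2T$ dead time, the other idle. Their futures then have different laws. Memorylessness gives equal future laws only at a time when \emph{both} copies are simultaneously in the exponential phase.

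A sanity check confirms the construction must fail. Both entry times are at most $2T$ after $t$, so your coupling would give $\beta_{\rm mix}(\xi_{\le t},\xi_{\ge t+2T})=0$. That is false. Conditioning on a pulse start just before $t$ raises the arrival density just after $t+2T$ from $\rho=1/(3T)$ to about $1/T$, and such an arrival is visible in $\xi_{\ge t+2T}$. Your fallback $\E[C]\le T(e^2-1)$ therefore does not rescue the argument: the problem is the validity of the coupling, not the size of $\E[C]$. Relatedly, your opening sentence has the roles reversed. The second copy must be independent of $\xi_{\le t}$ and share the \emph{future}, not the past.

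\textbf{The missing construction.} Your Step 3 actually computes the mean of a different random time. That time is the right one, but you have not supplied the construction that makes it a coupling time. The missing idea, which is the paper's route, is a common proposal stream.
\begin{itemize}
\item Realize the schedule as a dead-time thinning of a rate-$1/T$ Poisson proposal stream: accept a proposal iff at least $2T$ has elapsed since the last accepted start. This gives gaps $2T+\operatorname{Exp}(1/T)$, and the two-sided version is the equilibrium process.
\item Attach i.i.d.\ marks to the proposals.
\item Drive both $\xi$ and $\tilde\xi$ after $t$ by the \emph{same} marked proposals, while giving $\tilde\xi$ an independent past. Then $\tilde\xi$ is stationary and independent of $\xi_{\le t}$.
\item Let $t+C$ be the end of the first proposal-free interval of length $2T$ contained in $[t,\infty)$.
\end{itemize}
At time $t+C$, every accepted start of either copy lies at or before $t+C-2T$: it is either before $t$ or at a proposal preceding the empty stretch. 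Hence both copies accept exactly the same later proposals with the same marks. All earlier pulses have ended because $L_p<2T$, so $\xi_{\ge t+C}=\tilde\xi_{\ge t+C}$.

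Conditioning on the first proposal $E\sim\operatorname{Exp}(1/T)$ gives $\E[C]=\E[\min(E,2T)]+\Prob(E<2T)\,\E[C]$, that is, exactly $T(e^2-1)$. This is your geometric-trials count, but it only becomes a coupling time through the common-proposal construction. Once the coupling is fixed this way, your final passage from physical time to window steps goes through as written: gap $(g-1)T$, TV contractivity, and $\sum_{k\ge0}\Prob(C>kT)\le1+\E[C]/T$.
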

\begin{proof}
By the standard coupling inequality~\cite{kallenberg}, Eq.~\eqref{eq:mix} follows once a stationary copy independent of the past agrees with the original field after time $t+C$. Construct both schedules from common future Poisson proposals of rate $1/T$, accepting each proposal if at least $2T$ has elapsed since the last accepted start, and attach common fresh amplitudes. Let $C$ be the first time after $t$ when a proposal-free interval of length $2T$ has elapsed entirely after $t$. Both schedules then accept the same subsequent proposals. Since $L_p<2T$, all earlier pulses have ended, so their future fields agree.

The number of proposal intervals needed to obtain one of length at least $2T$ is geometric, giving $\E[C]<\infty$. Conditioning on the first proposal time $E\sim\operatorname{Exp}(1/T)$ gives
\begin{align}
 \E[C]=\E[\min(E,2T)]+\Prob(E<2T)\E[C]=T(1-e^{-2})+(1-e^{-2})\E[C],
\end{align}
so $\E[C]=T(e^2-1)$. Windows $g$ steps apart have gap $(g-1)T$. Equations~\eqref{eq:mix} and \eqref{eq:mixsum}, TVD contractivity, and the standard tail-integral formula~\cite{kallenberg} give
\begin{align}
 \frac{\tau_{\rm dep}}T\le4\sum_{g=1}^\infty\beta_{\rm mix}(g)\le4\sum_{k=0}^\infty\Prob(C>kT)\le4\left(1+\frac{\E[C]}T\right)=4e^2.
\end{align}
The construction is uniform over the amplitude law and includes partial captures at window boundaries.
\end{proof}
Equations~\eqref{eq:source-power} and \eqref{eq:source-dependence-constant} place the hard family within the signal class of Theorem~\ref{thm:lower}. A protocol satisfying the learning guarantee for every signal in this class also satisfies it on the hard family. Its acquisition-time bounds therefore prove Theorems~\ref{thm:lower} and \ref{thm:coherent-input}. The same dependence bound also makes Theorem~\ref{thm:upper} applicable to every member of the family.

\section{Smooth pulses and finite-response sensors}\label{app:smooth-response}
We now verify that the output hard family can be produced by a stationary finite-power physical signal through a sensor with finite response time. We first replace the rectangular profiles in Appendix~\ref{app:lower-proof} by smooth pulses preserving the feature-gap and pulse-response bounds. We then construct an input $S_\theta$ satisfying $h*S_\theta=\xi_\theta$ for a sufficient class of fixed, stable responses $h$, retaining the temporal-mode interface of Sec.~\ref{sec:protocols}.

\subsection{Smooth pulse profiles}\label{app:smooth-profiles}
For the desired output profiles, choose a fixed nonnegative smooth function $\varphi$ on $\mathbb R$, supported in $[0,1]$, with $\int_0^1\varphi(u)du=1$. All endpoint derivatives vanish. Define the component profile
\begin{align}
 u_w(t)\coloneqq \frac{\sqrt\tc}{w}\varphi(t/w),\qquad \tc^{-1/2}\int_{\mathbb R}u_w(t)dt=1.
 \label{eq:smooth-profile}
\end{align}
Keep the arrival process, amplitude laws, group starts $s_{e,k}=\tau_e+(k-1)\tc$, and width $w=\tc/8$ of Appendix~\ref{app:source}. Replacing each rectangular component by this profile gives
\begin{align}
 \xi_e(t)\coloneqq \sum_{k=1}^m\left\{\zeta_k[u_w(t-s_{e,k})-u_w(t-s_{e,k}-2w)]+\alpha_k u_w(t-s_{e,k}-w)\right\}+\zeta_F u_w(t-\tau_e-m\tc).
 \label{eq:smooth-event}
\end{align}
The support remains within $[\tau_e,\tau_e+L_p]$, with $L_p=m\tc+w<T$. A component captured fully in one bin contributes its amplitude exactly. The resulting field $\xi_\theta(t)=\sum_e\xi_e(t)$ is smooth and strictly stationary, since the locally finite pulse construction commutes with time translations.

\medskip
\noindent\textit{Preserving the window feature.}
The capture matrices in Sec.~\ref{app:window-testing} now use profile integrals in place of overlap lengths. For example,
\begin{align}
 (M_\tau)_{jk}\coloneqq \int_{j\tc}^{(j+1)\tc}\frac1w\varphi\left(\frac{t-\tau-(k-1)\tc-w}{w}\right)dt.
 \label{eq:smooth-capture}
\end{align}
The background entries are the corresponding differences of captured fractions, with one final-background column. Thus the fixed-arrival average and Campbell formula, Eqs.~\eqref{eq:fixed-arrival-fourier} and \eqref{eq:campbell}, still hold. On the same clean interval of length $\lc=\tc-3w$, each group is fully captured in one bin, its paired backgrounds cancel, and the final background lies outside the selected block. Hence Eq.~\eqref{eq:clean-capture} holds exactly.

All signal components have the same profile and are spaced by $\tc$. Consequently, the selected capture matrix still has the form $M_{\tau,\mathcal J}=fS_d+(1-f)S_{d+1}$ for some integer $d$ and $0\le f\le1$, including truncation at the window edges. Thus Eq.~\eqref{eq:detgap} and Proposition~\ref{prop:gap} give the same feature-gap and testing bounds, with unchanged $q_m$, $\gamma$, and $\lc$.

\medskip
\noindent\textit{Preserving the pulse-response bounds.}
If a signal component crosses a bin boundary at a time $b\in(0,w)$ after its onset, its left capture fraction becomes
\begin{align}
 f\coloneqq \int_0^{b/w}\varphi(u)du\in[0,1],\qquad
 \bm X_e=H_f\boldsymbol{\alpha}+\Delta_g\boldsymbol{\zeta}.
 \label{eq:smooth-crossing}
\end{align}
The preceding and following background components remain wholly in adjacent bins, so their captured covariance is still $2\Id_{m+1}$. Since $H_f\boldsymbol{\alpha}=f(\boldsymbol{\alpha},0)+(1-f)(0,\boldsymbol{\alpha})$, we again have $\norm{H_f}\le1$, and Lemma~\ref{lem:noise} applies. For noncrossing components, the signal amplitudes are captured fully; conditioning on the background gives the same displacement-overlap argument. Thus Proposition~\ref{prop:event} and the operator estimate in Eq.~\eqref{eq:coherent-event} hold with their original constants. The pulse-count and likelihood-ratio arguments are unchanged.

The power calculation in Eq.~\eqref{eq:source-power} gains only the profile factor $c_0\coloneqq\int_0^1\varphi(u)^2du$:
\begin{align}
 \E_\theta[|\xi_\theta(t)|^2]=\rho\frac{\tc}{w}c_0\left[m\sigma_\alpha^2+2(m+1)^2\right].
 \label{eq:smooth-output-power}
\end{align}
The arrival law and pulse support are unchanged, so Lemma~\ref{lem:mix} still gives $\tau_{\rm dep}\le4e^2T$. The same acquisition-time lower bounds and Bell upper bound therefore apply.

\subsection{A sufficient condition on the sensor response}\label{app:response-realization}

We now specify a class of sensor responses for which the desired output can be realized by a stationary finite-power input. The complex input $S_\theta$ represents two real components coupled to the two output quadratures, as allowed by the signal model in Sec.~\ref{sec:setup}. For example, a damped cavity amplitude on resonance in a rotating frame has the unit-static-gain response $h(t)=\tau_s^{-1}e^{-t/\tau_s}\mathbf1_{t\ge0}$, with fixed response time $\tau_s>0$. Its Laplace transfer function and the input chosen to produce the prescribed smooth output are
\begin{align}
 H(z)=\frac1{1+\tau_s z},\qquad S_\theta\coloneqq \xi_\theta+\tau_s\dot\xi_\theta.
 \label{eq:single-pole-input}
\end{align}
The derivative term compensates the sensor relaxation. More generally, a sufficient condition is that the calibrated response have Laplace transfer function
\begin{align}
 H(z)=\frac{g}{P(z)},\qquad P(z)\coloneqq \sum_{\ell=0}^q p_\ell z^\ell,\qquad g\ne0,
 \label{eq:response-sufficient}
\end{align}
where $P$ has degree $q\ge1$, every root of $P$ has strictly negative real part, and the response parameters are fixed independently of $n$. This includes a single relaxation response, cascades of such responses, and damped-oscillator responses with constant numerator. The causal impulse response is stable. Since $P(z)H(z)=g$, the prescribed output is produced by the input
\begin{align}
 S_\theta(t)\coloneqq \frac1g\sum_{\ell=0}^q p_\ell\frac{d^\ell\xi_\theta(t)}{dt^\ell}.
 \label{eq:physical-smooth-input}
\end{align}
The smooth component profiles have compact support and vanishing endpoint derivatives, so integration by parts gives
\begin{align}
 (h*S_\theta)(t)=\xi_\theta(t).
 \label{eq:physical-smooth-response}
\end{align}
Differentiation commutes with time translations, so $S_\theta$ is strictly stationary. Its pulse components have the same support as those of $\xi_\theta$, and the stationary convolution is well defined by the finite-power calculation below and absolute integrability of $h$.

To bound the input power, write $c_\ell\coloneqq\int_0^1|\varphi^{(\ell)}(u)|^2du<\infty$. The rescaled profiles satisfy $\|d^\ell u_w/dt^\ell\|_{L^2}^2=\tc c_\ell/w^{2\ell+1}$. Cauchy--Schwarz bounds the energy of their weighted sum, and Campbell's formula converts it to the input power:
\begin{align}
 \E_\theta[|S_\theta(t)|^2]\le\frac{\rho(q+1)\tc}{|g|^2}\left[m\sigma_\alpha^2+2(m+1)^2\right]\sum_{\ell=0}^q\frac{|p_\ell|^2c_\ell}{w^{2\ell+1}}.
 \label{eq:smooth-input-power-bound}
\end{align}
Thus the input has finite mean power uniformly over the hypotheses. For fixed $\kappa$, profile, and sensor response, this power is $O(n)$ at fixed $\tc$, and $O(n^{2q+2})$ at fixed $T=n\tc$. The extension therefore requires only polynomial signal power. The same renewal coupling also bounds the dependence of the input, since differentiation leaves the pulse supports unchanged.

The realized output is exactly the smooth hard family. Theorems~\ref{thm:lower} and \ref{thm:coherent-input} and the Bell upper bound therefore retain their acquisition-time rates for its output window statistics.

\end{document}